\renewcommand\subparagraph{\@startsection{subparagraph}{5}{\parindent}{0.25ex \@plus1ex \@minus .2ex}{-1em}{\normalfont\normalsize\bfseries}}
\title{Competitive Kill-and-Restart and Preemptive Strategies\\
for Non-Clairvoyant Scheduling}
\author{%
 Sven~Jäger\footnote{RPTU Kaiserslautern-Landau, Paul-Ehrlich-Straße~14, 67663 Kaiserslautern, Germany. \newline \texttt{ sven.jaeger@rptu.de}}
 \and Guillaume~Sagnol\footnote{Institute for Mathematics, Technische Universität Berlin, Straße des 17. Juni 136, 10623 Berlin, Germany. \newline \texttt{ \{sagnol,dschmidt\}@math.tu-berlin.de}}~\footnotemark[4]
 \and Daniel~Schmidt~genannt~Waldschmidt\footnotemark[2]~\footnotemark[4]
 \and Philipp~Warode\footnote{Humbold-Universität Berlin, Unter den Linden~6, 10099 Berlin, Germany. \newline \texttt{philipp.warode@hu-berlin.de}}~\footnote{supported by the Deutsche Forschungsgemeinschaft (DFG, German Research Foundation) under Germany's Excellence Strategy --- The Berlin Mathematics Research Center MATH+ (EXC-2046/1, project ID: 390685689).}%
}
\newcommand{\colorpalette}{r}
\newif\ifhidecomments
\crefname{equation}{equation}{equations}
\pretocmd{\NAT@citexnum}{\@ifnum{\NAT@ctype>\z@}{\let\NAT@hyper@\relax}{}}{}{}
\algrenewcommand\algorithmicrequire{\textbf{Input:}}
\algrenewcommand\algorithmicensure{\textbf{Output:}}
\newtheorem{theorem}{Theorem}[section]
\newtheorem{lemma}[theorem]{Lemma}
\newtheorem{proposition}[theorem]{Proposition}
\newtheorem{claim}{Claim}
\definecolor{colorsven}{RGB}{119,182,186}
\definecolor{colorguillaume}{RGB}{82,17,214}
\newcommand{\E}{\mathbb{E}}
\newcommand{\R}{\mathbb{R}}
\newcommand{\N}{\mathbb{N}}
\newcommand{\Z}{\mathbb{Z}}
\renewcommand{\vec}[1]{\boldsymbol{#1}}
\newcommand{\opt}{\ensuremath{\mathrm{OPT}}}
\newcommand{\alg}[1][b]{\@ifstar{\ensuremath{\mathfrak{D}}}{\ensuremath{\mathfrak{D}_{#1}}}}
\newcommand{\ralg}{\@ifstar{\ensuremath{\mathfrak{R}}}{\ensuremath{\mathfrak{R}_b}}}
\newcommand{\wsetf}{\ensuremath{\mathrm{WSETF}}}
\newcommand{\pwspt}{\ensuremath{\mathrm{PWSPT}}}
\newcommand{\wspt}{\ensuremath{\mathrm{WSPT}}}
\newcommand{\fgipp}{\ensuremath{\mathrm{F\text{-}GIPP}}}
\newcommand{\rr}{\ensuremath{\mathrm{RR}}} 
\newcommand{\wrr}{\ensuremath{\mathrm{WRR}}} 
\newcommand{\threefield}[3]{%
\def\temp{#2}%
$#1 \, \vert 
\ifx\temp\empty{} \else \, #2 \, \fi
 \vert \, \sum #3$}
\newcommand{\jobrank}{q}
\newcommand{\tALG}{T'}
\newcommand{\tSETF}{T''}
\newcommand{\minrank}{\jobrank_{\min}}
\newcommand{\maxrank}{\jobrank_{\max}}
\newcommand{\infprobingjobs}{\widehat{J}}
\renewcommand{\epsilon}{\varepsilon}
\definecolor{mppetrol}{RGB}{0,114,125}
\definecolor{mpgreen}{RGB}{190,214,0}
\definecolor{mpgray}{RGB}{87,97,87}
\definecolor{tured}{cmyk}{.2,1,1,0}
\definecolor{tuyellow}{cmyk}{0,.5,1,0}
\definecolor{tugreen}{cmyk}{.85,.3,.6,.1}
\definecolor{tublue}{cmyk}{1,.35,.3,0}
\definecolor{tugray}{cmyk}{0,0,0,.7}
\colorlet{color1}{mppetrol}
\colorlet{color2}{color1!85}
\colorlet{color3}{color1!70}
\colorlet{color4}{color1!55}
\colorlet{color5}{color1!40}
\colorlet{color6}{color1!25}
\colorlet{contrast}{mpgreen}
\colorlet{mygray}{mpgray!20}
\colorlet{outlinecolor}{tured}
\colorlet{color1}{tublue}
\colorlet{color2}{color1!85}
\colorlet{color3}{color1!70}
\colorlet{color4}{color1!55}
\colorlet{color5}{color1!40}
\colorlet{color6}{color1!25}
\colorlet{contrast}{tured}
\colorlet{mygray}{tugray!20}
\colorlet{outlinecolor}{tuyellow}
\definecolor{strong1}{HTML}{DB001F}
\colorlet{color1}{strong1!50}
\definecolor{strong2}{HTML}{F68C21}
\colorlet{color2}{strong2!50}
\definecolor{strong3}{HTML}{FFED00}
\colorlet{color3}{strong3!50}
\definecolor{strong4}{HTML}{238224}
\colorlet{color4}{strong4!50}
\definecolor{strong5}{HTML}{3840FE}
\colorlet{color5}{strong5!50}
\definecolor{strong6}{HTML}{730087}
\colorlet{color6}{strong6!50}
\definecolor{contrastcolor}{HTML}{95540e}
\colorlet{contrast}{contrastcolor!50}
\colorlet{mygray}{gray}
\colorlet{outlinecolor}{magenta}
\def\NAT@spacechar{~}
\begin{document}

\maketitle
\begin{abstract}
We study kill-and-restart and preemptive strategies for the fundamental scheduling problem of minimizing the sum of weighted completion times on a single machine in the non-clairvoyant setting.
First, we show a lower bound of~$3$ for any deterministic non-clairvoyant kill-and-restart strategy. Then, we give for any $b > 1$ a tight analysis for the natural $b$-scaling kill-and-restart strategy as well as for a randomized variant of it. In particular, we show a competitive ratio of $(1+3\sqrt{3})\approx 6.197$ for the deterministic and of $\approx 3.032$ for the randomized strategy, by making use of the largest eigenvalue of a Toeplitz matrix. In addition, we show that the preemptive Weighted Shortest Elapsed Time First (WSETF) rule is $2$-competitive when jobs are released online, matching the lower bound for the unit weight case with trivial release dates for any non-clairvoyant algorithm.
Using this result as well as the competitiveness of round-robin for multiple machines, we prove performance guarantees smaller than $10$ for adaptions of the $b$-scaling strategy to online release dates and unweighted jobs on identical parallel machines.

\end{abstract}


\section{Introduction}\label{sec:intro}

Minimizing the total weighted completion time on a single processor is one of the most fundamental problems in the field of machine scheduling. The input consists of $n$ jobs with processing times $p_1,\ldots,p_n$ and weights~$w_1,\dotsc,w_n$, and the task is to sequence them in such a way that the sum of weighted completion times~$\sum_{j=1}^n w_j C_j$ is minimized. We denote this problem as \threefield{1}{}{w_jC_j}. \Citet{Smi56} showed in the 50's that the optimal schedule is obtained by the Weighted Shortest Processing Time first (\wspt{}) rule, i.e., jobs are sequenced in non-decreasing order of the ratio of their processing time and their weight.

Reality does not always provide all information beforehand. Around 30 years ago, the non-clairvoyant model, in which the processing time of any job becomes known only upon its completion, was introduced for several scheduling problems~\cite{FST94,SWW95,MPT94}.
It is easy to see that no non-preemptive non-clairvoyant algorithm can be constant-competitive for the unweighted variant~\threefield{1}{}{C_j}. In their seminal work, \citet*{MPT94} proved for this problem that allowing preemption breaks the non-constant barrier. Specifically, they showed that the natural round-robin algorithm is $2$-competitive, matching a lower bound for all non-clairvoyant algorithms. This opened up a new research direction, leading to constant-competitive preemptive non-clairvoyant algorithms in much more general settings, like weighted jobs~\cite{KC03}, multiple machines~\cite{BBEM12,IKM17,IKMP14}, precedence constraints~\cite{GGKS19}, and non-trivial release dates. When jobs are released over time, they are assumed to be unknown before their arrivals (online scheduling). No lower bound better than $2$ is known for this case, whereas the best known upper bound before this work was $3$, see e.g.~\cite{LM22}.

But there is a downside of the preemptive paradigm as it uses an unlimited number of interruptions at no cost and has a huge memory requirement to maintain the ability to resume all interrupted jobs.
Therefore, we continue by studying the natural class of \emph{kill-and-restart strategies} that---inspired by computer processes---can abort the execution of a job (kill), but when processed again later, the job has to be re-executed from the beginning (restart). It can be considered as an intermediate category of algorithms between preemptive and non-preemptive ones, as on one hand jobs may be interrupted, and on the other hand when jobs are completed, they have been processed as a whole. Hence, by removing all aborted executions one obtains a non-preemptive schedule. Although this class of algorithms has already been investigated since the 90's~\cite{SWW95}, to the best of our knowledge, 
the competitive ratio of non-clairvoyant kill-and-restart strategies for the total completion time objective has never been studied.

\paragraph{Our Contribution.}

We start by strengthening the preemptive lower bound of $2$ for the kill-and-restart model.
\begin{restatable*}{theorem}{DetLB}\label{thm:detLB}
	For \threefield{1}{}{C_j}, no deterministic non-clairvoyant kill-and-restart strategy can achieve a competitive ratio smaller than $3-\frac{2}{n+1}$ on instances with $n\geq 3$ jobs, even if every job~$j$ has processing time $p_j \ge 1$.
\end{restatable*}

The main part of this work is devoted to the $b$-scaling strategy~$\alg$ that repeatedly probes each unfinished job for the time of an integer power of $b>1$ multiplied by its weight. For \threefield{1}{}{w_jC_j} it is easy to see that $\alg[2]$ is $8$-competitive by comparing its schedule to the weighted round-robin schedule for a modified instance and using the $2$-competitiveness due to \citet{KC03}. Using a novel and involved analysis we determine the exact competitive ratio of $\alg$.

\begin{restatable*}{theorem}{DetALGUB}\label{thm:upperbound}
	For $b>1$, $\alg$ is $\bigl( 1+\frac{2b^{\nicefrac{3}{2}}}{b-1} \bigr)$-competitive for \threefield{1}{}{w_j C_j}. This ratio is minimized for $b=3$, yielding a performance guarantee of $1+3\sqrt{3}\approx 6.196$.
\end{restatable*} 

\begin{restatable*}{theorem}{DetALGLB}\label{thm:ALG_LB}
	For every $b > 1$, there exists a sequence of instances $(\vec{p}_L)_{L \in \N}$ for \threefield{1}{}{C_j} such that 
	\[
	\lim_{L \to \infty} \frac{\alg (\vec{p}_L)}{\opt(\vec{p}_L)} = 1+\frac{2b^{\frac{3}{2}}}{b-1}
	.
	\]
\end{restatable*}

Our main technique is to reduce the problem of finding the competitive ratio of $\alg$ to the computation of the largest eigenvalue of a tridiagonal Toeplitz matrix. Subsequently, we obtain a significantly better exact competitive ratio for a randomized version of the $b$-scaling strategy, denoted by $\ralg$, that permutes the jobs uniformly at random and chooses a random offset drawn from a log-uniform distribution.

\begin{restatable*}{theorem}{RandALGUB}\label{thm:Rand_ALG_UB}
	For every $b>1$, $\ralg$ is $\frac{2b+\sqrt{b}-1}{\sqrt{b} \ln b}$-competitive for \threefield{1}{}{w_j C_j}. This ratio is minimized for $b\approx 8.16$, yielding a performance guarantee smaller than $3.032$.
\end{restatable*}

\begin{restatable*}{theorem}{TheoLBRand}
For all $b>1$, there exists a sequence of instances $(\vec{p}_L)_{L\in\mathbb{N}}$ for \threefield{1}{}{C_j} such that \[\lim_{L \to \infty} \frac{\ralg(\vec{p}_L)}{\opt(\vec{p}_L)} = \frac{\sqrt{b}+2b-1}{\sqrt{b}\ln b}.\]
\end{restatable*}

The analysis basically mimics that of the deterministic strategy, but it is necessary to group the jobs whose Smith ratio falls in the $i$th interval of the form $(b^{i/K},b^{(i+1)/K}]$, where $K$ is a large natural number. This approach leads to the computation of the largest eigenvalue of a banded symmetric Toeplitz matrix of bandwidth~$2K-1$, and the result is obtained by letting $K\to\infty$.

We then study more general scheduling environments. For the online problem, in which jobs are released over time, denoted by \threefield{1}{r_j,\,\mathrm{pmtn}}{w_j C_j}, we close the gap for the best competitive ratio of preemptive algorithms by analyzing the Weighted Shortest Elapsed Time First rule, short \wsetf. This policy runs at every point in time the job(s) with minimum ratio of the processing time experienced so far (elapsed time) over the weight.

\begin{restatable*}{theorem}{WSETF}\label{thm:WSETF}
 \wsetf{} is $2$-competitive for \threefield{1}{r_j,\,\mathrm{pmtn}}{w_j C_j}.
\end{restatable*} 

\Cref{thm:WSETF} generalizes the known $2$-competitiveness for trivial release dates shown by \citet{KC03}. It also matches the performance guarantee of the best known stochastic online scheduling policy \fgipp~\cite{MV14}, a generalization of the Gittins index priority policy~\cite{Sev74,Wei95}, for the stochastic variant of our problem where the probability distributions of the processing times are given at the release dates and the expected objective value is to be minimized. Our improvement upon the analysis of this policy, applied to a single machine, is threefold: First, our strategy does not require any information about the distributions of the processing times, second, we compare to the clairvoyant optimum, while \fgipp{} is compared to the optimal non-anticipatory policy, and third, \wsetf{} is more intuitive and easier to implement in applications than the \fgipp{} policy.

Using \cref{thm:WSETF}, we then give an upper bound on the competitive ratio of a generalized version of $\alg$ that also handles jobs arriving over time by never interrupting a probing.

\begin{restatable*}{theorem}{TrivialBoundReleaseDates}\label{thm:trivial_bound_release_dates}
	$\alg$ is $\frac{2b^4}{2b^2-3b+1}$-competitive for \threefield{1}{r_j}{w_jC_j}. This ratio is minimized for $b=\frac{9+\sqrt{17}}{8}$, yielding a performance guarantee of $\frac{107+51\sqrt{17}}{32} \approx 9.915$.
\end{restatable*} 

Finally, we also analyze the unweighted problem~\threefield{\mathrm P}{}{C_j} on multiple identical parallel machines.

\begin{restatable*}{theorem}{TrivialBoundParallel}\label{thm:trivial_bound_parallel}
	$\alg$ is $\frac{3b^2-b}{b-1}$-competitive for \threefield{\mathrm P}{}{C_j}. This ratio is minimized for $b=\frac{3+\sqrt{6}}{3}$, yielding a performance guarantee of $5+2\sqrt{6} \approx 9.899$.
\end{restatable*}

\paragraph{Related Work.}
\subparagraph{Non-preemptive scheduling.}
The beginnings of the field of machine scheduling date back to the work of \citet{Smi56}, who investigated the problem of non-preemptively minimizing the sum of weighted completion times on a single machine. Its optimal schedule is obtained by sequencing the jobs in non-decreasing order of their processing time to weight ratio $\nicefrac{p_j}{w_j}$ (Smith's rule). When all jobs have unit weights, one obtains the Shortest Processing Time first (SPT) rule. This can be generalized to the identical parallel machine setting, where list scheduling~\cite{Gra66} according to SPT is optimal~\cite{CMM67} for unit-weight jobs.
However, the problem of scheduling jobs released over time on a single machine is strongly NP-hard~\cite{Rin76} (even for unit weights), and Chekuri and Khanna developed a polynomial-time approximation scheme (PTAS) for it~\cite{ABC+99}. When jobs arrive online, no deterministic algorithm can be better than $2$-competitive~\cite{HV96}, and this ratio is achieved by a delayed variant of Smith's rule~\cite{AP04}.

In the non-clairvoyant setting it is well known that no (randomized) non-preemptive algorithm is constant-competitive (see \cref{prop:non-clairvoyant-non-constant}). A less pessimistic model is the stochastic model, where the distributions of the random processing times~$P_j$ are given and one is interested in non-anticipatory (measurable) policies~\cite{MRW84}. For some classes of distributions, this information allows obtaining constant expected competitive ratios~\cite{SSS06} for parallel identical machines. However, policies minimizing the expected competitive ratio do not need to minimize the expected objective value---the classic measure in stochastic optimization. For this criterion, \citet{Rot66} showed that for the single machine case the optimality of Smith's rule can be transferred to the Weighted Shortest Expected Processing Time rule, in which jobs are sorted in non-decreasing order of $\nicefrac{\E[P_j]}{w_j}$. In order to deal with the stochastic counterparts of the NP-hard problems mentioned above, \citet{MSU99} introduced approximative scheduling policies, whose expected objective value is compared to the expected objective value of an optimal non-anticipatory policy. While there are constant-competitive policies for stochastic online scheduling on a single machine~\cite{Jaeg21}, the performance guarantees of all known approximative policies for multiple machines depend on either the maximum coefficient of variation~\cite{JS18} or the number of jobs and machines~\cite{IMP15}, even for unit-weight jobs released at time~$0$.

\subparagraph{Preemptive scheduling.}
For the clairvoyant offline model, allowing preemption only helps in the presence of non-trivial release dates~\cite{McN59}. In this case, the optimal preemptive schedule may be a factor of $\mathrm e/(\mathrm e - 1)$ better than the best non-preemptive one~\cite{EL16}. Finding an optimal preemptive schedule is still strongly NP-hard~\cite{LLLRK84}, and there is a PTAS adapted to this problem~\cite{ABC+99}. For jobs arriving online \citet{Sit10} developed a $1.566$-competitive deterministic algorithm, and \citet{EvS03} proved a lower bound of $1.073$.

When the job lengths are uncertain, allowing preemption becomes much more crucial. \Citet{MPT94} showed that the simple (non-clairvoyant) round-robin procedure has a competitive ratio of $2$ for minimizing the total completion time on identical machines. This gives the same share of machine time to each job in rounds of infinitesimally small time slices. For weighted jobs, the Weighted Round-Robin (\wrr) rule (also known as generalized processor sharing (GPS)~\cite{parekh1992,chandra2000}), which always distributes the available machine capacity to the jobs proportionally to their weights, was shown to be $2$-competitive on a single machine by \citet{KC03}, and the same competitive ratio is achieved by a generalization for multiple identical machines~\cite{BBEM12}. Similar time sharing algorithms were also developed in the context of non-clairvoyant online scheduling, where jobs arrive over time and are not known before their release dates. Here one can distinguish between minimizing the total (weighted) completion time and the total (weighted) flow time.
The \wrr{} rule can be generalized in two natural way in this setting: Either the machine capacity is still allocated based only on the weights or based on the weighted elapsed times, resulting in the \wsetf{} rule, mentioned above. It is easy to see that both are $3$-competitive, see e.g.~\cite{LM22}. On the other hand, there exist examples showing that the first option is not $2$-competitive for total weighted completion time. For the total weighted flow time objective constant competitiveness is unattainable~\cite{MPT94}. Apart from work on non-constant competitive ratios~\cite{BL04}, the problem has been primarily studied in the resource augmentation model~\cite{KP00}, where the machine used by the algorithm runs $1+\varepsilon$ times faster.
\Citeauthor{KC03} and \citet{BD07} independently proved that \wsetf{} is $(1+\varepsilon)$-speed $(1+\nicefrac 1 \varepsilon)$-competitive for weighted flow time on a single machine. By running this algorithm on the original-speed machine, the completion times increase by a factor of $1+\nicefrac 1 \varepsilon$, so that one obtains a $(1+\varepsilon)(1+\nicefrac 1 \varepsilon)$-competitive algorithm for the total weighted completion time~\cite{BP04}, which yields a ratio of $4$ for $\varepsilon = 1$. The proofs of \citeauthor{KC03} and \citeauthor{BD07} both proceed by showing that at any time~$t \ge 0$ the total weight of unfinished jobs in the \wsetf{} schedule is at most a factor of $(1+\nicefrac 1 \varepsilon)$ larger than the unfinished weight in the optimal schedule. The lower-bound example of \citeauthor{MPT94} (many equal small jobs released at time~$0$) demonstrates that with such an approach no better bound than $4$ is achievable.
Consequently, a completely different technique is needed to prove \cref{thm:WSETF}. For the much more general setting of unrelated machines \citet*{IKMP14} established a $(1+\varepsilon)$-speed $\mathcal{O}(\nicefrac{1}{\varepsilon^2})$-competitive algorithm.
\Citeauthor{MPT94} also considered the model in which the number of allowed preemptions is limited, for which they devised algorithms that resemble the kill-and-restart algorithms presented in this paper. As mentioned above, for the stochastic model for minimizing the expected total weighted completion time, the Gittins index policy is optimal for single-machine with trivial release dates~\cite{Sev74,Wei95}, and \citet{MV14} established a $2$-competitive online policy for multiple machines and arbitrary release dates.

\subparagraph{Kill-and-restart scheduling.}
The kill-and-restart model was introduced by \citet{SWW95} in the context of makespan minimization. For the total completion time objective we are not aware of any work on kill-and-restart strategies in the non-clairvoyant model. However, in the clairvoyant online model, kill-and-restart algorithms have been considered by \citet{Ves97} and \citet{EvS03}, who gave lower bounds that are larger than the lower bounds for preemptive algorithms but much smaller than the known lower bounds for non-preemptive online algorithms, suggesting that allowing restarts may help in the online model. The proof of this fact was given several years later by \citet{vSLP05}, who achieved a deterministic competitive ratio of $3/2$ for minimizing the total completion time on a single machine, beating even the lower bound of $\mathrm e/(\mathrm e - 1) \approx 1.582$ for any randomized non-preemptive online algorithm~\cite{CMNS01}. In the non-clairvoyant setting, considered in this work, we observe a much larger benefit from allowing restarts, reducing the competitive ratio from $\Omega(n)$ to a constant.

\subparagraph{Further related work.}
In the end, all aborted probings served only the purpose of collecting information about the unknown processing times of the jobs. Kill-and-restart strategies can thus be regarded as online algorithms for non-preemptive scheduling with the possibility to invest time in order to obtain some information. In that sense, the considered model resembles that of explorable uncertainty~\cite{DEMM20,AE20,GL21}. In order to allow for any reasonable competitiveness results, it must be ensured in both models that testing/probing provides some benefit to the algorithm other than information gain. In the explorable uncertainty model, this is achieved by the assumption that testing can shorten the actual processing times, while in our model the probing time replaces the processing time if the probing was long enough.

Scheduling on a single machine under the kill-and-restart model shares many similarities with \emph{optimal search problems}, in which a number of agents are placed in some environment and must either find some target or meet each other as quickly as possible. A problem that received a lot of attention is the so-called $w$-lanes \emph{cow-path} problem, in which
an agent (the cow) is initially placed at the crossing of $w$ roads, and must find
a goal (a grazing field) located at some unknown distance on one the $w$ roads.
For the case $w=2$, deterministic and randomized search strategies 
were given that
achieve the optimal competitive
ratio of $9$~\cite{Baeza-YatesCR93} and approximately $4.5911$~\cite{KRT96}, respectively. This work has been extended by
\citet{KMSY98}, who give optimal deterministic and randomized algorithms for all $w\in\mathbb{N}$. The single-machine scheduling problem with kill-and-restart strategies can in fact be viewed in this framework: There are now $n=w$ goals, and the $j$th goal is located at some unknown distance $p_j$ on the $j$th road. The agent can move at unit speed on any of the roads, and has the ability to \emph{teleport back to the origin} at any point in time, which represents the action of aborting a job. The objective is to minimize the sum of times at which each goal is found.

\section{Preliminaries}\label{sec:preliminaries}

We consider the machine scheduling problem of minimizing the weighted sum of completion times on a single machine (\threefield{1}{}{w_j C_j}).
Formally, we consider instances $I = (\vec{p},\vec{w})$ consisting of a vector of processing times $\vec{p} = (p_j)_{j=1}^n$ and
a vector of weights $\vec{w} = (w_j)_{j=1}^n$.

If the jobs are in WSPT order, i.e., jobs are ordered increasingly by their Smith ratios $p_j/w_j$, then it is easy to see that sequencing the jobs in this ordering yields an optimal schedule.
We denote this (clairvoyant) schedule by $\opt(I)$. By slight abuse of notation, we also denote the objective value of an optimal schedule by $\opt(I)$. In particular, its cost is
$
	\opt(I) = \sum_{j=1}^n w_j\sum_{k=1}^j p_{k} = \sum_{j=1}^n p_{j}\sum_{k=j}^n w_j
$. 

The focus of our work lies in the analysis of non-clairvoyant strategies.
We call a strategy \emph{non-clairvoyant} if it does not use information on the processing time $p_j$ of a job~$j$ before $j$ has been completed.
A deterministic strategy $\alg*$ is said to be \emph{$c$-competitive} if, for all instances $I = (\vec{p}, \vec{w})$, $\alg* (I) \leq c \cdot \opt(I)$, where $\alg* (I)$ denotes the cost of the strategy for instance $I$. The \emph{competitive ratio} of $\alg*$ is defined as the infimum over all~$c$ such that $\alg*$ is $c$-competitive. 
For a randomized strategy \ralg*, the cost for instance $I$ is a random variable $X_I\colon \Omega\to\R_{\geq 0}$
that associates an outcome $\omega$ of the strategy's sample space to the realized cost, and we denote by $\ralg*(I)\coloneqq \E[X_I]$ the expected cost of the randomized strategy for instance $I$.
We say that $\ralg*$ is
\emph{$c$-competitive} if for all instances $I = (\vec{p}, \vec{w})$, $\ralg*(I)\leq c \cdot \opt(I)$.
It is well known that for our problem no non-preemptive strategy can achieve a constant competitive ratio. 

\begin{proposition} \label{prop:non-clairvoyant-non-constant}
	No randomized non-preemptive non-clairvoyant strategy has a constant competitive ratio for \threefield{1}{}{C_j}.
\end{proposition}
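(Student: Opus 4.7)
The plan is to apply Yao's minimax principle. I will exhibit, for every $n\in\N$ and small $\epsilon>0$, a probability distribution $\mathcal{D}_n^\epsilon$ over instances such that every deterministic non-preemptive non-clairvoyant strategy has expected cost at least $\approx\tfrac{n+1}{2}\cdot \opt$. Since this forces the competitive ratio to grow with $n$, no randomized strategy can be constant-competitive.

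The hard instance is the classical ``needle in a haystack'' construction: take $n$ unit-weight jobs, pick an index $J\in\{1,\dotsc,n\}$ uniformly at random, let $p_J=1$, and let $p_j=\epsilon$ for all $j\neq J$. The optimal schedule processes the long job last, so
\[
\opt(I) = \sum_{i=1}^{n-1} i\epsilon + \bigl((n-1)\epsilon + 1\bigr) = \tfrac{n(n-1)}{2}\epsilon + 1 \xrightarrow{\epsilon\to 0} 1.
\]
Next I would argue that against $\mathcal{D}_n^\epsilon$ any deterministic non-preemptive non-clairvoyant strategy is, up to symmetry, just a rule that picks successive jobs to run to completion. Before any job finishes, the jobs are exchangeable from the algorithm's viewpoint; while the algorithm encounters only short jobs, its information state is invariant under relabelling of the remaining jobs, so its choice of the next job is essentially arbitrary. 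Consequently, the (random) position $K$ at which the long job is first picked is uniformly distributed on $\{1,\dotsc,n\}$.

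Conditioning on $K=k$ and letting $\epsilon\to 0$, the short jobs scheduled before the long one contribute negligibly, the long job completes at time $\to 1$, and each of the $n-k$ remaining short jobs also has completion time $\to 1$. Hence the expected cost of the strategy satisfies
\[
\E\bigl[A(I)\bigr] \xrightarrow{\epsilon\to 0} \frac{1}{n}\sum_{k=1}^n \bigl(1+(n-k)\bigr) = \frac{n+1}{2},
\]
while $\opt(I)\to 1$, giving an expected ratio of $\tfrac{n+1}{2}$. By Yao's principle, the worst-case expected ratio of any randomized non-preemptive non-clairvoyant strategy on instances with $n$ jobs is at least $\tfrac{n+1}{2}$, which tends to infinity with $n$.

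The main obstacle is the exchangeability argument for the random position of the long job: a non-clairvoyant strategy does receive feedback (the completion time of each finished job), so it is not a priori a fixed permutation. The key observation, which I would make precise by a symmetry/coupling argument on the algorithm's decision tree, is that as long as only short jobs have completed, the algorithm's posterior over the identity of the long job among the remaining jobs is uniform; hence the index of the step at which it starts the long job is uniform on $\{1,\dotsc,n\}$ regardless of its internal strategy. Everything else is an elementary computation.
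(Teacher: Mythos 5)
Your proof is correct and follows essentially the same route as the paper: Yao's principle applied to a ``one long job hidden among identical short jobs'' distribution, with the exchangeability/symmetry argument ensuring the long job is started in a uniformly random position, yielding an $\Omega(n)$ lower bound. The paper uses short jobs of length $1$ and a long job of length $n^2$ instead of your $\epsilon$/$1$ scaling, but this is only a reparametrization of the same construction.
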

\begin{proof}
	By Yao's principle~\cite{Yao77} it suffices to construct a randomized instance for which any deterministic strategy has non-constant competitive ratio. To this end, we consider the instance with $n$ jobs where $p_1=\cdots=p_{n-1}=1$ and $p_n=n^2$ and randomize uniformly over all permutations of the jobs. Clearly, an optimal clairvoyant strategy sequences the jobs in any realization in SPT order and hence, we have
	$
	\opt=\sum_{j=1}^n (n-j+1)p_j=\frac{1}{2}n(n-1)+n-1+n^2=\mathcal{O}(n^2)
	.
	$

	The schedule of any deterministic strategy can be represented as a permutation of the jobs as idling only increases the objective value. Hence, for any permutation we obtain the expected cost
	\[
	\sum_{\sigma}\frac{1}{n!}\sum_{j=1}^{n}(n-j+1)p_{\sigma(j)}=\sum_{k=1}^n\frac{1}{n}\left(\sum_{j=1}^{k-1}j +(n-k+1)n^2 + \sum_{j=k+1}^{n}j \right)\geq \frac{1}{n}\cdot\frac{(n-1)n^3}{2}=\Omega(n^3),
	\]
	where we used the fact that in a uniformly distributed permutation, the probability that the long job appears in each position $k\in[n]$ is $\frac{1}{n}$.
\end{proof}

\newcommand{\statespace}{\mathcal{S}}
\newcommand{\actionspace}{\mathcal{A}}
\newcommand{\actionspaceS}[1][s]{\mathcal{A}(#1)}

\paragraph{Kill-and-Restart Strategies.} 
Due to this negative result, we study non-clairvoyant \emph{kill-and-restart strategies} for \threefield{1}{}{w_j C_j} that may abort the processing of a job, but when it is processed again later, it has to be executed from the beginning. 
In order to define such strategies, we first introduce a state and action space as well as a transition function modeling the kill-and-restart setting. Then, we can describe kill-and-restart strategies as functions mapping states to actions.

Formally, we consider the \emph{state space} $\statespace \coloneqq \R \times 2^{[n]} \times \R^n$. 
A state~$(\theta, U, \vec \mu) \in \statespace$ consists of the current time $\theta$, the set of unfinished jobs $U$ at~$\theta$, and a vector $\vec\mu$ of lower bounds on the processing times learned from past probings, such that $p_j\geq \mu_j$ for all jobs~$j$.
For every state $s = (\theta, U, \vec \mu) \in \statespace$, there is a set of possible kill-and-restart actions $\actionspaceS$, where an action $a = \bigl((t_i,j_i,\tau_i)\bigr)_{i \in \mathcal{I}} \in \actionspaceS$ is a family of probings $(t_i,j_i,\tau_i)$ such that the intervals $ (t_i, t_i+\tau_i)$, $i \in \mathcal{I}$, are disjoint and contained in $\R_{> \theta}$ and $j_i \in U$ for all $i \in \mathcal I$.
We denote by $\actionspace = \bigcup_{s \in \statespace} \actionspaceS$ the set of all actions in all states. 
Additionally, we define a \emph{transition function} $T_I \colon \statespace \times \actionspace \to  \statespace$ depending on the instance~$I$. This function transforms any state $s = (\theta, U, \vec \mu)$ and action $a = \big((t_i, j_i, \tau_i)\big)_{i \in \mathcal{I}} \in \mathcal A(s)$  into a new state $s' = (\theta', U', \vec \mu')$ as follows.  First, we identify the probing indexed with $i^* \coloneqq \operatorname{argmin} \big\{ t_i + p_{j_i} \mid i \in \mathcal{I} \text{ with } \tau_i \geq p_{j_i} \big\}$, which corresponds to the first probing in $a$ that leads to the completion of some job.
Then, the lower bounds $\vec{\mu}'$ of the new state~$s'$ are defined by $\mu_j' \coloneqq \max \big\{ \mu_j , \max \{ \min \{ \tau_i, p_j \} \mid i \in \mathcal{I} : t_i \leq t_{i^*}, j_i = j \} \big\}$, i.e., the lower bounds are set to the maximum probing time a job received so far or, if a job is completed, to its processing time. Further, the job completing in probing $i^*$ is removed from the set of unfinished jobs by setting $U'\coloneqq U\setminus \{j_{i^*}\}$, and the time is updated to $\theta' \coloneqq t_{i^*} + p_{j_{i^*}}$.
Finally, we define a \emph{kill-and-restart strategy} as a function $\Pi \colon \statespace \to \actionspace$ with $\Pi(s) \in \actionspaceS$ for all $s \in \statespace$. 
Note that a kill-and-restart strategy is non-clairvoyant by definition as it only has access to the lower bounds on the processing times, while the actual processing time is only revealed to the strategy upon completion of a job.

However, observe that such strategies may not be implementable, e.g., on a Turing machine,
as the above definition allows for an infinite number of probings in a bounded time range. 
On the other hand, a deterministic kill-and-restart strategy without infinitesimal probing cannot be constant-competitive.
To see this, consider an arbitrary algorithm $ALG$, and assume without loss of generality that the first job it probes is the first job presented in the input. Denote by $t>0$ the first probing time, and consider the instance $I_\epsilon=\big((t, \epsilon t,\ldots,\epsilon t),(1,\ldots,1)\big)$ with $n$ unit weight jobs. By construction, $ALG$ processes the first job without aborting it, so $ALG\geq nt + n(n-1)/2\cdot  t \epsilon = nt+\mathcal{O}(\epsilon)$. On the other hand $\opt$ schedules the job in SPT order, yielding $\opt=n(n-1)/2\cdot t \epsilon + (n-1)t\epsilon+t=t+\mathcal{O}(\epsilon)$. Hence, $\alg/\opt$ approaches $n$ as $\epsilon\to 0$.
This subtlety is in fact inherent to all scheduling problems with unknown processing times or 
search problems with unknown distances.

We discuss in Section~\ref{sec:upperbound} that no deterministic kill-and-restart strategy can be constant-competitive without infinitesimal probing, as there is no lower bound on the processing times at time $0$.
On the other hand, infinitesimal probing can be avoided 
if we know a lower bound on the $p_j$'s, thus turning the strategies analyzed in this paper into implementable ones.

We denote by $Y_j^{\alg*}(I, t)$ the total time for which the machine has been busy processing job~$j$ until time~$t$ in the schedule constructed by the strategy~$\alg*$ on the instance~$I$.

\section{Lower Bound for Deterministic Strategies}\label{sec:lowerbound}

\DetLB

\begin{proof}
	Let $\epsilon \in \bigl(\frac{2}{n+1},1\bigr]$ and define $T \coloneqq \frac{(2-\epsilon)(n^2+n)}{\epsilon(n+1)-2}$.
	Consider an arbitrary deterministic kill-and-restart strategy~$\alg*$ with the initially chosen family of probings $(t_i,j_i,\tau_i)_{i\in\mathcal{I}}$. Let $Y_j(\theta)\coloneqq \sum_{i\in\mathcal{I}: t_i< \theta, j_{i}=j } \min\{\tau_i,\theta-t_i\}$ be the total probing time assigned by $\alg*$ to job~$j$ up to time $\theta.$ We define an instance $I\coloneqq(\vec{p},\vec{1})$ by distinguishing two cases on the first job~$j_0$ planned to be probed at or after time~$T$. Note that such a job exists, as otherwise $\alg*$ does not complete all jobs if processing times are long enough.
	
	If $j_0$ is probed for a finite amount of time, we denote by $t\geq T$ the end of its probing time. Then, define $p_j:=1+Y_j(t)$ for all $j\in[n]$. Clearly, no job finishes before $t$ when $\alg*$ runs the instance $I$, hence
	$\alg*(I) \geq nt + \opt(I).$ 
	On the other hand, it is well known that $\opt(I) \le \frac{n+1}{2} \cdot \sum_{j=1}^n p_j$, which is the expected objective value when scheduling non-preemptively in a random order, thus,
	$\opt \leq \frac{n+1}{2} (t+n)$.
	Therefore, we have
	$
	\frac{\alg*(I)}{\opt(I)} \geq 1+\frac{2nt}{(t+n)(n+1)}
	\geq 1+\frac{2nT}{(T+n)(n+1)} =3-\epsilon.
	$
	
	If $j_0$ is probed for $\tau=\infty$, i.e., it is processed non-preemptively until its completion, then for each job $j\neq j_0$ we set $p_j \coloneqq 1+Y_j(T)$. Denote by $\opt'$ the optimal $\mathrm{SPT}$ cost for  jobs $[n] \setminus \{ j_{0} \}$, and set $p_{j_0}:= 10\cdot \opt'$. As $j_0$ is the first job to complete in $I$, we clearly have
	$\alg*(I) \geq n\cdot p_{j_0}=  10n \cdot \opt'$. On the other hand, $\opt$ processes $j_0$ last, so $\opt(I)=\opt' + \sum_{j\neq j_0} p_j + p_{j_0} \leq (1+1+10) \cdot \opt'$. This implies $\frac{\alg*(I)}{\opt(I)}\geq \frac{10n}{12}\geq 3-\frac{2}{n+1}$,
	where the last inequality holds for all $n\geq 3$. 
\end{proof}

\section{\texorpdfstring{The $b$-Scaling Strategy}{The b-Scaling Strategy}}\label{sec:upperbound}

Let us now introduce the $b$-scaling algorithm $\alg$, which is the basis for most results in this paper.
The idea of this algorithm is simple and quite natural: 
it proceeds by rounds $\jobrank \in \Z$. In round~$\jobrank$ every non-completed job is probed (once) for $w_j b^\jobrank$ in some prescribed order, where $b>1$ is a constant. To execute $\alg$, we can store for each job its \emph{rank} at time $t$, i.e., the largest $\jobrank$ such that it was probed for $w_jb^{\jobrank -1}$ until $t$. At any end of a probing, $\alg$ schedules the job~$j$ with minimum rank and minimum index for time $w_jb^q$. 

We also introduce a randomized variant of the algorithm. Randomization occurs in two places: First the jobs are reordered according to a random permutation $\Sigma$ at the beginning of the algorithm. Second, we replace the probing time $w_j b^\jobrank$ of the $\jobrank$th round with $w_j b^{\jobrank+\Xi}$ for some random offset $\Xi\in[0,1]$.
\Cref{alg:ALGb_pi_xi} gives the pseudo-code
of this strategy when it starts from round $\jobrank_{0}\in\mathbb{Z}$, in which case it is denoted by $\alg^{\sigma,\xi,\jobrank_{0}}$. The kill-and-restart strategy $\alg^{\sigma,\xi}$ studied in this paper
can actually be seen as the limit of $\alg^{\sigma,\xi,\jobrank_{0}}$ when $\jobrank_{0}\to-\infty$, and is described formally below.
The deterministic $b$-scaling algorithm~$\alg$ is obtained by setting $\sigma=\operatorname{id}$ (the identity permutation) and $\xi=0$, while the randomized variant $\ralg$ is obtained for a permutation $\Sigma$ drawn uniformly at random from $\mathcal{S}_n$ and a random uniform offset $\Xi\sim \mathcal{U}([0,1])$, i.e., 
\[
\alg \coloneqq \alg^{\operatorname{id},0}\qquad \text{ and } \qquad
\ralg \coloneqq \alg^{\Sigma,\Xi}.
\]
As for $\opt$, by slight abuse of notation, we denote by $\alg(I)$ and $\ralg(I)$ the schedule for instance~$I$ computed by $\alg$ and $\ralg$, respectively, as well as its cost. We drop the dependence on~$I$ whenever the instance is clear from context.
\begin{algorithm}[t] 
    \caption{$\alg^{\sigma,\xi,\jobrank_{0}}$: $b$-Scaling algorithm with permutation $\sigma\in\mathcal{S}_n$ and offset $\xi\in[0,1]$, when starting from round $\jobrank_{0}\in\mathbb{Z}$}
	\label{alg:ALGb_pi_xi}
	\begin{algorithmic}[1]
		\Require{$I=(\vec{p},\vec{w})$}
		\Ensure{kill-and-restart schedule}
		\State Set $q\gets \jobrank_{0}$ 
		\State Initialize the list of unfinished jobs permuted according to $\sigma$:
		\mbox{$U\gets \left[\sigma^{-1}(1),\ldots,\sigma^{-1}(n)\right]$}
		\While{$U\neq\emptyset$}
		\For{$j$ in $U$}
		\State probe $j$ for $w_j b^{q+\xi}$
		\If{$j$ is completed} \Comment{{\scriptsize this happens if $p_j \leq w_j b^{q+\xi}$}}
            \State $U\gets U\setminus\{j\}$
		\EndIf \Comment{{\scriptsize Otherwise the probing fails and the job is killed}}
		\EndFor
        \State $q \gets q+1$
		\EndWhile
	\end{algorithmic}
\end{algorithm}
While $\alg^{\text{id},0,\jobrank_{0}}$ can easily be implemented, it is not possible to implement the limit strategy~$\alg$, for example, on a Turing machine, since at a time arbitrarily close to $0$ it has probed each jobs an infinite number of times.

Let us now formally define the strategy $\alg^{\sigma,\xi}$, by describing the action
$a(s)=\big((t_i,j_i,\tau_i)\big)_{i\in\mathcal{I}}\in\mathcal{A}(s)$ it takes in any state $s=(\theta, U, \vec \mu)$, in accordance with the kill-and-restart framework described in Section~\ref{sec:preliminaries}.
Recall that an action is a family of planned probings that the strategy is committed to execute until a job completes and a new action is determined. Moreover, the state~$s$ specifies lower bounds $\mu_j\leq p_j$ for every job~$j$, the set $U$ of unfinished jobs, and the current time $\theta$.

In the initial state $s_0=(0, [n], \vec 0)$, we plan to probe all jobs~$j\in[n]$ in rounds, where in each round the jobs are probed in the order given by $\sigma$ for $w_jb^{q+\xi}$ for some $q\in\Z$ and then $q$ incremented by $1$. Hence, $\sum_{\widehat q=-\infty}^{q-1}\sum_{k\in[n]}w_kb^{\widehat q+\xi}=\frac{b^{q+ \xi}}{b-1} \sum_{k\in[n]} w_k$ is the point in time at which the first job~$j =\sigma^{-1}(1)$ is probed for $w_jb^{q+\xi}$.
We define the action of $\alg^{\sigma,\xi}$ for state $s_0$ by
\[
	a(s_0)\coloneqq \Bigg(\bigg(
	\frac{b^{q+ \xi}}{b-1} \sum_{k\in[n]} w_k +   \sum_{\mathclap{\substack{k \in [n] \\ \sigma(k) < \sigma(j)}}} w_kb^{q+ \xi} ,
	\;
	j,
	\;
	w_j b^{q+ \xi}
	\bigg)\Bigg)_{(q, j) \in  \Z \times [n]}.
\]

In a state $s=(\theta, U, \vec \mu)$ with $\theta > 0$
occurring at the completion of a job, there exists $q^*\in\mathbb{Z}$ by construction such that $\mu_j\in\{w_j b^{q^*+\xi-1}, w_j b^{q^*+\xi}\}$, for all $j\in U$. The set of jobs $J^*\coloneqq \{j\in U \mid \mu_j=w_jb^{q^*+\xi-1}\}$ are those jobs that have not been probed yet for $w_jb^{q^*+\xi}$. Hence, these jobs must be probed first before the new round $q^*+1$ can start. For $q>q^*$ we define
\begin{align*}
\theta_q \coloneqq \theta +  \sum_{k \in J^*} w_kb^{q^* + \xi} +  \sum_{\widehat q = q^{*}+1}^{q-1}\sum_{k\in U} w_k b^{\widehat q}
=\theta + b^{q^* + \xi} \sum_{k \in J^*} w_k + \frac{b^{q+\xi}-b^{q^*+1+\xi}}{b-1} \sum_{k\in U} w_k
\end{align*}
as the point in time when round $q$ starts.
We define the actions of $\alg^{\sigma,\xi}$ for state~$s$ by
\begin{align*}
	a(s) \coloneqq \Bigg(\bigg(
	\theta +  \sum_{\substack{k \in J^*: \\ \sigma(k) < \sigma(j)}} w_k b^{q^* + \xi},
	\;
	j,
	\;
	w_j b^{q^* + \xi}
	\bigg)\Bigg)_{j \in J^*}
	\cup\quad
	\Bigg(
	\bigg(
	\theta_q
	+ \sum_{\substack{k \in U: \\ \sigma(k) < \sigma(j)}} w_k b^{q + \xi} 
	,
	\;
	j,
	\;
	w_j b^{q + \xi}
	\bigg)\Bigg)_{(q, j) \in \Z_{> q^*} \times U}.
\end{align*}

\subsection{\texorpdfstring{Tight Analysis of the Deterministic $b$-Scaling Strategy}{Tight Analysis of the Deterministic b-Scaling Strategy}} \label{subsec:Analysis_Alg_det}
In this \lcnamecref{subsec:Analysis_Alg_det}, we compute tight bounds for the competitive ratio of $\alg$ for \threefield{1}{}{w_jC_j}. 
For the analysis, we need some additional definitions.
We denote by $s_j\coloneqq \frac{p_j}{w_j}$ the \emph{Smith ratio} of job~$j\in[n]$.
Further, we define $D_{jk} \coloneqq Y_j^{\alg}(C_k^{\alg})$ as the amount of time spent probing job $j$ before the completion of job $k$. For all $j,k\in[n]$ we
define the weighted mutual delay $\Delta_{jk}$
by
$\Delta_{jk}\coloneqq w_j D_{jj}$ if $j=k$ and 
$\Delta_{jk}\coloneqq w_k D_{jk} + w_j D_{kj}$ if $j\neq k$. 
Thus, it holds
\[
 \alg(\vec p, \vec w) = 
 \sum_{j=1}^n w_j C_j = \sum_{j=1}^n w_j \sum_{k=1}^n D_{kj}=
 \sum_{1\leq j\leq k\leq n} \Delta_{jk}.
\]
\Cref{lem:DeltaF,lem:Smith_intger_power_b,lem:ratio_of_quads} constitute preparations for \cref{thm:upperbound}, establishing the upper bound on the competitive ratio. Afterwards, the tightness is proven in \cref{thm:ALG_LB}. The first step towards the upper bound is to provide an overestimator of $\Delta_{jk}$ that is piecewise linear in $(s_j, s_k)$.

\begin{lemma}\label{lem:DeltaF}
 Define the function $F\colon \{ (s,s')\in\R_{>0}^2 \mid s\leq s' \}\to \R$ by
 \[
  F(s,s') \coloneqq
 \begin{cases}
  \frac{2}{b-1}\cdot b^{\lfloor \log_b s\rfloor+1} + s' & \text{ if }  \lfloor \log_b(s)\rfloor = \lfloor \log_b(s')\rfloor, \\
  \bigl(\frac{2}{b-1}+1\bigr)\cdot b^{\lfloor \log_b s\rfloor+1} + s & \text{ if } \lfloor \log_b(s)\rfloor < \lfloor \log_b(s')\rfloor.
 \end{cases}
 \]
 Then $F$ is non-decreasing in both arguments. Moreover, for all $j,k\in[n]$ such that $s_j\leq s_k$, it holds that
 $\Delta_{jk}\leq w_j w_k\ F(s_j, s_k)$.
\end{lemma}

\begin{proof}
 Let first $s' > 0$ be fixed, and let $r' \coloneqq \lfloor \log_b(s') \rfloor$.
 Then the function $F(\cdot, s') \colon (0,s'] \to \R$ is obviously non-decreasing on $(0,b^{r'})$ and on $[b^{r'}, s']$.
 To see that it is also non-decreasing around the breakpoint~$b^{r'}$, we take the limit
 \begin{multline*}
  \lim_{t \nnearrow b^{r'}} F(t, s') 
  = \lim_{t \nnearrow b^{r'}} \Bigl(\frac{2}{b-1} + 1 \Bigr) \cdot b^{\lfloor \log_b t \rfloor + 1} + t 
  = \Bigl(\frac{2}{b-1} + 1 \Bigr) \cdot b^{r'} + b^{r'} = \frac{2b}{b-1} \cdot b^{r'} \\
  = \frac{2}{b-1} \cdot b^{\lfloor \log_b b^{r'} \rfloor + 1} < F(b^{r'}, s').
 \end{multline*}

 Now let $s > 0$ be fixed, and let $r \coloneqq \lfloor \log_b(s) \rfloor$. Then the function $F(s, \cdot) \colon [s, \infty) \to \R$ is clearly non-decreasing on $[s, b^{r+1})$ and on $[b^{r+1}, \infty)$. At the breakpoint we have
	\[
  \lim_{t' \nnearrow b^{r+1}} F(s, t') = \lim_{t' \nnearrow b^{r+1}} \frac{2}{b-1} \cdot b^{r+1} + t' = \frac{2}{b-1} \cdot b^{r+1} + b^{r+1} = \Bigl(\frac{2}{b-1} + 1\Bigr) \cdot b^{r+1} < F(s, b^{r+1}),
	\]
 so that it is globally non-decreasing.

For all $j\in[n]$, 
let $q_j \coloneqq  \lceil \log_b(s_j) \rceil$, so that $b^{q_j-1} < s_j \leq b^{q_j}$. 
We have $D_{jj}=\sum_{i=-\infty}^{q_j-1} w_j b^i + p_j 
= w_j(\frac{b^{q_j}}{b-1}+s_j)$,
so it holds $\Delta_{jj} = w_j D_{jj} = w_j^2(\frac{b^{q_j}}{b-1}+s_j) \leq w_j^2 F(s_j,s_j)$,
where we have used the fact
that $q_j=\lceil \log_b(s_j) \rceil\leq \lfloor \log_b(s_j) \rfloor+1$.

Now, let $j\neq k$ such that $s_j \leq s_k$. We first assume that jobs~$j$ and~$k$ complete in the same round, i.e., $\lceil\log_b(s_j)\rceil=\lceil\log_b(s_k)\rceil$. If job $k$ is executed first in this round, then we have
$D_{jk}=\sum_{i=-\infty}^{q_j-1} w_j b^i = w_j\frac{b^{q_j}}{b-1}$ and 
$D_{kj}=\sum_{i=-\infty}^{q_j-1} w_k b^i+p_k = w_k \frac{b^{q_j}}{b-1}+p_k$, which gives
\begin{equation}
 \Delta_{jk}=2w_j w_k \frac{b^{q_j}}{b-1} + w_j p_k
 = w_j w_k \Bigl(\frac{2 b^{q_j}}{b-1} + s_k\Bigr)
 \le w_j w_k  \Bigl(\frac{2 b^{\lfloor \log_b(s_j) \rfloor + 1}}{b-1} + s_k \Bigr)
.\label{bound_Delta_1}
\end{equation}
Similarly, if job $j$ is completed first, we have $D_{jk}=w_j \frac{b^{q_j}}{b-1}+p_j$ and $D_{kj}=w_k\frac{b^{q_j}}{b-1}$, so we obtain $\Delta_{jk} = w_j w_k(\frac{2b^{q_j}}{b-1} + s_j)
\leq w_j w_k(\frac{2b^{\lfloor \log_b(s_j)\rfloor + 1}}{b-1} + s_k)$, i.e., the bound~\eqref{bound_Delta_1} is still valid.
If $\lfloor \log_b(s_j) \rfloor = \lfloor \log_b(s_k) \rfloor$, then the right-hand side equals $w_j w_k F(s_j, s_k)$.
Otherwise, $s_k = b^{\lfloor \log_b(s_j) \rfloor + 1}$, so that
\[w_j w_k \Bigl(\frac{2 b^{\lfloor \log_b(s_j) \rfloor + 1}}{b-1} + s_k\Bigr)
= w_j w_k \cdot \lim_{t' \nnearrow s_k} \Bigl(\frac{2 b^{\lfloor \log_b(s_j) \rfloor + 1}}{b-1} + t'\Bigr)
= w_j w_k \cdot \lim_{t' \nnearrow s_k} F(s_j, t') \le F(s_j, s_k).\]

Now, assume that job~$k$ is completed in a later round than job~$j$, i.e.,
$\lceil\log_b(s_j)\rceil<\lceil\log_b(s_k)\rceil$. Then,
$D_{jk}=w_j \frac{b^{q_j}}{b-1}+p_j$ and 
$D_{kj}\leq w_k \frac{b^{q_j}}{b-1}+w_k b^{q_j}$, where the inequality is tight whenever job~$k$ is probed before job~$j$ in the round where~$j$ is completed. Thus,
\[
\Delta_{jk} \leq 2w_jw_k\frac{b^{q_j}}{b-1}+w_k p_j + w_j w_k b^{q_j} = w_j w_k \Bigl(\frac{2b^{q_j}}{b-1}+b^{q_j}+s_j \Bigr) = w_j w_k \left(\Bigl(\frac{2}{b-1} + 1\Bigr) b^{q_j} + s_j\right).
\]
If $\lfloor \log_b(s_j) \rfloor < \lfloor \log_b(s_k) \rfloor$, then the right-hand side be bounded by $w_j w_k F(s_j, s_k)$, using that $q_j \le \lfloor \log_b(s_j) \rfloor + 1$. Otherwise, $s_j = b^{q_j} < s_k < b^{q_j+1}$, so that
\[w_j w_k \left(\Bigl(\frac{2}{b-1}+1\Bigr) b^{q_j} + s_j \right)
= w_j w_k \cdot \lim_{t \nnearrow s_j} \left(\Bigl(\frac{2}{b-1}+1\Bigr) b^{\lfloor \log_b(t) \rfloor + 1} + t \right)
= w_j w_k \cdot \lim_{t \nnearrow s_j} F(t, s_k) \le w_j w_j F(s_j, s_k).\]
\end{proof}

Summing the bounds of the previous lemma yields
\begin{equation}\label{boundU}
 \alg(\vec{p},\vec{w}) \leq
 \sum_{1\leq j \leq k \leq n} w_j w_k\ F\bigl(\min(s_j, s_k),\max(s_j,s_k)\bigr) \eqqcolon U(\vec{p},\vec{w}).
\end{equation}
We next prove a lemma showing that for bounding the ratio~$U/\opt$ we can restrict to instances in which all Smith ratios are integer powers of $b$.

\begin{lemma}\label{lem:Smith_intger_power_b}
For any instance $(\vec{p},\vec{w})$, there exists
another instance $(\vec{p}',\vec{w})$ with $p_j'=w_j b^{q_j}$ for some $q_j\in\N_0$, for all $j\in [n]$, such that
\[
 \frac{U(\vec{p},\vec{w})}{\opt(\vec{p},\vec{w})}\leq \frac{U(\vec{p}',\vec{w})}{\opt(\vec{p}',\vec{w})}.
\]
\end{lemma}
\begin{proof}
Let $\rho_{\min} = \min_{j \in [n]} \lfloor \log_b(s_j) \rfloor$. Then in the instance $(b^{-\rho_{\min}} \vec p, \vec w)$ all jobs have Smith ratio $\ge 1$, and it holds that $U(b^{-\rho_{\min}} \vec p, \vec w) = b^{-\rho_{\min}} \cdot U(\vec p, \vec w)$. Clearly, we also have $\opt(b^{-\rho_{\min}} \vec p, \vec w) = b^{-\rho_{\min}} \cdot \opt(\vec p, \vec w)$. Therefore, without loss of generality, we assume that $s_j \ge 1$ for all $j \in [n]$. Let $S \coloneqq \bigl\{ \log_b(s_j) \mid  j\in[n]\bigr\}\setminus \mathbb{Z}$. If $S=\emptyset$, we are done. Otherwise,
let $q \coloneqq \min(S) > 0$ and $I\coloneqq \{ j\in [n] \mid s_j = b^q \}$. We will either decrease
the Smith ratio of each job $j\in I$
to $b^{\lfloor q \rfloor}$ or increase them to $b^{q'}$, where $q' \coloneqq \min\bigl(S \cup \{\lceil q \rceil\} \setminus \{q\}\bigr)$,
so that the cardinality of $S$ is decreased by $1$, and repeat this operation until each Smith ratio is an integer power of $b$.
For $\delta\in\R$, define $p_j(\delta) \coloneqq p_j+ \delta \cdot w_j \mathds{1}_I(j)$,
so the Smith ratio of job~$j$ in the instance $(\vec{p}(\delta),\vec{w})$ is $s_j(\delta)=s_j + \delta$ if $j\in I$ and 
$s_j(\delta)=s_j$ otherwise.%
\newcommand{\bunderline}[1]{\underline{#1\mkern-3mu}\mkern3mu }
Let $\bunderline{\delta} \coloneqq b^{\lfloor q \rfloor}-b^q<0$ and $\bar{\delta}\coloneqq b^{q'}-b^q>0$.
Since an optimal schedule follows the \wspt{} rule, it is easy to see that the function
$\delta\mapsto \opt(\vec{p}(\delta),\vec{w})$ is linear in the interval $[\bunderline{\delta},\bar{\delta})$, as the order of the Smith ratios remains unchanged for all $\delta$ in this interval.
For the same reason,
and because for all $j,k$ the Smith ratios $s_j(\delta),s_k(\delta)$ remain in the same piece of the piecewise linear function $(s_j,s_k) \mapsto F(\min(s_j,s_k), \max(s_j,s_k))$ for all $\delta\in[\bunderline{\delta},\bar{\delta})$,
the function $\delta\mapsto U(\vec{p}(\delta),\vec{w})$ is also linear over $[\bunderline{\delta},\bar{\delta})$.
As a result, the function
\[h \colon \delta \mapsto \frac{U(\vec{p}(\delta),\vec{w})}{\opt(\vec{p}(\delta),\vec{w})}\]
is a quotient of linear functions and thus monotone over 
$[\bunderline{\delta},\bar{\delta})$. Indeed, $\opt(\vec{p}(\delta),\vec{w})>0$ for all $\delta \geq \bunderline{\delta}$, so $h$ has no pole in this interval. We can thus distinguish two cases: if the function $h$ is non-increasing, 
we let $\delta'=\bunderline{\delta}$, so
we have $h(\delta')\geq h(0)$, which means that we can decrease the Smith ratio of each job $j\in I$ to $s_j(\delta')=s_j+\bunderline{\delta}=b^q+b^{\lfloor q\rfloor}-b^q=b^{\lfloor q\rfloor} \ge 1$ without decreasing the bound $U/\opt$ on the competitive ratio. Otherwise, the function $h$ is non-decreasing, 
hence $h(0)\leq \lim_{\delta\nnearrow\bar{\delta}} h(\delta) \leq h(\bar{\delta})$,
where the last inequality comes from the fact that $F$ is non-decreasing.
So in this case we set $\delta'=\bar{\delta}$
and we can increase the Smith ratio of
each job $j\in I$ to $b^{q'}$ without decreasing the bound on the competitive ratio; if 
$q'=\lceil q \rceil$, it means that we round up these Smith ratios to the next integer power of $b$, otherwise it is $q'=\min(S\setminus\{q\})$, so we 
cluster together a larger group of jobs with a Smith ratio $s_j(\bar{\delta})=b^{q'}$ that is not an integer power of $b$. 

In all cases, the number of distinct non-integer values of $\log_b(p_j(\delta')/w_j)$ is decremented by one compared to the original instance, while the bound on the competitive ratio is only larger:
\[\frac{U(\vec{p}(\delta'),\vec{w})}{\opt(\vec{p}(\delta'),\vec{w})}
\geq 
\frac{U(\vec{p}(0),\vec{w})}{\opt(\vec{p}(0),\vec{w})}.\]
Repeating this construction until all Smith ratios are integer powers of $b$ yields the desired result.
\end{proof}

The next lemma gives a handy upper bound for the competitive ratio of $\alg$ relying on the ratio of two quadratic forms. For $L\in \mathbb{N}_0$ define the symmetric $((L+1) \times (L+1))$-matrices
$\vec{A}_L
\coloneqq \big( \frac{1}{2} b^{\min(\ell,m)} \mathds{1}_{\{\ell \neq m\}}\big)_{0\leq \ell,m \leq L}$
and $\vec{B}_L \coloneqq \big( \frac{1}{2} b^{\min(\ell,m)}\big)_{0\leq \ell,m \leq L}$.

\begin{lemma} \label{lem:ratio_of_quads}
For any instance $(\vec{p},\vec{w})$ there exists an integer $L$ and a vector $\vec{x}\in\R^{\{0,\dotsc,L\}}$
such that
\[\frac{\alg(\vec p, \vec w)}{\opt(\vec p, \vec w)} \leq \frac{2b}{b-1} + 1 + b\cdot \frac{\vec{x}^\top \vec{A}_L \vec{x}}{\vec{x}^\top \vec{B}_L \vec{x}}.
\]
\end{lemma}

\begin{proof}
Consider an arbitrary instance $(\vec{p},\vec{w})$.
It follows from~\eqref{boundU} that
$\frac{\alg(\vec{p},\vec{w})}{\opt(\vec{p},\vec{w})}\leq \frac{U(\vec{p},\vec{w})}{\opt(\vec{p},\vec{w})}
$. By Lemma~\ref{lem:Smith_intger_power_b}, we construct an instance $(\vec{p}',\vec{w})$ in which each Smith ratio is a non-negative integer power of $b$, and such that
$\frac{U(\vec{p},\vec{w})}{\opt(\vec{p},\vec{w})}\leq \frac{U(\vec{p}',\vec{w})}{\opt(\vec{p}',\vec{w})}
$ holds. 
In the remainder of this proof, we relabel the jobs so that
$\frac{p_1'}{w_1}\leq \cdots \leq \frac{p_n'}{w_n}$.
We define $L \coloneqq \max_{j \in [n]} \log_b(s'_j)$.
For all $\ell=0,\dotsc,L$, we denote by 
$J_\ell \coloneqq \{ j\in[n] \mid  p_j'=w_jb^\ell \}$ the subset of jobs with Smith ratio equal to $b^\ell$, so by construction we have $[n]=J_0\cup\dotsb\cup J_L$.
We also define $x_\ell \coloneqq \sum_{j\in J_\ell} w_j$ and 
$y_\ell \coloneqq \sum_{j\in J_\ell} w_j^2$, for all $\ell=0,\dotsc,L$.

We first get a handy expression for $\opt(\vec{p}',\vec{w})$ relying on the vectors $\vec{x},\vec{y}\in\R^{\{0,\dotsc,L\}}$. 
By optimality of the \wspt{} rule,
\begin{equation}\label{eq:OPT_intg_Smith}\begin{split}
 \opt(\vec{p}',\vec{w}) = \sum_{k=1}^n w_{k} \sum_{j=1}^k p'_j
 &=\sum_{j=1}^n p_j' \sum_{k=j}^n w_{k} \\
 &=\sum_{\ell=0}^L\sum_{j\in J_\ell} p_j' \sum_{k=j}^n w_k \\
 &=\sum_{\ell=0}^L b^\ell \sum_{j\in J_\ell} w_j \sum_{k=j}^n w_{k} \\
 &=\sum_{\ell=0}^L b^\ell \Bigg(\sum_{\substack{j,k\in J_\ell\\ j\leq k}} w_j w_{k} + \sum_{m=\ell+1}^L\sum_{j\in J_\ell}w_j\sum_{k\in J_m}
w_{k} \Bigg) \\
&=\sum_{\ell=0}^L b^{\ell}
\left(\frac12 y_\ell + \frac12 x_\ell^2 +\sum_{m=\ell+1}^L x_m x_\ell
\right)
=\sum_{\ell=0}^L \frac12 b^{\ell} y_\ell
+ \vec{x}^\top \vec{B}_L \vec{x}.
\end{split}\end{equation}
On the other hand, using the fact that $F(b^\ell,b^m)= b^\ell F(1, b^{m-\ell}) = b^\ell F(1,b^{\min(m-\ell,1)})$ for all integers $\ell\leq m$, we obtain
\begin{equation}\label{eq:U_intg_Smith}\begin{split}
 U(\vec{p}',\vec{w}) &= \sum_{\ell=0}^L \sum_{j\in J_\ell} \left(\sum_{\substack{k\in J_\ell\\ k\geq j}} w_j w_{k} F(b^\ell,b^\ell)
 +\sum_{m=\ell+1}^L\sum_{k\in J_m} w_j w_{k} F(b^\ell,b^m)\right)\\
 &=\sum_{\ell=0}^L \Biggl(
 b^{\ell} F(1,1) 
 \sum_{\substack{j,k\in J_\ell\\ j\leq k}} w_j w_{k}
 +b^{\ell} F(1,b) 
 \sum_{m=\ell+1}^L\sum_{j\in J_\ell} w_j
 \sum_{k\in J_m} w_{k} 
 \Biggr)\\
 &= \sum_{\ell=0}^L \left( b^\ell  
 F(1,1) \cdot 
 \Bigl(\frac12 y_\ell + \frac12 x_\ell^2\Bigr)
 +b^{\ell} F(1,b) \sum_{m=\ell+1}^L x_mx_\ell\right)\\
 &=
 F(1,1) \cdot \bigg(\sum_{\ell=0}^L \frac12 b^\ell y_\ell
+ \vec{x}^\top \vec{B}_L \vec{x}\bigg) + [F(1,b)-F(1,1)]
\cdot \sum_{\ell=0}^L \sum_{m=\ell+1}^L b^\ell x_m x_\ell\\
&=F(1,1)\cdot \opt(\vec{p}',\vec{w})
+[F(1,b)-F(1,1)] \cdot \vec{x}^\top \vec{A}_L \vec{x}.
\end{split}\end{equation}
Substituting $F(1,1)=\frac{2b}{b-1}+1$ and $F(1,b)=\frac{2b}{b-1}+b+1$, we get
\[
 \frac{U(\vec{p}',\vec{w})}{\opt(\vec{p}',\vec{w})}
 = \frac{2b}{b-1}+1 + b \cdot\frac{\vec{x}^\top \vec{A_L}\vec{x}}{\sum_{\ell=0}^L \frac12 b^\ell y_\ell
+ \vec{x}^\top \vec{B}_L \vec{x}}
\leq \frac{2b}{b-1}+1 + b \cdot\frac{\vec{x}^\top \vec{A_L}\vec{x}}{\vec{x}^\top \vec{B}_L \vec{x}}.\qedhere
\]
\end{proof}

In order to determine an upper bound for the competitive ratio of $\alg$, we need to bound the last term in the expression from \cref{lem:ratio_of_quads}. The latter is the ratio of two quadratic forms, and an upper bound for this term can be derived by computing the maximum eigenvalue of the matrix $\vec{Z}_L \coloneqq \vec{Y}_L^{-\top} \vec{A}_L \vec{Y}_L^{-1}$, where $\vec{B}_L = \vec{Y}_L^\top \vec{Y}_L$ is the Cholesky decomposition of the matrix~$\vec{B}_L$. 
An explicit computation of the matrix $\vec{Z}_L$ reveals that it is a tridiagonal matrix whose principal submatrix---obtained by deleting the first row and first column---is a (tridiagonal) Toeplitz matrix that we refer to as $\vec{T}_{L}$.
Finding an upper bound for the largest eigenvalue of $\vec{Z}_L$ is the main ingredient of the proof of \cref{thm:upperbound}, while the eigenvector corresponding to this eigenvalue can be used to construct instances that prove the tightness of the bound (\cref{thm:ALG_LB}).

\DetALGUB

\begin{proof}[Proof of \cref{thm:upperbound}]
By \Cref{lem:ratio_of_quads} we have
\begin{equation}\label{bound_CRdet_rho}
 \sup_{\vec p, \vec w \in \R_{\ge 0}^n} \frac{\alg(\vec p, \vec w)}{\opt(\vec p, \vec w)} \leq 1 + \frac{2b}{b-1}+b\cdot \sup_{L\in\mathbb{N}} \sup_{\vec{x}\in\R^{\{0,\dotsc,L\}}} 
\frac{\vec{x}^{\top} \vec{A}_L\vec{x}}{\vec{x}^{\top} \vec{B}_L\vec{x}}
\end{equation}
As described above, for every $L \in \N$, 
\[
	\sup_{\vec{x}\in\R^{\{0,\dotsc,L\}}} 
	\frac{\vec{x}^{\top} \vec{A}_L\vec{x}}{\vec{x}^{\top} \vec{B}_L\vec{x}} = \lambda_{\max}(\vec Z_L)
\]
is the maximum eigenvalue of the matrix $\vec Z_L = \vec Y_L^{-\top} \vec A_L \vec Y_L^{-1}$. For $\alpha, \beta \in \R$ let $\vec T_L(\alpha, \beta) \in \R^{L \times L}$ denote the symmetric tridiagonal Toeplitz matrix with $\alpha$ on the main diagonal and $\beta$ on both adjacent diagonals. The explicit representation of $\vec Z_L$ can be derived by applying \cref{supremum ratio} in the appendix with $a_1 = \cdots = a_L = 1$. As many terms in the general form cancel out, this yields the tridiagonal matrix
		\[
		\vec{Z}_{L} = 
		\begin{pmatrix}
			0 & \frac{1}{\sqrt{b-1}} &           & & & &\\
			\frac{1}{\sqrt{b-1}}& -\frac{2}{b-1} & \frac{\sqrt{b}}{b-1}  & & & &\\
			& \frac{\sqrt{b}}{b-1} & -\frac{2}{b-1} & \frac{\sqrt{b}}{b-1} & & &\\
			&          & \ddots & \ddots & \ddots &\\
			&          &        & \frac{\sqrt{b}}{b-1} & -\frac{2}{b-1} & \frac{\sqrt{b}}{b-1}\\                  
			&          &        &          & \frac{\sqrt{b}}{b-1} & -\frac{2}{b-1}    
		\end{pmatrix}=\begin{pmatrix} 0 & \frac{1}{\sqrt{b-1}} \vec e_1^\top \\ \frac{1}{\sqrt{b-1}}\vec e_1 & \vec{T}_{L}\bigl(-\frac{2}{b-1}, \frac{\sqrt b}{b-1}\bigr)\end{pmatrix}.
		\]
	 We now want to show that $\lambda_{\max}(\vec Z_L) \le \frac{2 (\sqrt b - 1)}{b-1}$. 
  This is equivalent to the matrix $\vec H \coloneqq \frac{2 (\sqrt b - 1)}{b-1} \vec{I}_{L+1} - \vec{Z}_{L}$ being positive semidefinite, where $\vec I_{L+1}$ denotes the identity matrix with indices $\{0,\dotsc,L\}$. We compute
  \[\vec H = \begin{pmatrix} \frac{2 (\sqrt b - 1)}{b-1} & -\frac{1}{\sqrt{b -1}} \vec e_1^\top \\ -\frac{1}{\sqrt{b -1}} \vec e_1 & \vec{T}_{L}\bigl(\frac{2\sqrt b}{b-1},-\frac{\sqrt b}{b-1}\bigr)\end{pmatrix} = \frac{\sqrt b}{b-1} \begin{pmatrix} 2 - \frac{2}{\sqrt b} & -\sqrt{1-\frac 1 b}\vec e_1^\top \\ -\sqrt{1-\frac 1 b}\vec e_1 & \vec T_L(2,-1) \end{pmatrix}.
   \]
 This matrix has the form required in \cref{lem:generalized_Toeplitz_pos_semidefinite} with $k=1$, $\alpha = 2-\frac{2}{\sqrt b}$, and $\vec v = -\sqrt{1-\frac 1 b} \in \R^1$. Since $\alpha - \Vert \vec v \Vert^2 = 2 - \frac{2}{\sqrt b} - \bigl(1 - \frac 1 b\bigr) = \frac{(\sqrt b - 1)^2}{b} \ge 0$, the \lcnamecref{lem:generalized_Toeplitz_pos_semidefinite} implies that $\vec H$ is positive semidefinite, so that $\lambda_{\max}(\vec Z_L) \le \frac{2(\sqrt b -1)}{b-1}$. Since this holds for every $L \in \N$, we obtain with inequality~\eqref{bound_CRdet_rho}
		\begin{align*}
			\frac{\alg}{\opt} 
			&\leq 1 + \frac{2b}{b-1} + \frac{2b(\sqrt{b}-1)}{b-1}
			= 1+\frac{2b^{\frac{3}{2}}}{b-1}.
		\end{align*}
		The latter is minimized for $b=3$ yielding the performance guarantee of $1 + 3 \sqrt{3} $.
\end{proof}

Next, we show that our analysis of $\alg$ is asymptotically tight.

\DetALGLB

\begin{proof}
For $L \geq 1$ let $\vec{Y}_L$, and $ \vec{Z}_{L}$ be the matrices defined above, and let $\vec T_L \coloneqq \vec T_L(-\frac{2}{b-1}, \frac{\sqrt b}{b-1})$ be the principal submatrix of $\vec Z_L$ and $\vec{z}_L = (z^{(L)}_\ell)_{0 \le \ell \le L}$ with $z^{(L)}_{\ell} \coloneqq \sqrt{\frac{2}{L+1}} \cdot \sin\bigl(\frac{\ell \pi}{L + 1}\bigr)$.
By \cite[Theorem 2.4]{BG05}, $\tilde{\vec{z}}_{L} \coloneqq (z^{(L)}_{\ell})_{1 \leq \ell \leq L} $ is the eigenvector of the matrix $\vec{T}_{L}$ corresponding to the largest eigenvalue $\lambda_{\max} (\vec{T}_{L}) = - \frac{2}{b-1} + 2\frac{\sqrt{b}}{b-1} \, \cos \bigl(\frac{\pi}{L+1} \bigr)$, and we have $\Vert \vec{z}_L \Vert^2 = \Vert \tilde{\vec{z}}_{L} \Vert^2 = \frac{2}{L+1} \sum_{\ell=1}^{L} \sin^2 \big(\frac{\ell \pi}{L+1}\big) = 1$.
Define $\vec{x}_L = (x_\ell^{(L)})_{0 \le \ell \le L} \coloneqq \vec{Y}_L^{-1} \vec{z}_L$. 
By construction, it holds 
\begin{equation}\label{limit_quotient_quadratic_forms}
\frac{\vec{x}_L^\top \vec{A}_L \vec{x}_L}{\vec{x}_L^\top \vec{B}_L \vec{x}_L}
=
\frac{\vec{z}_L^\top \vec{Z}_L \vec{z}_L}{\|\vec{z}_L\|^2}
=
\tilde{\vec{z}}_{L}^\top \vec{T}_{L} \tilde{\vec{z}}_{L}
=\lambda_{\max} (\vec{T}_{L}) = - \frac{2}{b-1} + 2\frac{\sqrt{b}}{b-1} \, \cos \Bigl(\frac{\pi}{L+1} \Bigr)
\xrightarrow{L \to \infty} \frac{2(\sqrt{b}-1)}{b-1}.
\end{equation}
The idea is to define for every $L \in \N$ an instance~$\vec p_L$ via a non-negative integer vector~$\vec{n}_L \in \N_0^{L}$ that is similar to $\vec{x}_L$, which contains $n_{\ell}$ jobs with processing time~$b^{\ell} + \epsilon$ for all $\ell\in [L]$ and for some $\epsilon>0$. There is an $\ell^* \in \N_{>0}$ such that $x^{(L)}_{\ell} \geq 0$ holds for all $L \ge \ell \geq \ell^*$. This follows from \cref{lem:ellstar} in the appendix because the matrix~$\vec Y$ has exactly the required form, as shown in \cref{lem:Cholesky}. Therefore, for $L \ge \ell^*$ the vector $\vec{n}_L = (n^{(L)}_{\ell})_{0 \leq \ell \leq L}$ with $n_\ell^{(L)} \coloneqq 0$ for $\ell < \ell^*$ and $n^{(L)}_\ell \coloneqq \lfloor b^L x^{(L)}_\ell \rfloor$ for $\ell \ge \ell^*$ is a non-negative integer vector, so that for every $\varepsilon \ge 0$ the instance~$\vec p_L(\varepsilon) = (p_j^{(L)}(\varepsilon))$ consisting of $n_\ell^{(L)}$ jobs with processing time $b^\ell + \varepsilon$ for $\ell = 0,\dotsc,L$, ordered non-increasingly by processing times, is well-defined. Let $n^{(L)} \coloneqq \sum_{\ell=0}^L n^{(L)}_\ell$ be the number of jobs in $\vec p_L(\varepsilon)$.
 
 Let now $\varepsilon > 0$.
 Clearly, we have
	\begin{equation}\label{eq:forgetful:opt_cost_LB_instance}
		\opt(\vec p_L(\varepsilon)) = \opt(\vec p_L(0)) + \sum_{j=1}^{n^{(L)}}(n^{(L)}-j+1) \cdot \varepsilon
		= \opt(\vec p_L^*) + \frac{n^{(L)}(n^{(L)}+1)}{2}\cdot \varepsilon.
	\end{equation}
 For every job~$j$ let $q_j \coloneqq \lceil \log_b(p_j^{(L)}) \rceil$, i.e., $q_j = \ell + 1$ for the $n_\ell$ jobs with processing time~$b^\ell + \varepsilon$. Proceeding similarly as in the proof of \cref{lem:DeltaF}, we obtain
$\Delta_{jj}=\frac{b^{q_j}}{b-1}+p_j^{(L)}=b^{q_j-1} (\frac{b}{b-1}+1) + \varepsilon$,
and for $j \neq k$ with $q_j \le q_k$ we have
$\Delta_{jk}=\frac{2b^{q_j}}{b-1}+p_j^{(L)}=b^{q_j-1}(\frac{2b}{b-1}+1)+\varepsilon$ if $q_j=q_k$ or
$\Delta_{jk}=\frac{2b^{q_j}}{b-1}+b^{q_j}+p_j^{(L)}=b^{q_j-1}(\frac{2b}{b-1}+b+1)+\varepsilon$ otherwise.
This can be rewritten as 
\[
 \Delta_{jk} =
\begin{cases}
 F(p_j^{(L)}(0),p_j^{(L)}(0)) + \varepsilon -\frac{b}{b-1} p_j^{(L)}(0) & \text{ if } j=k,\\
 F(p_j^{(L)}(0),p_k^{(L)}(0)) + \varepsilon & \text{ if } j \neq k \text{ and } q_j\leq q_k.
\end{cases}
\]
Summing over all pairs of jobs, we thus obtain
\[
 \alg(\vec{p}_L(\varepsilon))=U(\vec{p}_L(0))+\frac{n^{(L)}(n^{(L)}+1)}{2} \cdot \varepsilon -\frac{b}{b-1} \sum_{j=1}^{n^{(L)}} p_j^{(L)}(0)
\]

As the processing times of $\vec p_L(0)$ are integer powers of $b$, we can use \cref{eq:U_intg_Smith,eq:OPT_intg_Smith} with $\vec{y}=\vec{x}=\vec{n}_L$, resulting in
\begin{align*} \lim_{\varepsilon \searrow 0} \frac{\alg(\vec p_{L}(\varepsilon))}{\opt(\vec p_L(\varepsilon))} = \frac{U(\vec p_L(0)) - \frac{b}{b-1} \sum_{j=1}^{n^{(L)}} p_j^{(L)}(0)}{\opt(\vec p_L(0))} &= \frac{2b}{b-1} + 1 + \frac{b \cdot \vec n_L^\top \vec A_L \vec n_L - \frac{b}{b-1} \sum_{\ell=0}^{L} n_\ell^{(L)} b^\ell}{\sum_{\ell=0}^L \frac 1 2 n_\ell^{(L)} b^\ell + \vec n_L^\top \vec B_L \vec n_L} \\
&= \frac{2b}{b-1} + 1 + \frac{b^{1-2L} \vec n_L^\top \vec A_L \vec n_L - \frac{b^{1-2L}}{b-1} \sum_{\ell=0}^L n_\ell^{(L)} b^\ell}{\frac 1 2 b^{-2L} \sum_{\ell=0}^L n_\ell^{(L)} b^\ell + b^{-2L} \vec n_L^\top \vec B_L \vec n_L}.\end{align*}
In the following we compute the limit for $L \to \infty$ by computing the limits of the occurring terms separately. By \cref{lem:ellstar}, $\sum_{\ell=\ell^*}^L x_\ell^{(L)} \xrightarrow{L\to \infty} 0$.
Therefore, we have
	\[
	\frac{\sum_{\ell = 0}^L n^{(L)}_\ell b^{\ell}}{ b^{2L}}
	\leq \frac{n^{(L)} b^{L}}{b^{2L}} = \frac{n^{(L)}}{b^{L}} = \frac{\sum_{\ell=\ell^*}^L \lfloor b^L x^{(L)}_\ell \rfloor}{b^L} \le \sum_{\ell=\ell^*}^L x^{(L)}_\ell \xrightarrow{L \to \infty} 0
	.
	\]
For $L \ge \ell^*$ let $\vec{x}_1^{(L)} = (x_\ell^{(L)})_{0 \le \ell \le \ell^*-1}$, $\vec{x}^{(L)}_2 = (x_\ell^{(L)})_{\ell^* \le \ell \le L}$, $\vec{A}_{11}^{(L)} = (A_{\ell m}^{(L)})_{0 \le \ell, m \le \ell^* - 1}$, $\vec{A}_{12}^{(L)} = (A_{\ell m}^{(L)})_{\substack{0 \le \ell \le \ell^*-1\\ \ell^* \le m \le L}}$, and $\vec{A}_{22}^{(L)} = (A_{\ell m}^{(L)})_{\ell^* \le \ell,m \le L}$, so that 
	\[
	\vec{x}_L = 
	\begin{pmatrix}
		\vec{x}^{(L)}_1 \\ \vec{x}^{(L)}_2
	\end{pmatrix}
	\quad\text{and}\quad
	\vec{A}_L = 
	\begin{pmatrix}
		\vec{A}_{11}^{(L)} & \vec{A}_{12}^{(L)} \\
		\bigl( \vec{A}_{12}^{(L)} \bigr)^{\top} & \vec{A}_{22}^{(L)}
	\end{pmatrix}
	,
	\]
 and let $\vec n_2^{(L)} = (n_\ell^{(L)})_{\ell^* \le \ell \le L}$.
	With the definition of $\vec{n}_L$, we compute
	\[
		\vec{x}_L^{\top} \vec{A}_L \vec{x}_L - b^{-2L} \vec{n}_L^{\top} \vec{A}_L \vec{n}_L
		= \big({\vec{x}}_1^{(L)} \big)^\top \vec{A}_{11}^{(L)} \vec{x}_1^{(L)} + 2 \bigl(\vec{x}_1^{(L)}\bigr)^\top \vec{A}_{12}^{(L)} \vec{x}^{(L)}_2 + \Bigl(\vec x_2^{(L)} - \frac{\vec n_2^{(L)}}{b^L} \Bigr)^\top \vec A_{22}^{(L)} \Bigl(\vec x_2^{(L)} - \frac{\vec n_2^{(L)}}{b^L} \Bigr).
	\]
 For the first summand we have
	\[
		\Bigl\vert\big(\vec x_1^{(L)}\big)^\top\vec A_{11}^{(L)} \vec x_1^{(L)}\Bigr\vert \leq\sum_{0 \leq \ell < \ell'\leq \ell^*-1}\vert x_\ell^{(L)}\vert\vert x_{\ell'}^{(L)}\vert b^\ell
  \le \frac{\ell^* (\ell^* + 1)}{2} \cdot \frac{4(\sqrt{b}+1)^2}{(L+1) (b-1)} \xrightarrow{L \to \infty} 0,
  	\]
	and for the second summand, by \cref{lem:ellstar}, we have
	\[
		2\bigl\vert \bigl(\vec x_1^{(L)}\bigr)^\top \vec A_{12}^{(L)} \vec x_2^{(L)} \bigr\vert = \Biggl\vert\bigg(\sum_{\ell=0}^{\ell^*-1} x_\ell^{(L)} b^\ell \bigg) \bigg(\sum_{\ell=\ell^*}^L x_\ell^{(L)} \bigg)\Biggr\vert
		\le \ell^* \cdot \frac{2(\sqrt{b}+1)b^{(\ell^*-1)/2}}{\sqrt{(L+1)(b-1)}} \cdot \bigg(\sum_{\ell=\ell^*}^L x_\ell^{(L)} \bigg) \xrightarrow{L \to \infty} 0.
	\]
 Finally, the third summand satisfies
 \begin{align*}
  \biggl\vert \Bigl(\vec x_2^{(L)} - \frac{\vec n_2^{(L)}}{b^L} \Bigr)^\top \vec A_{22}^{(L)} \Bigl(\vec x_2^{(L)} - \frac{\vec n_2^{(L)}}{b^L} \Bigr) \biggr\vert
  &\le \sum_{\ell^* \le \ell < \ell' \le L} \Bigl\vert x_\ell^{(L)} - \frac{n_\ell^{(L)}}{b^L} \Bigr\vert \Bigl\vert x_{\ell'}^{(L)} - \frac{n_{\ell'}^{(L)}}{b^L} \Bigr\vert b^\ell \\
  &\le \frac{(L-\ell^*+2)^2}{2} \cdot \Bigl\vert x_\ell^{(L)} - \frac{\lfloor b^L x_\ell^{(L)} \rfloor}{b^L} \Bigr\vert \Bigl\vert x_{\ell'}^{(L)} - \frac{\lfloor b^L x_{\ell'}^{(L)} \rfloor}{b^L} \Bigr\vert b^L \\
  &\le \frac{(L-\ell^* + 2)^2}{2b^L} \xrightarrow{L \to \infty} 0.
 \end{align*}
Similarly, we have
	\[
		\vec{x}_L^{\top} \vec{B}_L \vec{x}_L - b^{-2L} \vec{n}_L^{\top} \vec{B}_L \vec{n}_L \xrightarrow{L \to \infty} 0.
	\]
As $\vec{x}_L^\top \vec{B}_L \vec{x}_L = \Vert \vec z_L \Vert = 1 \neq 0$ for all $L$, we have thus shown that
	\[
  \lim_{L \to \infty} \lim_{\varepsilon \to 0}\frac{\alg(\vec p_L(\varepsilon))}{\opt(\vec p_L(\varepsilon))} = \lim_{L \to \infty} \frac{2b}{b-1} + 1 + b \cdot \frac{\vec x_L^\top \vec A_L \vec x_L}{\vec x_L^\top \vec B_L \vec x_L}
  \stackrel{\eqref{limit_quotient_quadratic_forms}}= \frac{2b}{b-1} + 1 + \frac{2b (\sqrt b - 1)}{b-1} = 1 + \frac{2b^{\frac 3 2}}{b-1}.
 \]
 By \cref{lem:double limit} there is a sequence of problem instances for which the competitive ratio converges to the right hand side.
\end{proof}

\subsection{\texorpdfstring{Tight Analysis of the Randomized $b$-Scaling Strategy}{Tight Analysis of the Randomized b-Scaling Strategy}}\label{sec:random}
\newcommand{\indicator}[1]{\mathds{1}_{\{#1\}}}

\newcommand{\xiLargerEta}{\indicator{\Xi > \eta}}

We now consider $\ralg = \alg^{\Sigma, \Xi}$, where $\Sigma$ is a permutation drawn uniformly at random from $\mathcal{S}_n$ and $\Xi$ is uniformly distributed on the interval $[0,1]$.

As in the analysis of the deterministic algorithm, we start with a lemma giving an overestimator of $\E[\Delta_{jk}]$ for jobs $j$ and $k$ such that $s_j\leq s_k$. This time, our overestimator is not piecewise linear in $s_j$ and $s_k$ anymore. Instead, it depends on a concave function applied to the ratio $\frac{s_k}{s_j}\geq 1$.  

\begin{lemma}\label{lem:bound_f}
  Let $f \colon [1,b] \to \R$ be defined by
 \[
 	f(\alpha)\coloneqq \frac{1+\alpha}{2}+\frac{2}{\ln b}+\frac{\alpha-1}{2\ln b} \cdot 
	(1-\ln\alpha)
	.
\]
Then $f$ is positive and increasing, and for all $j\neq k$ such that $s_j\leq s_k$ it holds
 \[
  \E[\Delta_{jj}]= w_j^2\ s_j\cdot \Big(1+\frac{1}{\ln b}\Big)\leq w_j^2\ s_j\cdot f(1)
  \quad\text{and}\quad
  \E[\Delta_{jk}]= w_j w_k s_j \cdot f\left(\min\Big(b,\frac{s_k}{s_j}\Big)\right).
   \]
 \end{lemma}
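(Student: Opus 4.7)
The plan is to reduce the randomness to a single uniform parameter per job, then case-split on which round each job completes in, and finally integrate out the offset~$\Xi$.

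\textbf{Parameterization.} Conditioning on $\Xi = \xi$, job~$j$ is probed with failing durations $w_j b^{i+\xi}$ for $i < q_j$ and then completes during its $q_j$-th probe (cut off after time~$p_j$), where $q_j \coloneqq \lceil \log_b s_j - \xi \rceil$. Setting $\theta_j \coloneqq q_j + \xi - \log_b s_j \in [0,1)$, I have $b^{q_j + \xi} = s_j b^{\theta_j}$, and as $\xi$ varies uniformly on $[0,1]$ so does~$\theta_j$. For $j = k$ this immediately gives $D_{jj} = w_j s_j b^{\theta_j}/(b-1) + p_j$ by summing the geometric tail of failing probes and the final completing probe; taking expectation with $\E[b^{\theta_j}] = (b-1)/\log b$ yields $\E[\Delta_{jj}] = w_j^2 s_j(1 + 1/\log b)$, and since $f(1) = 1 + 2/\log b$ the claimed inequality follows.

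\textbf{Case $j \neq k$, $s_j \leq s_k$.} Let $\alpha \coloneqq s_k/s_j$ and $\Delta m \coloneqq \log_b \alpha$. Since $q_k \geq q_j$, there are three sub-cases: (I)~$\alpha \leq b$ and $\theta_j \geq \Delta m$, so that $q_j = q_k$ (same round); (II)~$\alpha \leq b$ and $\theta_j < \Delta m$, so that $q_k = q_j + 1$; (III)~$\alpha > b$, in which case $q_k > q_j$ always. In each sub-case I compute $D_{jk}$ and $D_{kj}$ by summing the geometric tails of failing probes in rounds~$<q_j$ and then adding the contribution of the pivotal round~$q_j$, whose value depends on the relative order of $j$ and~$k$ in the random permutation~$\Sigma$ (the later-completing job in round $q_j$ contributes either a completing probe of length~$p_{\cdot}$ in sub-case~(I), or a full failing probe $w_k b^{q_j+\xi}$ in sub-cases~(II)--(III)). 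Since the two orderings each occur with probability~$1/2$, averaging over~$\Sigma$ yields, in each sub-case, an expression that is affine in $b^{\theta_j}$.

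\textbf{Integration over~$\theta_j$.} I then integrate these affine expressions against the uniform law of~$\theta_j$, splitting the integral at~$\Delta m$ in sub-cases~(I)--(II). The $b^{\theta_j}$ terms contribute via $\int_0^{\Delta m} b^\theta \,d\theta = (\alpha-1)/\log b$ and $\int_{\Delta m}^1 b^\theta\, d\theta = (b-\alpha)/\log b$, while the $\theta_j$-constant terms combine with $\Delta m = \log \alpha/\log b$ to produce the $(\alpha-1)(1-\log\alpha)/(2\log b)$ summand in~$f$. The main obstacle is the careful bookkeeping across the sub-cases and the two orderings of~$\Sigma$; once assembled, the $(b-1)$ denominators cancel against a numerator proportional to $(b-1)(\alpha+3)$, leaving only $\log b$ denominators, and the expression collapses exactly to $w_j w_k s_j f(\min(b, \alpha))$.
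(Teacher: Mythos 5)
Your proposal is correct and follows essentially the same route as the paper's proof: the same case analysis on whether $j$ and $k$ complete in the same round, the same factor $\tfrac12$ from averaging over the random permutation, and the same integration over the uniform offset, yielding identical intermediate expressions. The only difference is your change of variables to the phase $\theta_j$, which mildly streamlines the computation by forcing the split point $\Delta m=\log_b(s_k/s_j)$ to lie in $[0,1]$, whereas the paper integrates over $\xi$ directly with $s_j=b^{\ell+u}$ and must separately treat $u+\delta\le 1$ and $u+\delta>1$.
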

\begin{proof}
 By straightforward calculus, we obtain
 \[
  f'(\alpha)=\frac{1-\alpha \ln(\nicefrac{\alpha}{b})}{2\alpha\ln b}
  \quad \textrm{and}\quad 
  f''(\alpha) = -\frac{\alpha+1}{2\alpha^2\ln b}.
 \]
 Since $f''(\alpha) < 0$, the function $\alpha\mapsto f'(\alpha)$ is decreasing over $[1,b]$. Hence, for all $\alpha\in[1,b]$ we have $f'(\alpha)\geq f'(b)=\frac{1}{2b\ln b} > 0$, which proves that $f$ is increasing. Therefore, $f(\alpha) \ge f(1) = 1 + \frac{2}{\ln b} > 0$, concluding the proof of the first part of the \lcnamecref{lem:bound_f}.

 Let now $j, k \in [n]$, $j \neq k$ with $s_j \le s_k$ be fixed, and let $\ell \in \mathbb{Z}$ and $u \in (0,1]$ be such that
 $s_j=b^{\ell+u}$. Moreover, let $q_j(\Xi) \coloneqq \lceil \log_b(s_j) - \Xi \rceil = \ell + \mathds{1}_{\{\Xi < u\}}$ be such that $j$ completes in the round, where jobs are probed for $w_j b^{q_j(\Xi) + \Xi}$. We thus have
 \[
  D_{jj} = \sum_{i=-\infty}^{q_j(\Xi)-1} w_j b^{i+\Xi} + p_j
  =w_j\Bigl(\frac{b^{q_j(\Xi) + \Xi}}{b-1} + s_j\Bigr) = w_j \Bigl(\frac{b^{\ell + \mathds 1_{\{\Xi < u\}} + \Xi}}{b-1} + s_j\Bigr).
 \]
Therefore, we get
\begin{align*}
 \E[\Delta_{jj}] = w_j^2 \left(s_j + \frac{b^{\ell}}{b-1}\int_{0}^1 b^{\xi+\mathds{1}_{\{\xi<u\}}}\, \mathrm d \xi\right)
 &=w_j^2 \left(s_j+
 \frac{b^{\ell}}{b-1}
 \left( 
 \int_{0}^{u} b^{\xi+1}\, \mathrm d \xi
 + \int_{u}^1 b^{\xi}\, \mathrm d \xi
 \right)\right)\\
 &=w_j^2 \left(s_j + \frac{b^{\ell}}{(b-1)\ln b}
 \bigl[b(b^{u}-1)+(b-b^{u})\bigr]\right)\\
 &=w_j^2 \left(s_j + \frac{b^{\ell+u}}{\ln b}\right)
  = w_j^2\ s_j\Bigl(1+\frac{1}{\ln b}\Bigr).
\end{align*}
Now, we fix a realization $\xi$ of $\Xi$ and compute
$\E[\Delta_{jk}|\Xi=\xi]$ for another job $k$ such that $s_j\leq s_k$.
If $q_j(\xi)=q_k(\xi)$, then
$\E[D_{jk}|\Xi=\xi]=w_j \frac{b^{q_j(\xi)+\xi}}{b-1}+\frac{1}{2}\cdot p_j$ and 
$\E[D_{kj}|\Xi=\xi]=w_k \frac{b^{q_j(\xi)+\xi}}{b-1}+\frac{1}{2}\cdot p_k$,
where the factors $\frac{1}{2}$ in front of $p_j$ and $p_k$ come from the fact that job $j$ is completed with probability $\frac{1}{2}$ before job $k$ due to the random permutation of the jobs. Otherwise, it is $q_j(\xi)<q_k(\xi)$ and we have 
$\E[D_{jk}|\Xi=\xi]=w_j \frac{b^{q_j(\xi)+\xi}}{b-1}+p_j$,
$\E[D_{kj}|\Xi=\xi]=w_k \frac{b^{q_j(\xi)+\xi}}{b-1}+w_k \frac{b^{q_j(\xi)+\xi}}{2}$. Putting all together,
\[
 \E[\Delta_{jk}|\Xi=\xi]=w_j w_k \cdot \left( 2\frac{b^{\ell+\mathds{1}_{\{\xi<u\}}+\xi}}{b-1} + \left\{
 \begin{array}{ll}
  \frac{s_j+s_k}{2} & \text{ if } q_j(\xi)=q_k(\xi)\\
  s_j+\frac{b^{\ell+\mathds{1}_{\{\xi<u\}}+\xi}}{2} & \text{ if } q_j(\xi)<q_k(\xi)
 \end{array}
 \right. \right).
\]
Let us first consider the case $s_k\geq b \cdot s_j$, as in this case $q_j(\xi)<q_k(\xi)$ for all $\xi \in (0,1)$. Thus,
\begin{align*}
 \E[\Delta_{jk}] &= w_j w_k \left(s_j + b^{\ell}\cdot \Big(\frac{2}{b-1}+\frac{1}{2}\Big)\cdot \int_{0}^1 b^{\xi + \mathds{1}_{\{\xi<u\}}}\, \mathrm d \xi\right)\\
 &= w_j w_k \left(s_j + b^{\ell}\cdot \Big(\frac{2}{b-1}+\frac{1}{2}\Big)\cdot \frac{b^{u} (b-1)}{\ln b}\right)\\
 &= w_j w_k s_j \biggl(1+\frac{1}{\ln b}\Bigl(2 + \frac{b-1}{2}\Bigr)\biggr) = w_j w_k s_j\cdot f(b)=w_j w_k s_j\cdot f\Bigl(\min\Bigl(b,\frac{s_k}{s_j}\Bigr)\Bigr).
\end{align*}
It remains to handle the case $s_j \leq s_k < b \cdot s_j$, in which it can occur that jobs $j$ and $k$ are completed in the same round. Let $\delta\in[0,1)$ such that
$s_k=s_j\cdot b^{\delta}=b^{\ell+u+\delta}$. To compute $\E[\Delta_{jk}]$, we have to distinguish between the cases $u+\delta \leq 1$ and $u+\delta > 1$. We only handle the former case, as the latter can be handled similarly and yields the same formula, so in the remainder of this proof we assume $u+\delta \leq 1$. Then,
it holds $q_k(\xi)=\ell+\mathds{1}_{\{\xi<u+\delta\}}$,
so that $(q_j(\xi),q_k(\xi))=(\ell+1,\ell+1)$ if $\xi\in [0,u)$,
$(q_j(\xi),q_k(\xi))=(\ell,\ell+1)$ if $\xi\in[u,u+\delta)$, and
$(q_j(\xi),q_k(\xi))=(\ell,\ell)$ if $\xi\in[u+\delta,1]$. Thus
we can write
\begin{align*}
 \E[\Delta_{jk}] &= w_jw_k \left(\int_{0}^u 2\frac{b^{\ell+1+\xi}}{b-1} +\frac{s_j+s_k}{2}\, \mathrm d \xi
 +\int_{u}^{u+\delta} b^{\ell+\xi}\Bigl(\frac{2}{b-1}+\frac{1}{2}\Bigr) +s_j\, \mathrm d \xi
 +\int_{u+\delta}^1 2\frac{b^{\ell+\xi}}{b-1} +\frac{s_j+s_k}{2}\, \mathrm d \xi\right)\\
 &= w_j w_k \left(\frac{s_j}{2}(1+\delta) + \frac{s_k}{2} (1-\delta)
 +\frac{b^\ell}{\log b}\Bigl[\textstyle\frac{2}{b-1}(b^{u+1}-b)+\Big(\frac{2}{b-1}+\frac{1}{2}\Big)(b^{u+\delta}-b^u)+\frac{2}{b-1}(b-b^{u+\delta})\Bigr]\right)\\
 &= w_j w_k \left(\frac{s_j}{2}(1+\delta) + \frac{s_k}{2} (1-\delta)
 +\frac{b^\ell}{\log b}\bigg[b^u \Big(2+\frac{b^{\delta}-1}{2}\Big)\bigg]\right)\\
 &= w_j w_k s_j \bigg(\frac{1+\delta}{2}+b^\delta\frac{1-\delta}{2}+\frac{1}{\log b} \Big(2+\frac{b^{\delta}-1}{2}\Big)\bigg)\\
 &= w_j w_k s_j \cdot f(b^\delta) =w_j w_k s_j \cdot f\Bigl(\frac{s_k}{s_j}\Bigr) =w_j w_k s_j\cdot f\Bigl(\min\Bigl(b,\frac{s_k}{s_j}\Bigr)\Bigr).\qedhere
\end{align*}
\end{proof}

The expressions of $\E[\Delta_{jk}]$ derived in the previous lemma show that $\ralg(\alpha \vec{p},\vec{w})=\alpha \ralg(\vec{p},\vec{w})$
 holds for all $\alpha>0$.
 Since the same trivially holds for the optimal solution, i.e., 
 $\opt(\alpha \vec{p},\vec{w})=\alpha \opt(\vec{p},\vec{w})$
 we can assume without loss of generality that the instance has been rescaled, so that $\min_{j\in[n]} s_j=1$. Moreover, we relabel the jobs so that $s_1 \le \cdots \le s_n$. Then, summing the bounds from the previous \lcnamecref{lem:bound_f} yields
 \begin{equation}\label{eq:def_U_rand}
  \ralg(\vec p, \vec w) \le \sum_{1 \le j \le k \le n} w_j w_k s_j \cdot f\biggl(\min\Bigl(b, \frac{s_k}{s_j}\Bigr)\biggr) \eqqcolon U(\vec p, \vec w).
 \end{equation}

In order to obtain an upper bound, we use a similar technique as in \cref{subsec:Analysis_Alg_det}. However, the proof is more involved because the bound from \cref{lem:bound_f} is not piecewise linear in $s_j$ and $s_k$, so we cannot construct a worse instance $(\vec{p}',\vec{w})$ in which all Smith ratios are integer powers of $b$. Instead, we are going to subdivide the Smith ratios in intervals of the form $[b^{i/K},b^{(i+1)/K})$ for some integer $K$, and we will get a bound by grouping all jobs in an interval. As in \cref{thm:upperbound}, this bound involves a ratio of two quadratic forms (\cref{lem:ratio_of_quads_rand}), but this time the maximization of this fraction amounts to finding the maximum eigenvalue of a banded Toeplitz matrix of bandwidth $2K-1$.

Let $K \in \N_{>0}$ and $\beta \coloneqq b^{\frac{1}{K}}$. 
For $L \in \N$ define the symmetric matrices $\vec A_L \coloneqq (\frac 1 2 a_{|m-\ell|} \beta^{\min(\ell,m)})_{0 \le \ell, m \le L}$, where $a_i \coloneqq f(\beta^{\min(K,i+1)}) - f(\beta)$ for $i \in \{0,\dotsc,L\}$, and let $\vec B_L \coloneqq (\frac 1 2 \beta^{\min(\ell,m)})_{0 \le \ell, m \le L}$.

\begin{lemma} \label{lem:ratio_of_quads_rand}
 For any instance ($\vec p, \vec w)$ there exists $L \in \N$ and a vector $\vec x \in \R^{\{0,\dotsc,L\}}$ such that
 \[\frac{\ralg(\vec p,\vec w)}{\opt(\vec p, \vec w)} \le \beta \biggl(f(\beta) + \frac{\vec x^\top \vec A_L \vec x}{\vec x^\top \vec B_L \vec x}\biggr).\]
\end{lemma}

\begin{proof}
 Let $L$ be a multiple of $K!$ such that $s_j \le \beta^L = b^{L/K}$ for all $j \in [n]$. For all $\ell \in \{0,\dotsc,L\}$ we define
 $J_\ell \coloneqq \{j \in [n] \mid \beta^{\ell}\leq s_j < \beta^{\ell+1}\}$, $x_\ell \coloneqq \sum_{j\in J_\ell} w_j$,
 and $y_\ell \coloneqq \sum_{j\in J_\ell} w_j^2$. Then $[n] = J_0 \cup \dotsb \cup J_L$. We obtain as a lower bound on the optimal cost
 \begin{align*}
  \opt(\vec p, \vec w) = \sum_{j=1}^n p_j \sum_{k=j}^n w_{k} &= \sum_{\ell=0}^L \sum_{j \in J_\ell} w_j s_j \biggl(\sum_{\substack{k \in J_\ell \\ k \ge j}} w_{k} + \sum_{m=\ell+1}^L \sum_{k \in J_m} w_k\biggr) \\
  &\ge \sum_{\ell=0}^L \beta^\ell \biggl(\sum_{\substack{j, k \in J_\ell \\ j \le k} } w_j w_{k} + \sum_{m=\ell+1}^L \sum_{j \in J_\ell} w_j \sum_{k \in J_m} w_{k}\biggr) \\&= \sum_{\ell=0}^L \beta^\ell \biggl(\frac 1 2 y_\ell + \frac 1 2 x_\ell^2 + \sum_{m=\ell+1}^L x_\ell x_m \biggr) = \sum_{\ell=0}^L \frac 1 2 \beta^\ell y_\ell + \vec x^\top \vec B_L \vec x.
 \end{align*}
 
 On the other hand, using \eqref{eq:def_U_rand}, we compute
 \begin{align*}
		\ralg(\vec p, \vec w)&\le U(\vec p, \vec w) = \ \sum_{1 \le j \le k \le n} w_j w_k s_j f\biggl(\min\Bigl(b, \frac{s_k}{s_j}\Bigr)\biggr) \\
		&\leq\ \sum_{\ell=0}^L \sum_{j\in J_{\ell}}\biggl(w_j^2 s_j \underbrace{f(1)}_{\leq f(\beta)} + \sum_{\substack{k\in J_{\ell}:\\ k> j}}w_jw_k s_j f\biggl(\min\Bigl( b,\underbrace{\frac{s_k}{s_j}}_{\mathclap{\leq \beta}}\Bigr)\biggr) \\
		& \ \quad \quad \quad \quad \quad + \sum_{m=\ell+1}^L \sum_{k\in J_m} w_jw_ks_j f\biggl(\min\Bigl( b,\frac{s_k}{s_j}\Bigr)\biggr) \biggr) \\
		&\leq\  \sum_{\ell=0}^L \sum_{j\in J_{\ell}}\bigg(\sum_{\substack{k\in J_{\ell}:\\ k \ge j}}w_jw_k\beta^{\ell+1} f(\beta) + \sum_{m=\ell+1}^L \sum_{k\in J_m} w_jw_k\beta^{\ell+1} f\bigl(\beta^{\min(K,m-\ell+1)}\bigr) \bigg) \\
		& =\ \beta \sum_{\ell=0}^L \biggl( f(\beta) \beta^\ell \Bigl(\frac 1 2 y_{\ell} + \frac{1}{2} x_{\ell}^2\Bigr) + \sum_{m=\ell+1}^L x_{\ell}x_m\beta^{\ell}f\big(\beta^{\min(K,m-\ell+1)}\big)\biggr) \\
		&\le\ \beta \biggl( f(\beta)\cdot \opt(I) + \sum_{\ell=0}^L \sum_{m=\ell+1}^L x_{\ell}x_m\beta^{\ell}\bigl(f\bigl(\beta^{\min(K,m-\ell+1)}\bigr)-f(\beta)\bigr)\biggr) \\
	 & =\ \beta \bigl( f(\beta)\cdot\opt(I) + \vec x^\top \vec A_L \vec x \bigr) 
		.
	\end{align*}
 Consequently,
\[\frac{\ralg(\vec p, \vec w)}{\opt(\vec p, \vec w)} \le \beta \biggl(f(\beta) + \frac{\vec x^\top \vec A_L \vec x}{\vec x^\top \vec B_L \vec x + \frac 1 2 \sum_{\ell=0}^L \beta^\ell y_\ell}\biggr) \le \beta \biggl(f(\beta) + \frac{\vec x^\top \vec A_L \vec x}{\vec x^\top \vec B_L \vec x}\biggr). \qedhere\]
\end{proof}

We next prove the main result of this section.

\RandALGUB

\begin{proof}
 By \cref{lem:ratio_of_quads_rand} we have for every $K \in \N_{>1}$ and $\beta = b^{\frac 1 K}$ that
 \begin{equation} \label{ineq:upper_bound_rand}
  \sup_{\vec p, \vec w \in \R_{\ge 0}^n} \frac{\ralg(\vec p, \vec w)}{\opt(\vec p, \vec w)} \le \beta \biggl(f(\beta) + \sup_{L \in \N} \sup_{\vec x \in \R^{\{0,\dotsc,L\}}} \frac{\vec x^\top \vec A_L \vec x}{\vec x^\top \vec B_L \vec x}\biggr).
 \end{equation}
 For now let $K \in \N_{>1}$ be fixed. Similarly to the proof of \cref{thm:upperbound}, \cref{supremum ratio} yields that the inner supremum is
 \[\sup_{\vec x \in \R^{\{0,\dotsc,L\}}} \frac{\vec x^\top \vec A_L \vec x}{\vec x^\top \vec B_L \vec x} = \lambda_{\max}(\vec Z_L),\]
for the matrix
\[
 \vec{Z}_L = \left(\begin{array}{c|c}
            0   & \vec{u}^\top\ \  \vec{0}^\top\\\hline
            \begin{matrix}\vec{u}\\\vec{0}\end{matrix} & \vec{T}_L
           \end{array}\right),
\]
where $\vec{u}\in\mathbb{R}^{K-1}$ has coordinates 
$u_i=\frac{f(\beta^{i+1})-f(\beta^i)}{\sqrt{\beta^i-\beta^{i-1}}}$, $i=1,\ldots,K-1$, and $\vec{T}_L$ is the $L\times L$ banded symmetric Toeplitz matrix of bandwidth $2(K-1)+1$ with elements
\begin{equation}\label{tk_f}
 t_k \coloneqq \left\{\begin{array}{ll}
            \frac{-2}{\beta-1} [f(\beta^2)-f(\beta)] & \text{ if }k=0,\\
            \frac{1}{\beta^{k/2}(\beta-1)} [(\beta+1)f(\beta^{k+1})-f(\beta^{k+2})-\beta f(\beta^k)] & \text{ if }1\leq k\leq K-2,\\
            \frac{1}{\beta^{(K-3)/2}(\beta-1)}[f(\beta^K)-f(\beta^{K-1})] & \text{ if }k= K-1.
               \end{array}
\right.
\end{equation}
on its $k$th and $-k$th superdiagonals (where for $k\in\mathbb{Z}$, the $k$th superdiagonal is the set of coordinates $(\vec T_L)_{\ell,m}$ such that $m-\ell=k$; in particular, the $0$th superdiagonal corresponds to the main diagonal).
Substitution of $f(\beta^k)$ with its value $\frac 1 2 \bigl(1+\frac k K + \beta^k \bigl(1 - \frac k K\bigr) + \frac{\beta^k+3}{\ln b}\bigr)$ yields the following simplified expression for $t_k$:
\begin{equation}\label{tk_simpl}
 t_k = \left\{\begin{array}{ll}
            \frac{1+2\beta}{K}-\beta(1+\frac{1}{\ln b}) & \text{ if }k=0,\\
            \frac{1+\beta^{1+k}}{2K\beta^{k/2}} & \text{ if }1\leq k\leq K-2,\\
            \frac{1}{2\beta^{(K-1)/2}} (\frac{b}{\ln b}-\frac{b-\beta}{K(\beta-1)}) & \text{ if }k= K-1.
               \end{array}
\right.
\end{equation}

We next show that $\lambda_{\max}(\vec{Z}_L)\leq t_0 + 2\sum_{k=1}^{K-1} t_k$. To this end, we form the matrix 
\[\vec{H} \coloneqq \biggl(t_0 + 2\sum_{k=1}^{K-1} t_k \biggr) \vec{I} - \vec{Z}_L\] and prove that this matrix is positive semidefinite.
For every $i \in [K-1]$ the sum~$\sum_{k=i}^{K-1} \frac{t_k}{\beta^{k/2}}$ is a telescoping sum, which sums up to $\frac{f(\beta^{i+1}) - f(\beta^i)}{\beta^i - \beta^{i-1}} = \frac{u_i}{\sqrt{\beta^i-\beta^{i-1}}}$. In particular, $\sum_{k=1}^{K-1} \frac{t_k}{\beta^{k/2}} = \frac{f(\beta^2)-f(\beta)}{\beta-1}$, so that $t_0 = -2 \sum_{k=1}^{K-1} \frac{t_k}{\beta^{k/2}}$. Therefore, we can rewrite $\vec{H}$ as a linear combination of $t_1,\ldots,t_{K-1}$:
\[
 \vec{H} = \sum_{k=1}^{K-1} t_k \vec{H}_k,\qquad \text{where}\quad
 \vec{H}_k\coloneqq \left(\begin{array}{c|c}
            2\bigl(1-\frac{1}{\beta^{k/2}}\bigr)   & \vec{v}_k^\top\ \  \vec{0}^\top\\\hline
            \begin{matrix}\vec{v}_k\\\vec{0}\end{matrix} & \vec{T}_{k,L}
           \end{array}\right),
\]
$\vec{v}_k\in \R^k$ is a vector with coordinates $(\vec{v}_k)_i= -\frac{\sqrt{\beta-1}}{\beta^{(k-i+1)/2}}$, ($i=1,\ldots,k$),
 and $\vec{T}_{k,L}$ is the sparse symmetric Toeplitz matrix of size $L\times L$ whose only non-zero elements are $2$ on the main diagonal and $-1$ on the $k$th and $-k$th superdiagonals, i.e., $(\vec{T}_{k,L})_{ij}= 2\cdot \mathds{1}_{\{i=j\}} - \mathds{1}_{\{|i-j|=k\}}$. To show that $\vec{H}$ is positive semidefinite, it suffices to show that $t_k\geq 0$ and $\vec{H}_k$ is positive semidefinite for all $k\in[K-1]$.

Let $k\in[K-1]$. If $k=K-1$, then $t_{k}\geq 0$ follows from~\eqref{tk_f} and the fact that $f$ is non-decreasing over $[1,b]$. Otherwise,
this inequality follows from \cref{tk_simpl}. Next, we use \cref{lem:generalized_Toeplitz_pos_semidefinite} to show that $\vec{H}_k$ is positive semidefinite. This is possible because $k\mid L$, as $L$ is a multiple of $K!$, and
\[
 2\left(1-\frac{1}{\beta^{k/2}}\right) - \|\vec{v}_k\|^2
 =  2\left(1-\frac{1}{\beta^{k/2}}\right) 
 - \sum_{i=1}^k \frac{\beta-1}{\beta^{k-i+1}} 
 =  2\left(1-\frac{1}{\beta^{k/2}}\right) 
 - 1+\frac{1}{\beta^k}=(\beta^{k/2}-1)^2 \beta^{-k} \geq 0.
\]
This concludes the proof that all $\vec{H}_k$ are positive semidefinite, hence $\lambda_{\max}(\vec{Z})\leq t_0+2\sum_{k=1}^{K-1} t_k$. Together, we have shown that
\[\frac{\ralg}{\opt} \le \beta \biggl(f(\beta) + t_0 + 2 \sum_{k=1}^{K-1} t_k\biggr)\]
for every $K \in \N$. 

In the final part of the proof we show that the right-hand side converges to $\frac{2b + \sqrt b- 1}{\sqrt b \ln b}$ for $K \to \infty$.
To this end, let us now compute the sum 
\begin{align*}
\sum_{k=1}^{K-2} t_k=\frac{1}{2K}\bigg(\sum_{k=1}^{K-2} \beta^{-k/2}+\beta\sum_{k=1}^{K-2} \beta^{k/2}\bigg)
&=\frac{1}{2K}\left(\frac{\beta^{-(K-1)/2}-\beta^{-1/2}}{\beta^{-1/2}-1}+\beta \frac{\beta^{(K-1)/2}-\beta^{1/2}}{\beta^{1/2}-1}\right) \\
&=\frac{1}{2K}\left(\frac{b^{-\frac{K-1}{2K}}-b^{-\frac{1}{2K}}}{b^{-\frac{1}{2K}}-1}+ \frac{b^{\frac{K+1}{2K}}-b^{\frac{3}{2K}}}{b^{\frac{1}{2K}}-1}\right) \\
&\xrightarrow{K\to\infty} \frac{b^{-\frac12}-1}{-\ln b}+
\frac{b^{\frac12}-1}{\ln b}=\frac{b-1}{\sqrt{b}\ln b}.
\end{align*}
Moreover, we have $t_0\xrightarrow{K\to\infty}-(1+\frac{1}{\ln b})$ and $t_{K-1}=\frac12 b^{-\frac{K-1}{2K}}\bigl(\frac{b}{\ln b}-\frac{b-b^{\frac1K}}{K(b^{\frac1K}-1)}\bigr)
\xrightarrow{K\to\infty} \frac{1}{2\sqrt{b}}(\frac{b}{\ln b}-\frac{b-1}{\ln b})=\frac{1}{2\sqrt{b}\ln b}$.
By using $\beta\xrightarrow{K\to\infty} 1$
and $f(\beta)\xrightarrow{K\to\infty} f(1)=1+\frac{2}{\ln b}$, we obtain the final bound by taking the limit when $K\to\infty$:
\begin{align*}
 \frac{\ralg}{\opt}\leq \lim_{K\to\infty} \beta\biggl(f(\beta)+t_0+2\sum_{k=1}^{K-1} t_k\biggr)
 &=1+\frac{2}{\ln{b}}
 -\Bigl(1+\frac{1}{\ln{b}}\Bigr)+\frac{2(b-1)}{\sqrt{b}\ln b}+\frac{1}{\sqrt{b}\ln b}\\
 &= \frac{\sqrt{b}+2b-1}{\sqrt{b}\ln b}.
\end{align*}
A numerical minimization yields an optimal value of $b \approx 8.16$ with a performance guarantee smaller than $3.032$.
\end{proof}

We next show that our analysis is tight.

\TheoLBRand

\begin{proof}
 For $K \in \N$ let $\beta = b^{1/K}$, and for $L \in \N$ let $\vec A_L' \coloneqq (\frac 1 2 a_{|m-\ell|}' \beta^{\min(\ell,m)})_{0 \le \ell, m \le L}$, where $a_i' \coloneqq f(\beta^{\min(K,i)}) - f(1)$ for $i \in \{0,\dotsc,L\}$. Moreover let $\vec B_L = \vec Y_L^\top \vec Y_L$ be the Cholesky decomposition of $\vec B_L$. By \cref{supremum ratio}, we have
 \[\vec Y_L^{-\top} \vec A'_L \vec Y_L^{-1} = \left(\begin{array}{c|c}
  0   & (\vec{u}')^\top\ \  \vec{0}^\top\\\hline
  \begin{matrix}\vec{u}'\\\vec{0}\end{matrix} & \vec{T}_L'
 \end{array}\right) \eqqcolon \vec Z_L'
\]
for some $\vec{u}'\in\mathbb{R}^{K}$, where $\vec{T}'_L$ is the $L\times L$ banded symmetric Toeplitz matrix of bandwidth $2K+1$ with elements
\begin{equation}
 t_k' \coloneqq \left\{\begin{array}{lll}
  \frac{-2}{\beta-1} [f(\beta)-f(1)] 
  &=\frac1K-(1+\frac{1}{\ln b})
  & \text{ if }k=0,\\
  \frac{1}{\beta^{k/2}(\beta-1)} [(\beta+1)f(\beta^{k})-\beta f(\beta^{k-1})-f(\beta^{k+1})]
  &=\frac{1+\beta^k}{2K\beta^{k/2}}
  & \text{ if }1\leq k\leq K-1,\\
  \frac{1}{\beta^{K/2-1}(\beta-1)}[f(\beta^K)-f(\beta^{K-1})] 
  & =\frac{1}{2\beta^{K/2}} (\frac{b}{\ln b}-\frac{b-\beta}{K(\beta-1)})
  & \text{ if }k= K.
     \end{array}
\right.
\end{equation}
on the $k$th and $-k$th superdiagonals.
As in the proof of
 \cref{thm:ALG_LB} for the deterministic version of the strategy, define 
 $\vec{z}_L = (z_\ell^L)_{0 \le \ell \le L}$ with $z_{\ell}^{(L)} \coloneqq \sqrt{\frac{2}{L+1}} \cdot \sin \big( \frac{\ell \pi}{L+1} \big)$ for $\ell=0,\dotsc,L$,
 and 
 $\vec{x}_L \coloneqq \vec{Y}_L^{-1} \vec{z}_L \in \R^{\{0,\dotsc,L\}}$. By construction,
 \begin{equation}
  \vec{x}_L^\top \vec{A}'_L \vec{x}_L
  =
  \vec{z}_L^\top \vec{Z}'_L \vec{z}_L
  =
  \tilde{\vec{z}}^\top \vec{T}_L' \tilde{\vec{z}},
  \label{eq:qudratic_Aprime}
 \end{equation}
 where $\tilde{\vec{z}}\coloneqq [z_1,z_2,\ldots,z_{L}]^\top\in\R^L$ and we have used the fact that $z_0=0$ for the last equality. Furthermore, 
 \begin{equation}
  \vec x_L^\top \vec B_L \vec x_L = \lVert \vec z_L \rVert = 1. \label{eq:quadratic_B}
 \end{equation}
 Unlike the proof of \cref{thm:ALG_LB} however, it is not true anymore that $\tilde{\vec{z}}$ is the
 eigenvector corresponding to the largest eigenvalue of $\vec{T}'_L$ because $\vec{T}'_L$ is not tridiagonal.

 \Cref{lem:Cholesky,lem:ellstar} imply that there is an $\ell^* \in \N_{>0}$ such that $x_\ell^{(L)} \ge 0$ for all $L \ge \ell \ge \ell^*$. Therefore, for $L \ge \ell^*$ the vector $\vec{n}_L = (n^{(L)}_{\ell})_{\ell=0}^L$ with $n_\ell^{(L)} \coloneqq 0$ for $\ell < \ell^*$ and $n^{(L)}_\ell \coloneqq \lfloor \beta^L x^{(L)}_\ell \rfloor$ for $\ell \ge \ell^*$ is a non-negative integer vector, defining the instance~$\vec p_L$ that consists of $n_\ell^{(L)}$ jobs with processing time $\beta^\ell$ for $\ell = 0,\dotsc,L$.

 For such an instance, we have
 \begin{equation}
  \opt(\vec{p}_L)= \sum_{\ell=0}^L \beta^\ell \biggl(
  \frac{n_\ell^{(L)} (n_\ell^{(L)}+1)}{2}  +
  \sum_{m=\ell+1}^L n_\ell^{(L)} n_m^{(L)} \biggr)
  =\sum_{\ell=0}^L \frac{n_\ell^{(L)}}{2} \beta^\ell + \vec{n}_L^\top \vec{B}_L \vec{n}_L \label{eq:OPT_randomized}
 \end{equation}
 and
 \[
  \ralg(\vec{p}_L)=\sum_{\ell=0}^L \Bigl(\frac{n_\ell^{(L)} (n_\ell^{(L)}+1)}{2} \beta^\ell f(1) -\frac{n_\ell^{(L)} \beta^\ell}{\ln b}\Bigr) + \sum_{\ell=0}^L \sum_{m=\ell+1}^L n_\ell^{(L)} n_m^{(L)} \beta^\ell f(\beta^{\min(m-\ell,K)}).
 \]
 Then, proceeding similarly as in the proof of \cref{lem:ratio_of_quads}, we obtain
 \begin{equation}\begin{split}
 \ralg(\vec{p}_L)
 &= f(1)\cdot \opt(\vec{p}_L)
 - \sum_{\ell=0}^L \frac{n_\ell^{(L)} \beta^\ell}{\ln b} 
 + \sum_{\ell=0}^L \sum_{m=\ell+1}^L n_\ell^{(L)} n_m^{(L)} \beta^\ell \bigl(f(\beta^{\min(K, m-\ell)}) - f(1)\bigr) \\
 &= f(1) \cdot \opt(\vec{p}_L) 
 -\sum_{\ell=0}^L \frac{n_\ell^{(L)} \beta^\ell}{\ln b} + \vec{n}_L^\top \vec{A}'_L \vec{n}_L.
 \end{split}\label{eq:ralg}\end{equation}
 Therefore, 
 \begin{align*}
  \frac{\ralg(\vec p_L)}{\opt(\vec p_L)} &\;\stackrel{\eqref{eq:ralg}}= \; f(1) + \frac{\vec n_L^\top \vec A_L' \vec n_L - \sum_{\ell=0}^L \frac{n_\ell^{(L)}}{\ln b} \beta^\ell}{\opt(\vec p_L)} \\ 
  &\;\stackrel{\eqref{eq:OPT_randomized}}= \; f(1) + \frac{\vec n_L^\top \vec A_L' \vec n_L - \sum_{\ell=0}^L \frac{n_\ell^{(L)}}{\ln b} \beta^\ell}{\vec n_L^\top \vec B_L \vec n_L + \sum_{\ell=0}^L \frac{n_\ell^{(L)}}{2} \beta^\ell} 
  = f(1) + \frac{\beta^{-2L} \vec n_L^\top \vec A_L' \vec n_L - \sum_{\ell=0}^L \frac{n_\ell^{(L)}}{\ln b} \beta^{\ell-2L}}{\beta^{-2L} \vec n_L^\top \vec B_L \vec n_L + \sum_{\ell=0}^L \frac{n_\ell^{(L)}}{2} \beta^{\ell-2L}} \\ 
  &\;\stackrel{\mathclap{\eqref{eq:qudratic_Aprime},}\hphantom{\eqref{eq:qudratic_Aprime}}\mathclap{\eqref{eq:quadratic_B}}}= \; f(1) + \frac{\tilde{\vec z}_L^\top \vec T'_L \tilde{\vec z}_L + \bigl(\frac{\vec n_L}{\beta^L} - \vec x_L\bigr)^\top \vec A_L' \bigl(\frac{\vec n_L}{\beta^L} - \vec x_L\bigr) - \sum_{\ell=0}^L \frac{n_\ell^{(L)}}{\ln b} \beta^{\ell-2L}}{1 + \bigl(\frac{\vec n_L}{\beta^L} - \vec x_L\bigr)^\top \vec B_L \bigl(\frac{\vec n_L}{\beta^\ell} - \vec x_L\bigr) + \sum_{\ell=0}^L \frac{n_\ell^{(L)}}{2} \beta^{\ell-2L}}.
 \end{align*}

 As in the proof of \cref{thm:ALG_LB}, we compute the limits of the occurring terms for $L \to \infty$.
 \[0 \le \sum_{\ell=0}^L n_\ell^{(L)} \beta^{\ell-2L} \le \sum_{\ell=0}^L n_\ell^{(L)} \beta^{-L} = \frac{\sum_{\ell=\ell^*}^L \lfloor \beta^L x_\ell^{(L)} \rfloor}{\beta^L} \le \sum_{\ell=\ell^*}^L x_\ell^{(L)} \xrightarrow{L \to \infty} 0,\]
 where the convergence follows from \cref{lem:ellstar}. Thus, we see that the last summands of numerator and denominator go towards zero. Next, using that for $\ell \neq m$ the absolute value of the entry of $\vec A_L'$ indexed by $\ell$ and $m$ is bounded by $\frac 1 2 f(b) \beta^{\min(\ell, m)}$, exactly the same calculation as in the proof of \cref{thm:ALG_LB} shows that
 \[\Bigl(\frac{\vec n_L}{\beta^L} - \vec x_L\Bigr)^\top \vec A_L' \Bigl(\frac{\vec n_L}{\beta^L} - \vec x_L\Bigr) \xrightarrow{L \to \infty} 0 \quad\text{ and, analogously, }\quad \Bigl(\frac{\vec n_L}{\beta^L} - \vec x_L\Bigr)^\top \vec B_L \Bigl(\frac{\vec n_L}{\beta^L} - \vec x_L\Bigr) \xrightarrow{L \to \infty} 0.\]
 We are going to show that 
 $\tilde{\vec{z}}_L^\top \vec{T}'_L \tilde{\vec{z}}_L$
 converges to $t_0'+2\sum_{k=1}^K t_k'$ as the dimension $L$ grows to $\infty$. For this, define the function $\Phi \colon \theta \mapsto t_0' + 2\sum_{k=1}^K t_k'\cos(kx)$, which is the Fourier Series associated with the Toeplitz matrix $\vec{T}'_L$. We claim that for all $\ell\in[L]$, it holds
 \[
 (\vec{T}'_L\tilde{\vec{z}}_L)_\ell \geq \tilde{z}_\ell^{(L)} \cdot \Phi\Bigl(\frac{\pi}{L+1}\Bigr). 
 \]
 To see this, we first extend the definition of 
 $\tilde{z}_{\ell}^{(L)} =
 \sqrt{\frac{2}{L+1}} \sin \big( \frac{\ell \pi}{L+1} \big)$ to all $\ell\in\{1-K,\ldots,L+K\}$, and we observe that 
 for all $\ell\in[L]$ and $k\in[K]$ we have
 \begin{align*}
  \tilde{z}_{\ell-k}^{(L)} + \tilde{z}_{\ell+k}^{(L)} &= 
  \sqrt{\frac{2}{L+1}} \left[ \sin \Bigl( \frac{(\ell-k) \pi}{L+1} \Bigr) + \sin \Bigl( \frac{(\ell+k) \pi}{L+1} \Bigr)\right] \\
  &= \sqrt{\frac{2}{L+1}} \cdot 
  2\cdot \sin \Bigl( \frac{\ell \pi}{L+1} \Bigr)
  \cdot
  \cos \Bigl( \frac{k \pi}{L+1} \Bigr)
  =2\tilde{z}_\ell^{(L)} \cdot \cos \Bigl( \frac{k \pi}{L+1} \Bigr). 
 \end{align*}
 Then, we use the fact that $\tilde{z}_\ell\leq 0$ for all $\ell \in\{1-K,\ldots,0\} \cup \{L+1,\ldots,L+K\}$, so we have 
 \begin{align*}
  (\vec{T}'_L \tilde{\vec{z}}_L)_\ell &= t_0' \tilde{z}_\ell^{(L)} + \sum_{k=1}^K t_k' (\tilde{z}_{\ell-k}^{(L)} \mathds{1}_{\{\ell-k\geq 1\}}+\tilde{z}_{\ell+k}^{(L)} \mathds{1}_{\{\ell+k\leq L\}})\\
  &\geq 
  t_0' \tilde{z}_\ell^{(L)} + \sum_{k=1}^K t_k' (\tilde{z}_{\ell-k}^{(L)} +\tilde{z}_{\ell+k}^{(L)})\\
  &=\tilde{z}_\ell^{(L)} \cdot \biggl(t_0'+2\sum_{k=1}^K t_k'\cos\Bigl(\frac{k\pi}{L+1}\Bigr)\biggr)=\tilde{z}_\ell^{(L)} \cdot \Phi\Bigl(\frac{\pi}{L+1}\Bigr).
 \end{align*}
 Consequently, using the fact that $\tilde{z}_\ell^{(L)}$ is positive for all $\ell\in [L]$ we obtain 
 \[
  \tilde{\vec{z}}_L^\top \vec{T}'_L \tilde{\vec{z}}_L \geq \|\tilde{\vec{z}}_L\|^2 \cdot \Phi\Bigl(\frac{\pi}{L+1}\Bigr)
  =\Phi\Bigl(\frac{\pi}{L+1}\Bigr) \xrightarrow{L\to\infty} \Phi(0)=t_0'+2\sum_{k=1}^K t_k'.
 \]

 We have thus constructed for every $K \in \N$ a sequence~$\vec p_L$ for which the competitive ratio converges to $t_0' + \sum_{k=1}^K t_k'$. Further, similar calculations as in the proof of \cref{thm:Rand_ALG_UB} show that
 $t_0'+2\sum_{k=1}^K t_k'\xrightarrow{K\to\infty} -1-\frac{1}{\ln b}+\frac{2b-1}{\sqrt{b}\ln b}$. By \cref{lem:double limit} there is a sequence of problem instances 
 for which the competitive ratio of $\ralg$ converges to the desired value of \[f(1) -1-\frac{1}{\ln b}+\frac{2b-1}{\sqrt{b}\ln b} = \frac{\sqrt{b}+2b-1}{\sqrt{b}\ln b}. \qedhere\]
\end{proof}


\section{Weighted Shortest Elapsed Time First} \label{sec:WSETF}

In this \lcnamecref{sec:WSETF} we consider the online time model, where each job~$j$ arrives at its release date~$r_j$ and is not known before that time. Thus, an instance for our problem is now given by a triple $I = (\vec p, \vec w, \vec r)$ of processing times, weights, and release dates of all jobs. We consider the classical Weighted Shortest Elapsed Time First (\wsetf) rule for this model. Intuitively, \wsetf{} is the limit for $\varepsilon \to 0$ of the algorithm that divides the time into time slices of length~$\varepsilon$ and in each time slice processes a job with minimum ratio of elapsed processing time over weight. To formalize this limit process we allow fractional schedules~$\mathrm S$ that, at every point in time~$t$, assign each job~$j$ a rate~$y_j^{\mathrm S}(t) \in [0,1]$ such that $\sum_{j=1}^n y_j^{\mathrm S}(t) \le 1$ for all $t \in \R_{\ge 0}$ and $y_j^{\mathrm S}(t) = 0$ if $t < r_j$ or $t > C_j^{\mathrm S}(I)$, where $C_j^{\mathrm S}(I)$ is the smallest $t$ such that $Y_j^{\mathrm S}(I, t) \coloneqq \int_0^t y_j^{\mathrm S}(s)\,\mathrm d s \ge p_j$ (this requires $y_j^{\mathrm S}$ to be measurable).
At any time~$t$ let $J(t)$ be the set of all released and unfinished jobs, and let $A(t)$ be the set of all jobs from $J(t)$ that currently have minimum ratio of elapsed time over weight. Then \wsetf{} sets the rate for all jobs~$j \in A(t)$ to
\[y_j^{\wsetf}(t) \coloneqq \begin{cases*} w_j/\bigl(\sum_{k \in A(t)} w_k\bigr) &if $j \in A(t)$,\\ 0 &else. \end{cases*}\]
In other words, \wsetf{} always distributes the available processor rate among the jobs in $J(t)$ so as to maximize $\min_{j \in J(t)} Y_j^{\wsetf}(I, t)/w_j$. An example is given in \cref{fig:sec5_example_release_dates}.

The following \lcnamecref{thm:WSETF} gives the tight competitive ratio of \wsetf{} for non-clairvoyant online scheduling on a single machine.

\WSETF

We start by collecting some simple properties of the schedule created by \wsetf. 

\begin{lemma} \label{Order WSETF}
 Consider an instance $I = (\vec p, \vec w, \vec r)$, and let $j, k$ be two jobs with $r_k < C_j^{\wsetf}(I)$ and $p_k/w_k \le p_j/w_j$. Then $C_j^{\wsetf}(I) \ge C_k^{\wsetf}(I)$.
\end{lemma}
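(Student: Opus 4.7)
The plan is to argue by contradiction: assume $C_k^{\wsetf}(I) > C_j^{\wsetf}(I)$, and derive a contradiction with the hypothesis $p_k/w_k \le p_j/w_j$. The core inequality I will aim to establish is
\[
\frac{Y_j^{\wsetf}\bigl(I, C_j^{\wsetf}(I)\bigr)}{w_j} \le \frac{Y_k^{\wsetf}\bigl(I, C_j^{\wsetf}(I)\bigr)}{w_k},
\]
i.e., that at the moment $j$ completes, $k$'s weighted elapsed processing time is at least as large as $j$'s. Combined with $Y_j^{\wsetf}(I, C_j^{\wsetf}(I)) = p_j$ and the assumption $p_j/w_j \ge p_k/w_k$, this immediately yields $Y_k^{\wsetf}(I, C_j^{\wsetf}(I)) \ge p_k$, contradicting the assumption that $k$ has not completed by time $C_j^{\wsetf}(I)$.

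To prove the inequality, I will produce a sequence of times $t_n \nearrow C_j^{\wsetf}(I)$ at which $j$ is actively being processed, i.e., $y_j^{\wsetf}(t_n) > 0$. Such a sequence exists because $Y_j^{\wsetf}$ is absolutely continuous with $Y_j^{\wsetf}(t) < p_j$ for $t < C_j^{\wsetf}(I)$ and $Y_j^{\wsetf}(C_j^{\wsetf}(I)) = p_j$, so for every $\epsilon > 0$ the set $\{t \in (C_j^{\wsetf}(I) - \epsilon, C_j^{\wsetf}(I)) : y_j^{\wsetf}(t) > 0\}$ has positive Lebesgue measure---otherwise $Y_j^{\wsetf}$ would be constant on a left-neighbourhood of $C_j^{\wsetf}(I)$. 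Since $r_k < C_j^{\wsetf}(I)$ by assumption, and $t_n < C_j^{\wsetf}(I) < C_k^{\wsetf}(I)$, we may pick $t_n$ close enough to $C_j^{\wsetf}(I)$ so that both $j$ and $k$ belong to $J(t_n)$. At such a time, $y_j^{\wsetf}(t_n) > 0$ forces $j \in A(t_n)$ by the very definition of \wsetf, so $Y_j^{\wsetf}(t_n)/w_j = \min_{i \in J(t_n)} Y_i^{\wsetf}(t_n)/w_i \le Y_k^{\wsetf}(t_n)/w_k$. Passing $n \to \infty$ and invoking the continuity of the (absolutely continuous) functions $Y_j^{\wsetf}$ and $Y_k^{\wsetf}$ delivers the desired inequality at time $C_j^{\wsetf}(I)$.

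The main technical point is careful handling of the $y^{\wsetf}$ rates, which are only defined almost everywhere: I avoid pointwise statements about $y_j^{\wsetf}$ and instead work with sets of positive measure on which $y_j^{\wsetf} > 0$, which suffices because the implication ``$y_j^{\wsetf}(t) > 0 \Rightarrow j \in A(t)$'' is built into the very construction of \wsetf{} and is preserved by any admissible representative. A pleasant feature of this approach is that it requires no case distinction on the relative order of $r_j$ and $r_k$: the entire argument hinges on the behaviour of \wsetf{} just before $j$ finishes, and the hypothesis $r_k < C_j^{\wsetf}(I)$ guarantees that $k$ is already competing with $j$ at those times.
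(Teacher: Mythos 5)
Your proposal is correct and follows essentially the same route as the paper's proof: assume $C_k^{\wsetf}(I) > C_j^{\wsetf}(I)$, observe that $j$ is processed at positive rate just before $C_j^{\wsetf}(I)$ and hence lies in $A(\cdot)$ there, and derive $p_j/w_j = Y_j^{\wsetf}(I,C_j^{\wsetf}(I))/w_j \le Y_k^{\wsetf}(I,C_j^{\wsetf}(I))/w_k < p_k/w_k$, contradicting the hypothesis. Your version merely handles the almost-everywhere definition of the rates more carefully (via a positive-measure set of times $t_n \nearrow C_j^{\wsetf}(I)$ and continuity of $Y$), which the paper glosses over.
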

\begin{proof}
 Suppose that $C_j^{\wsetf}(I) < C_k^{\wsetf}(I)$. The job~$j$ must be processed at a positive rate during some interval $(t, C_j^{\wsetf}(I))$ by the \wsetf{} schedule, meaning that $j \in A(C_j^{\wsetf}(I))$. Hence, we get the contradiction \[\frac{p_j}{w_j} = \frac{Y_j^{\wsetf}(I, C_j^{\wsetf}(I))}{w_j} \le \frac{Y_k^{\wsetf}(I, C_j^{\wsetf}(I))}{w_k} < \frac{p_k}{w_k},\] where the last inequality holds because $k$ completes after $C_j^{\wsetf}(I)$.
\end{proof}

The \lcnamecref{Order WSETF} implies that in an instance with trivial release dates, for which \wsetf{} coincides with the Weighted Round-Robin algorithm, analyzed by \citet{KC03}, the jobs~$j$ are completed in the order of their Smith ratios~$p_j/w_j$. In this case the weighted delay of each job in the \wsetf{} schedule compared to the optimal \wspt{} schedule is exactly its processing time multiplied with the total weight of jobs with larger index.

\begin{lemma} \label{lemma:WSETF_no_release}
 Let $I_0 = (\vec{p},\vec{w},\vec{0})$ be an instance with trivial release dates and $p_1/w_1 \le \cdots \le p_n/w_n$. For every job~$j \in [n]$ we have 
 \[
 w_j \cdot C_j^{\wsetf}(I_0) = w_j \cdot C_j^{\wspt}(I_0) + {}\hypertarget{weightedDelay}{\underbrace{\sum_{k=j+1}^n w_k \cdot p_j}_{(\text{\textasteriskcentered})}}. \label{eq:WSETF_no_release}
 \]
\end{lemma}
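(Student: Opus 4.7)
The plan is to exploit the very explicit structure of the \wsetf{} schedule when all release dates are zero: at every point in time the machine rate is distributed among the currently unfinished jobs proportionally to their weights. By \Cref{Order WSETF} the jobs complete in the order $1,2,\dotsc,n$ (WSPT order), so the time axis splits into $n$ intervals $(C_{j-1}^{\wsetf}, C_j^{\wsetf}]$ on which the set of surviving jobs is exactly $\{j,j+1,\dotsc,n\}$ and the instantaneous rate of each surviving job $k \ge j$ is $w_k / \sum_{l \ge j} w_l$ (with the convention $C_0^{\wsetf}=0$).

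First, I would prove by induction on $j$ the invariant that, at time $C_j^{\wsetf}$, every job $k \ge j+1$ has received exactly $w_k \cdot (p_j/w_j)$ units of processing. The base case is trivial, and in the inductive step the amount of processing accumulated by $k$ between $C_{j-1}^{\wsetf}$ and $C_j^{\wsetf}$ is $(C_j^{\wsetf} - C_{j-1}^{\wsetf}) \cdot w_k/\sum_{l \ge j} w_l$, so the fact that $j$ itself completes at $C_j^{\wsetf}$ fixes
\[
C_j^{\wsetf} - C_{j-1}^{\wsetf} = \Bigl(\tfrac{p_j}{w_j} - \tfrac{p_{j-1}}{w_{j-1}}\Bigr) \sum_{l \ge j} w_l,
\]
with $p_0/w_0 \coloneqq 0$, and substituting this back yields $Y_k^{\wsetf}(I_0,C_j^{\wsetf}) = w_k \cdot p_j/w_j$ for every $k>j$, closing the induction.

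Finally, I would telescope the expression $C_j^{\wsetf} = \sum_{i=1}^{j} \bigl(\tfrac{p_i}{w_i}-\tfrac{p_{i-1}}{w_{i-1}}\bigr)\sum_{l\ge i} w_l$ by an Abel summation / summation-by-parts argument, regrouping the terms to obtain
\[
C_j^{\wsetf} = \sum_{i=1}^{j-1} p_i + \frac{p_j}{w_j}\sum_{l\ge j} w_l.
\]
Multiplying by $w_j$ and separating the contribution of the index $l=j$ in the last sum gives $w_j C_j^{\wsetf} = w_j \sum_{i=1}^{j} p_i + p_j \sum_{k>j} w_k = w_j C_j^{\wspt}(I_0) + p_j \sum_{k=j+1}^n w_k$, which is precisely the claim.

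The only place requiring some care is the bookkeeping in the inductive step (making sure the rate allocation really is $w_k/\sum_{l\ge j} w_l$ throughout the entire open interval rather than changing at intermediate times); this is where \Cref{Order WSETF} is used to rule out any other job $k>j$ finishing before $j$ and thereby shrinking the set of survivors early. The subsequent telescoping is routine algebra.
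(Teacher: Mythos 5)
Your proof is correct, but it is organized differently from the paper's. The paper inducts on the number of jobs: it verifies the identity for job~$1$ directly, observes that at time $C_1^{\wsetf}(I_0)$ every surviving job $k$ has elapsed time $w_k\,p_1/w_1$, and concludes that from that moment on the schedule coincides (shifted by $p_1$) with the \wsetf{} schedule of the reduced instance on $\{2,\dotsc,n\}$, to which the induction hypothesis applies. You instead stay inside the single instance, induct on $j$ to maintain the equal-ratio invariant $Y_k^{\wsetf}(I_0,C_j^{\wsetf})=w_k\,p_j/w_j$ for $k>j$, extract the increment $C_j^{\wsetf}-C_{j-1}^{\wsetf}=(p_j/w_j-p_{j-1}/w_{j-1})\sum_{l\ge j}w_l$, and telescope via Abel summation to the closed form $C_j^{\wsetf}=\sum_{i<j}p_i+\frac{p_j}{w_j}\sum_{l\ge j}w_l$, from which the claim is immediate. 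Both arguments rest on the same key fact (the proportional elapsed times at each completion, justified by \cref{Order WSETF} and the fact that with trivial release dates all unfinished jobs share the minimum ratio at all times); yours has the mild advantage of producing an explicit formula for $C_j^{\wsetf}$, while the paper's recursion avoids the summation-by-parts bookkeeping. The one step you rightly flag—that the rate really is $w_k/\sum_{l\ge j}w_l$ on the whole interval $(C_{j-1}^{\wsetf},C_j^{\wsetf}]$—is exactly where the equal-ratio observation is needed, and your argument handles it.
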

\begin{proof}
 This will be shown by induction on $n$. Clearly, the statement is true if there is only a single job. So in the following let $n > 1$. We have
 \[
 w_1 \cdot C_1^{\wsetf}(I_0) = w_1 \cdot \frac{p_1}{w_1/\sum_{j=1}^n w_j} = p_1 \cdot \sum_{j=1}^n w_j = w_1 \cdot C_1^{\wspt}(I_0) + \sum_{k=2}^n w_k \cdot p_1,
 \] so the statement holds for the first job. In order to show the statement for all other jobs, we consider the problem instance~$I_0'$ with job set~$J' \coloneqq \{2,\dotsc,n\}$. For every $j \in J'$ it holds that $C_j^{\wspt}(I_0) = p_1 + C_j^{\wspt}(I'_0)$.
 In the \wsetf{} schedule for $I_0$ every $j \in J'$ is processed at a rate of $w_j/\sum_{k=1}^n w_k$
 until time $C_1^{\wsetf}(I_0)$, while in the \wsetf{} schedule for $I_0'$ it is first processed at a rate of $w_j/\sum_{k=2}^n w_k$. Since
 \begin{multline*}
  Y_j^{\wsetf} (I_0, C_1^{\wsetf}(I_0)) = 
  \frac{w_j}{\sum_{k=1}^n w_k} \cdot C_1^{\wsetf}(I_0) = \frac{w_j}{\sum_{k=1}^n w_k} \cdot \frac{p_1}{w_1} \cdot \sum_{k=1}^n w_k = w_j \cdot \frac{p_1}{w_1} \\ = \frac{w_j}{\sum_{k=2}^n w_k} \cdot \frac{p_1}{w_1} \cdot \sum_{k=2}^n  w_k = \frac{w_j}{\sum_{k=2}^n w_k} \bigl(C_1^{\wsetf}(I_0) - p_1\bigr)
  =   Y_j^{\wsetf} (I'_0, C_1^{\wsetf}(I_0) - p_1) , 
 \end{multline*}
 every job~$j$ has received the same amount of processing in the \wsetf{} schedule for $I_0$ at time $C_1^{\wsetf}(I_0)$ as in the \wsetf{} schedule for $I_0'$ at time~$C_1^{\wsetf}(I_0) - p_1$. Thus, for any time~$t > C_1^{\wsetf}(I_0)$ the \wsetf{} schedule for $I_0$ at time~$t$ coincides with the \wsetf{} schedule for $I_0'$ at time $t-p_1$. Therefore, $C_j^{\wsetf}(I_0) = p_1 + C_j^{\wsetf}(I_0')$ for all jobs~$j \in J'$. Putting things together, we obtain for all $j\in J'$
 \begin{align*}
 	w_j \cdot C_j^{\wsetf}(I_0) &= w_j \cdot p_1 + w_j\cdot C_j^{\wsetf}(I_0') \\
   &= w_j \cdot p_1 + w_j\cdot C_j^{\wspt}(I_0')+ \sum_{k=j+1}^n w_k \cdot p_j
   = w_j\cdot C_j^{\wspt}(I_0) + \sum_{k=j+1}^n w_k \cdot p_j,
 \end{align*}
 where we applied the induction hypothesis to the instance $I_0'$ with $n-1$ jobs.
\end{proof}

To bound the optimum objective value from below, we consider the \emph{mean busy times}
\[M_j^{\mathrm S} \coloneqq \int_0^\infty t \cdot y_j^{\mathrm S}(t) \,\mathrm d t\] of the jobs~$j$ in an arbitrary schedule~$\mathrm S$. Since the mean busy time of each job is smaller than its completion time, the sum of weighted mean busy times is a lower bound on the sum of weighted completion times. It is well known~\cite{Goe96,Goe97} that the former is minimized by the Preemptive WSPT (\pwspt) rule, which always processes an available job with smallest index (i.e.\ with smallest Smith ratio~$p_j/w_j$). Thus, the sum of weighted mean busy times in the \pwspt{} schedule is a lower bound on $\opt(I)$, and it suffices to show that \[\sum_{j=1}^n w_j \cdot C_j^{\wsetf}(I) \le 2 \cdot \sum_{j=1}^n w_j \cdot M_j^{\pwspt}(I).\] The \pwspt{} rule is illustrated in \cref{fig:sec5_example_release_dates}. \begin{figure}
\centering

\begin{tikzpicture}[xscale=0.15, yscale=0.9375]

 \begin{scope}[yshift=-1.75cm]
  \coordinate (scheduleNW) at (0,0);
  \coordinate (scheduleW) at (0,-.5);
  \coordinate (scheduleSW) at (0,-1);
  \coordinate (scheduleNE) at (93.0,0);
  \coordinate (scheduleE) at (93.0,-.5);
  \coordinate (scheduleSE) at (93.0,-1);


  \fill[color1] (0.0, 0) rectangle ++(16.0, -1);
  \fill[color2] (16.0, 0) rectangle ++(8.0, -1);
  \fill[color4] (24.0, 0) rectangle ++(6.0, -1);
  \fill[color5] (30.0, 0) rectangle ++(4.0, -1);
  \fill[color4] (34.0, 0) rectangle ++(2.0, -1);
  \fill[color2] (36.0, 0) rectangle ++(24.0, -1);
  \fill[color3] (60.0, 0) rectangle ++(32.0, -1);
  \fill[color6] (93.0, 0) rectangle ++(4.0, -1);


  \draw (0.0, 0) rectangle node {\tiny 4} ++(16.0, -1);
  \draw (16.0, 0) rectangle node {\tiny 5} ++(8.0, -1);
  \draw (24.0, 0) rectangle node {\tiny 3} ++(6.0, -1);
  \draw (30.0, 0) rectangle node {\tiny 1} ++(4.0, -1);
  \draw (34.0, 0) rectangle node {\tiny\!3} ++(2.0, -1);
  \draw (36.0, 0) rectangle node {\tiny 5} ++(24.0, -1);
  \draw (60.0, 0) rectangle node {\tiny 6} ++(32.0, -1);
  \draw (93.0, 0) rectangle node {\tiny 2} ++(4.0, -1);


  \draw[thick] (30.0, -1) -- ++(0, -1ex) node[below] (r1) {\tiny $\textcolor{strong5}{r_1}$};
  \draw[thick] (93.0, -1) -- ++(0, -1ex) node[below] (r2) {\tiny $\textcolor{strong6}{r_2}$};
  \draw[black, thick] (24.0, -1) -- ++(0, -1ex) node[below] (r3) {\tiny $\textcolor{strong4}{r_3} = \textcolor{strong3!80!black}{r_6}$};
  \draw[black, thick] (0.0, -1) -- ++(0, -1ex) node[below] (r4) {\tiny $\textcolor{strong1}{r_4} = \textcolor{strong2}{r_5}$};


  \draw[line width=0.75mm] (16.0, 0) ++(-2.25mm, 0) -- ++(0, -1);
  \draw[line width=0.75mm] (34.0, 0) ++(-2.25mm, 0) -- ++(0, -1);
  \draw[line width=0.75mm] (36.0, 0) ++(-2.25mm, 0) -- ++(0, -1);
  \draw[line width=0.75mm] (60.0, 0) ++(-2.25mm, 0) -- ++(0, -1);
  \draw[line width=0.75mm] (92.0, 0) ++(-2.25mm, 0) -- ++(0, -1);
  \draw[line width=0.75mm] (97.0, 0) ++(-2.25mm, 0) -- ++(0, -1);
 \end{scope}

 \foreach \j in {1,...,4}
  \draw[dashed] (r\j) -- +(0, 3.5);
 \node[anchor=east] at (scheduleW) {\footnotesize $\pwspt$};

 \begin{scope}
  \coordinate (scheduleNW) at (0,0);
  \coordinate (scheduleW) at (0,-.5);
  \coordinate (scheduleSW) at (0,-1);
  \coordinate (scheduleNE) at (93.0,0);
  \coordinate (scheduleE) at (93.0,-.5);
  \coordinate (scheduleSE) at (93.0,-1);


  \fill[color1] (0, 0) rectangle ++(24.0, -0.5);
  \fill[color2] (0, -0.5) rectangle ++(24.0, -0.5);
  \fill[color4] (24.0, 0) rectangle ++(6.0, -0.5);
  \fill[color3] (24.0, -0.5) rectangle ++(6.0, -0.5);
  \fill[color5] (30.0, 0) rectangle ++(3.0, -1.0);
  \fill[color5] (33.0, 0) rectangle ++(3.0, -0.3333333333333333);
  \fill[color4] (33.0, -0.3333333333333333) rectangle ++(3.0, -0.3333333333333333);
  \fill[color3] (33.0, -0.6666666666666666) rectangle ++(3.0, -0.3333333333333333);
  \fill[color4] (36.0, 0) rectangle ++(8.0, -0.5);
  \fill[color3] (36.0, -0.5) rectangle ++(8.0, -0.5);
  \fill[color3] (44.0, 0) rectangle ++(4.0, -1.0);
  \fill[color1] (48.0, 0) rectangle ++(12.0, -0.3333333333333333);
  \fill[color2] (48.0, -0.3333333333333333) rectangle ++(12.0, -0.3333333333333333);
  \fill[color3] (48.0, -0.6666666666666666) rectangle ++(12.0, -0.3333333333333333);
  \fill[color2] (60.0, 0) rectangle ++(32.0, -0.5);
  \fill[color3] (60.0, -0.5) rectangle ++(32.0, -0.5);
  \fill[color6] (93.0, 0) rectangle ++(4.0, -1.0);


  \draw (0, 0) rectangle node {\tiny 4} ++(24.0, -0.5);
  \draw (0, -0.5) rectangle node {\tiny 5} ++(24.0, -0.5);
  \draw (24.0, 0) rectangle node {\tiny 3} ++(6.0, -0.5);
  \draw (24.0, -0.5) rectangle node {\tiny 6} ++(6.0, -0.5);
  \draw (30.0, 0) rectangle node {\tiny 1} ++(3.0, -1.0);
  \draw (33.0, 0) rectangle node {\tiny 1} ++(3.0, -0.3333333333333333);
  \draw (33.0, -0.3333333333333333) rectangle node {\tiny 3} ++(3.0, -0.3333333333333333);
  \draw (33.0, -0.6666666666666666) rectangle node {\tiny 6} ++(3.0, -0.3333333333333333);
  \draw (36.0, 0) rectangle node {\tiny 3} ++(8.0, -0.5);
  \draw (36.0, -0.5) rectangle node {\tiny 6} ++(8.0, -0.5);
  \draw (44.0, 0) rectangle node {\tiny 6} ++(4.0, -1.0);
  \draw (48.0, 0) rectangle node {\tiny 4} ++(12.0, -0.3333333333333333);
  \draw (48.0, -0.3333333333333333) rectangle node {\tiny 5} ++(12.0, -0.3333333333333333);
  \draw (48.0, -0.6666666666666666) rectangle node {\tiny 6} ++(12.0, -0.3333333333333333);
  \draw (60.0, 0) rectangle node {\tiny 5} ++(32.0, -0.5);
  \draw (60.0, -0.5) rectangle node {\tiny 6} ++(32.0, -0.5);
  \draw (93.0, 0) rectangle node {\tiny 2} ++(4.0, -1.0);


  \draw[thick] (93.0, -1) -- ++(0, -1ex) node[fill=white, fill opacity=.7, text opacity=1,below] {\tiny $\textcolor{strong6}{\rho_2}$};
  \draw[black, thick] (24.0, -1) -- ++(0, -1ex) node[fill=white, fill opacity=.7, text opacity=1,below] {\tiny $\textcolor{strong5}{\rho_1} = \textcolor{strong4}{\rho_3}$};
  \draw[black, thick] (0.0, -1) -- ++(0, -1ex) node[fill=white, fill opacity=.7, text opacity=1,below] {\tiny $\textcolor{strong1}{\rho_4} = \textcolor{strong2}{\rho_5} = \textcolor{strong3!80!black}{\rho_6}$};


  \draw[line width=0.75mm] (36.0, 0) ++(-2.25mm, 0) -- ++(0, -0.3333333333333333);
  \draw[line width=0.75mm] (44.0, 0) ++(-2.25mm, 0) -- ++(0, -0.5);
  \draw[line width=0.75mm] (60.0, 0) ++(-2.25mm, 0) -- ++(0, -0.3333333333333333);
  \draw[line width=0.75mm] (92.0, 0) ++(-2.25mm, 0) -- ++(0, -0.5);
  \draw[line width=0.75mm] (92.0, -0.5) ++(-2.25mm, 0) -- ++(0, -0.5);
  \draw[line width=0.75mm] (97.0, 0) ++(-2.25mm, 0) -- ++(0, -1.0);
 \end{scope}

 \node[anchor=east] at (scheduleW) {\footnotesize $\wsetf$};
\end{tikzpicture}

\caption{An example for the \wsetf{} schedule and the \pwspt{} schedule for the instance~$I$ with $\vec p = (4,4,8,16,32,32)^\top$\llap{,} $\vec r = (30,93,24,0,0,24)^\top$\llap{,} and unit weights $\vec w = 1$. Thick lines indicate completions of jobs.} \label{fig:sec5_example_release_dates}
\end{figure}%
Note that the inequality holds with equality for instances~$I_0$ with trivial release dates because for such instances we have $C_j^{\wspt} = M_j^{\pwspt} + p_j/2$, so that, by \cref{lemma:WSETF_no_release},
\[w_j \cdot C_j^{\wsetf}(I_0) = w_j \cdot M_j^{\pwspt}(I_0) + \frac{w_j p_j}{2} + \sum_{k=j+1}^n w_k p_j\]
for every $j \in [n]$. By summing over all jobs~$j$ we deduce
\begin{equation}\begin{split}
 &\sum_{j=1}^n w_j C_j^{\wsetf}(I_0) = \sum_{j=1}^n w_j \cdot M_j^{\pwspt}(I_0) + \sum_{j=1}^n \frac{w_jp_j}{2} + \sum_{j=1}^n \sum_{k=j+1}^n w_k p_j \\
 ={} &\sum_{j=1}^n w_j \cdot M_j^{\pwspt}(I_0) + \sum_{k=1}^n \frac{w_kp_k}{2} + \sum_{k=1}^n w_k \sum_{j=1}^{k-1} p_j = \sum_{j=1}^n w_j \cdot M_j^{\pwspt}(I_0) + \sum_{k=1}^n w_k \cdot M_k^{\pwspt}(I_0).
\end{split}\label{sum_weighted_completion_times_WRR}\end{equation}
This implies the $2$-competitiveness of the Weighted Round-Robin algorithm, proved by \citeauthor{KC03}. In the remainder of this \lcnamecref{sec:WSETF} this argument will be generalized to jobs released over time. We start by reducing the instance to a simpler case without changing the values to be compared.

\begin{lemma}
 For every instance~$I = (\vec p, \vec w, \vec r)$ there is an instance $I' = (\vec p', \vec w', \vec r')$ consisting of $n'$ jobs such that no job is preempted in the $\pwspt$ schedule for $I'$, $\sum_{j=1}^{n'} w_j' C_j^{\wsetf}(I') = \sum_{j=1}^n w_j C_j^{\wsetf}(I)$, and $\sum_{j=1}^{n'} w_j' M_j^{\pwspt}(I') = \sum_{j=1}^n w_j M_j^{\pwspt}(I)$.
\end{lemma}
\begin{proof}
 We split every job into subjobs corresponding to the parts processed without interruption in the $\pwspt$ schedule, i.e., we replace each job $j$ by jobs $(j,1),\dotsc,(j,\ell_j)$ such that $\sum_{i=1}^{\ell_j} p_{(j,i)}' = p_j$. Moreover, we set the weights to $w_{(j,i)} \coloneqq \frac{p_{(j,i)}}{p_j} \cdot w_j$, so that all parts have the same Smith ratio as the original job. Finally, the release dates are set to $r_{(j,i)} \coloneqq r_j$. This operation does not change the sum of weighted mean busy times in the $\pwspt$ schedule. Moreover, in the $\wsetf$ schedule, since all these jobs are released simultaneously and have the same Smith ratio, they will always be processed in a way so that their weighted elapsed times increase equally, so that they are all completed at the same time. Moreover, the total rate assigned to the jobs $(j,i)$ for a fixed $j$ equals the rate assigned to $j$ in the original schedule. Therefore, all these jobs finish exactly at the time when job~$j$ is completed in the original schedule.
\end{proof}

From now on we always consider an instance $I = (\vec p, \vec w, \vec r)$ so that no job is preempted in the $\pwspt$ schedule and $p_1/w_1 \le \dotsc \le p_n/w_n$. We omit the instance in the notation for completion and elapsed times. For every fixed job~$j$ let $\rho_j$ be the first point in time such that during $(\rho_j,C_j^{\wsetf}]$ the machine continuously processes jobs~$k$ with $Y_k^{\wsetf}(C_j^{\wsetf})/w_k \le p_j/w_j$, and let $R(j)$ be the set of jobs processed in this interval. The times~$\rho_j$ are shown in the example in \cref{fig:sec5_example_release_dates} and have the property that $r_k \ge \rho_j$ for all $k \in R(j)$ because otherwise the machine would only process jobs~$l$ with $Y_l^{\wsetf}(\rho_j)/w_l \le Y_k^{\wsetf}(\rho_j)/w_k \le p_j/w_j$ between $r_k$ and $\rho_j$, in contradiction to the minimality of $\rho_j$.
Let $I(j)$ be the instance with job set $R(j)$ where all jobs are released at time~$0$. In the next \lcnamecref{WSETF WRR I(j)} we compare the completion time of $j$ in the \wsetf{} schedule for the original instance~$I$ to the corresponding completion time in the instance~$I(j)$.

\begin{lemma} \label{WSETF WRR I(j)}
 $C_j^{\wsetf} = \rho_j + C_j^{\wsetf}(I(j))$ for every job~$j$.
\end{lemma}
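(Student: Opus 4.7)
The plan is to reduce the claim to a work-balance identity. Both the interval $(\rho_j, C_j^{\wsetf}]$ in the \wsetf{} schedule for $I$ and the interval $[0, C_j^{\wsetf}(I(j))]$ in the \wsetf{} schedule for $I(j)$ are busy (the first by the very definition of $\rho_j$, the second because all jobs of $R(j)$ are released at time~$0$). Moreover, every $k \in R(j)$ satisfies $Y_k^{\wsetf}(I, \rho_j) = 0$ since $r_k \ge \rho_j$. Counting the total processing during each interval in two ways therefore yields
\begin{align*}
  C_j^{\wsetf} - \rho_j &= \sum_{k \in R(j)} Y_k^{\wsetf}(I, C_j^{\wsetf}), \\
  C_j^{\wsetf}(I(j)) &= \sum_{k \in R(j)} Y_k^{\wsetf}\bigl(I(j), C_j^{\wsetf}(I(j))\bigr).
\end{align*}

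Hence the lemma follows once I show that in each of the two schedules, every $k \in R(j)$ has elapsed time exactly $\min(p_k,\, w_k\, p_j/w_j)$ at the moment $j$ completes. For the schedule on $I(j)$ this is nearly immediate: since all jobs are released at time~$0$, \wsetf{} keeps the ratios $Y_k^{\wsetf}(I(j), t)/w_k$ equal across all currently unfinished jobs (the water-filling picture already underlying \Cref{lemma:WSETF_no_release}), so at the moment $j$ finishes the common level equals $p_j/w_j$; jobs with Smith ratio~$\le p_j/w_j$ have by then completed with elapsed time $p_k$, while the remaining jobs in $R(j)$ sit exactly at $w_k\, p_j/w_j$.

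For the schedule on $I$, if $k \in R(j)$ has already completed by $C_j^{\wsetf}$ then the defining property $Y_k^{\wsetf}(I, C_j^{\wsetf})/w_k \le p_j/w_j$ of $R(j)$ forces $p_k \le w_k\, p_j/w_j$, so its elapsed time equals $p_k = \min(p_k, w_k p_j/w_j)$. The subtle case is a $k \in R(j)$ still running at $C_j^{\wsetf}$: here one must rule out $Y_k^{\wsetf}(I, C_j^{\wsetf})/w_k < p_j/w_j$. The argument is that $j \in A(t)$ for $t$ just below $C_j^{\wsetf}$ (otherwise $j$ could not complete at that instant), so the minimum of $Y_l^{\wsetf}(I, t)/w_l$ over $l \in J(t)$ tends to $p_j/w_j$; because $k$ is processed at some $t > \rho_j$ we have $r_k < C_j^{\wsetf}$, so $k \in J(t)$ for $t$ close to $C_j^{\wsetf}$, and $Y_k^{\wsetf}(I, t)/w_k < p_j/w_j$ would contradict $j$ realising the minimum of $A(t)$. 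Combined with $Y_k^{\wsetf}(I, C_j^{\wsetf})/w_k \le p_j/w_j$, this pins down equality. I expect this pinning-down step to be the main obstacle in the proof, as it relies on a careful limit argument at~$C_j^{\wsetf}$.

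Plugging the two equal right-hand sides into the work-balance identities above gives the claimed identity $C_j^{\wsetf} = \rho_j + C_j^{\wsetf}(I(j))$.
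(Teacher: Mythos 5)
Your proof is correct and follows essentially the same route as the paper's: both intervals are busy with jobs from $R(j)$ only, and each such job receives the same amount of processing in both schedules ($p_k$ if it finishes before $j$, and $w_k\,p_j/w_j$ otherwise), so the two interval lengths coincide. The only difference is that you explicitly justify the equality $Y_k^{\wsetf}(I, C_j^{\wsetf}) = w_k\,p_j/w_j$ for the jobs of $R(j)$ that are still unfinished at $C_j^{\wsetf}$, a step the paper asserts without proof.
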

\begin{proof}
 In the \wsetf{} schedule for $I$ every job~$k \in R(j)$ that is completed between the times~$\rho_j$ and $C_j^{\wsetf}(I)$ has $p_k/w_k = Y_k^{\wsetf}(C_j^{\wsetf})/w_k \le p_j/w_j$. Hence, by \cref{Order WSETF}, it also finishes before $j$ in the \wsetf{} schedule for $I(j)$. All other jobs~$k \in R(j)$ have received processing time $\frac{w_k}{w_j} \cdot p_j$ before the completion of $j$ in both schedules. Therefore, the total processing that jobs from $R(j)$ receive in the \wsetf{} schedule for $I$ between the times $\rho_j$ and $C_j^{\wsetf}$ equals their total processing before the completion of $j$ in the \wsetf{} schedule for $I(j)$. Since in both schedules the machine is continuously processing jobs from $R(j)$ in the considered time, this implies the \lcnamecref{WSETF WRR I(j)}.
\end{proof}

Analogously to \cref{WSETF WRR I(j)}, we compare in the next \lcnamecref{Sp WSPT I(j)} the preemptive WSPT schedules for $I$ and $I(j)$.

\begin{lemma} \label{Sp WSPT I(j)}
 For every job~$j$
 \[C_j^{\pwspt} = \rho_j + C_j^{\wspt}(I(j)) - \sum_{\substack{k \in R(j):k<j,\\C_k^{\pwspt} > C_j^{\pwspt}}} p_k + \sum_{\substack{k \in R(j):k>j,\\C_k^{\pwspt} < C_j^{\pwspt}}} p_k.\]
\end{lemma}
\begin{proof}
  By \cref{Order WSETF}, every job~$k \le j$ released before time~$C_j^{\wsetf}$ has $C_k^{\wsetf} \le C_j^{\wsetf}$. \pwspt{} always schedules some job $k \le j$ at a rate of $1$ if one is available. Thus, until every point in time, it cannot have spent less time on these jobs than \wsetf{}. Hence, at time~$C_j^{\wsetf}$, \pwspt{} must have finished all these jobs as well. In particular, it has completed $j$.
  Therefore, we have $C_j^{\pwspt} \le C_j^{\wsetf}$ for every job~$j$.

	Since the \pwspt{} and the \wsetf{} strategies both fully utilize the machine whenever some jobs are available, the resulting schedules have the same idle intervals. By definition of $\rho_j$, there is no idle time in $(\rho_j,C_j^{\wsetf}] \supseteq (\rho_j, C_j^{\pwspt}]$.
  Hence, we can partition $(\rho_j, C_j^{\pwspt}]$ according to the job being processed in the \pwspt{} schedule:
 \begin{align*}
  C_j^{\pwspt} = \rho_j + \sum_{\substack{k \in R(j):\\C_k^{\pwspt} \le C_j^{\pwspt}}} p_k &= \rho_j + \sum_{k \in R(j): k \le j} p_k - \sum_{\substack{k \in R(j): k \le j\\C_k^{\pwspt} > C_j^{\pwspt}}} p_k + \sum_{\substack{k \in R(j):k > j\\C_k^{\pwspt} \le C_j^{\pwspt}}} p_k\\
  &= \rho_j + C_j^{\wspt}(I(j)) - \sum_{\substack{k \in R(j): k < j\\C_k^{\pwspt} > C_j^{\pwspt}}} p_k + \sum_{\substack{k \in R(j):k > j\\C_k^{\pwspt} < C_j^{\pwspt}}} p_k.
  \qedhere
 \end{align*}
\end{proof}

Now we are ready to prove the \lcnamecref{thm:WSETF}.

\begin{proof}[Proof of \cref{thm:WSETF}]
 Using the fact that $\pwspt$ is non-preemptive, we have $C_j^{\pwspt} = M_j^{\pwspt} + \frac{p_j}{2}$ for all $j$, and thus
 \begin{equation}
  \sum_{j=1}^n w_j \cdot C_j^{\wsetf} = \sum_{j=1}^n w_j \cdot M_j^{\pwspt} + \sum_{j=1}^n w_j \cdot \frac{p_j}{2} + \sum_{j=1}^n w_j \cdot (C_j^{\wsetf} - C_j^{\pwspt}).\label{two summands}
 \end{equation}
 We will now bound the summands from the last sum by an expression generalizing (\hyperlink{weightedDelay}{\textasteriskcentered}) from \cref{lemma:WSETF_no_release}. For a fixed~$j \in [n]$ \cref{WSETF WRR I(j),Sp WSPT I(j)} yield
 \[
  C_j^{\wsetf} - C_j^{\pwspt} = C_j^{\wsetf}(I(j)) - C_j^{\wspt}(I(j)) + \sum_{\substack{k \in R(j): k < j \\ C_k^{\pwspt} > C_j^{\pwspt}}} p_k - \sum_{\substack{k \in R(j): k > j \\ C_k^{\pwspt} < C_j^{\pwspt}}} p_k.
 \]
 Since $I(j)$ is an instance with trivial release dates, by \cref{lemma:WSETF_no_release},
 \begin{align*}
  w_j \cdot (C_j^{\wsetf} - C_j^{\pwspt}) &= \sum_{k \in R(j): k > j} w_k p_j + \sum_{\substack{k \in R(j): k < j \\ C_k^{\pwspt} > C_j^{\pwspt}}} w_j p_k - \sum_{\substack{k \in R(j): k > j \\ C_k^{\pwspt} < C_j^{\pwspt}}} w_j p_k \\
  &\le \sum_{k \in R(j): k > j} w_k p_j + \sum_{\substack{k \in R(j): k < j \\ C_k^{\pwspt} > C_j^{\pwspt}}} w_k p_j - \sum_{\substack{k \in R(j): k > j \\ C_k^{\pwspt} < C_j^{\pwspt}}} w_k p_j \\
  &= \sum_{\substack{k \in R(j): k \neq j \\ C_k^{\pwspt} > C_j^{\pwspt}}} w_k p_j \le \sum_{\substack{k \neq j: \\ C_k^{\pwspt} > C_j^{\pwspt}}} w_k p_j,
 \end{align*}
 where the first inequality holds because $w_j p_k \le w_k p_j$ for $k < j$ and $w_j p_k \ge w_k p_j$ for $k > j$. Now we sum over all jobs~$j$, obtaining
 \[\sum_{j=1}^n w_j \cdot (C_j^{\wsetf} - C_j^{\pwspt}) \le \sum_{j=1}^n \sum_{\substack{k \neq j:\\ C_k^{\pwspt} > C_j^{\pwspt}}} w_k p_j = \sum_{k=1}^n w_k \sum_{\substack{j \neq k:\\ C_j^{\pwspt} < C_k^{\pwspt}}} p_j.\]
 Substituting this inequality into \eqref{two summands}, we can generalize the computation in \cref{sum_weighted_completion_times_WRR}:
 \begin{align*}
  \sum_{j=1}^n w_j \cdot C_j^{\wsetf}
  &\le \sum_{j=1}^n w_j \cdot M_j^{\pwspt} + \sum_{j=1}^n \frac{w_j p_j}{2} + \sum_{j=1}^n w_j \sum_{\substack{k \neq j:\\ C_k^{\pwspt} < C_j^{\pwspt}}} p_k \\
  &=\sum_{j=1}^n w_j \cdot M_j^{\pwspt} + \sum_{j=1}^n w_j \cdot \biggl(\frac{p_j}{2} + \sum_{\substack{k \neq j:\\ C_k^{\pwspt} < C_j^{\pwspt}}} p_k\biggr) \\
  &= 2 \cdot \sum_{j=1}^n w_j \cdot M_j^{\pwspt}. \qedhere
 \end{align*}
\end{proof}

\section{\texorpdfstring{Extensions of the $b$-Scaling Strategy for Release Dates and Parallel Machines}{Extensions of the b-Scaling Strategy for Release Dates and Parallel Machines}}

In this section, we present extensions of the $b$-scaling strategies for the settings \threefield{1}{r_j}{w_jC_j} and \threefield{\mathrm P}{}{C_j}. We denote by $(\vec{p},\vec{w},\vec{r},m)$ an instance on $m$ identical parallel machines in which each job~$j$ has processing time $p_j$, weight $w_j$, and release date $r_j$. In order to obtain bounds on the competitive ratios of these extensions, we compare the schedules of $\alg$ to schedules of algorithms that are constant-competitive. In particular, we use the $2$-competitiveness of \wsetf{} from the previous section for
\threefield{1}{r_j,\,\mathrm{pmtn}}{w_jC_j} and the $2$-competitiveness of the round-robin~(\rr) schedule for \threefield{\mathrm P}{\mathrm{pmtn}}{C_j}.
We begin with the relation of the optimal costs of two instances whose processing times and release dates differ only by a multiplicative factor.

\begin{lemma} \label{lem:b_factor_opt}
	Consider an instance $I=(\vec{p},\vec{w},\vec{r},m)$, and let $I'=(\vec{p}',\vec{w},\vec{r}',m)$, where $\vec{p}'\leq \alpha \vec{p}$ and $\vec{r}'\leq \alpha \vec{r}$. Then, we have
	\[
		\opt(I')\leq \alpha \cdot \opt(I).
	\]
\end{lemma}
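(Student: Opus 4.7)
The plan is to transform an optimal schedule $S^*$ for $I$ into a feasible schedule $S'$ for $I'$ by stretching the time axis by the factor $\alpha$ and then terminating the processing of each job $j$ as soon as it has accumulated $p_j'$ units of processing. Formally, if I encode $S^*$ by indicator functions $y_{ij}^{S^*}(t)\in\{0,1\}$ (machine $i$ is processing job $j$ at time $t$ in $S^*$), I would set $y_{ij}^{S'}(t) \coloneqq y_{ij}^{S^*}(t/\alpha)$ up to the first instant at which job $j$ has received $p_j'$ units of processing in $S'$, and $0$ afterwards.

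Three feasibility checks are then needed. First, release dates: for $t<\alpha r_j$ one has $t/\alpha<r_j$, so $y_{ij}^{S^*}(t/\alpha)=0$ for every machine $i$, hence job $j$ is not processed in $S'$ before time $\alpha r_j\ge r_j'$. Second, machine capacity: at most one job runs on each machine at any given time, a property preserved under the time rescaling and under the early truncation. Third, completion: disregarding the truncation, job $j$ would receive $\alpha p_j \ge p_j'$ units of processing by time $\alpha C_j^{S^*}$, so it actually completes in $S'$ at some time $C_j^{S'}\le \alpha C_j^{S^*}$. Summing over jobs would then give
\[
\opt(I') \;\leq\; \sum_{j=1}^{n} w_j\, C_j^{S'} \;\leq\; \alpha \sum_{j=1}^{n} w_j\, C_j^{S^*} \;=\; \alpha\cdot \opt(I).
\]

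I do not expect any substantial obstacle, because the statement is essentially a scaling observation. The construction applies uniformly to the non-preemptive setting \threefield{\mathrm P}{}{C_j} (where the scaled-and-truncated schedule is still non-preemptive, since each job keeps a single contiguous processing interval on one machine) and to the preemptive setting \threefield{1}{r_j,\,\mathrm{pmtn}}{w_j C_j}. The only point requiring a bit of care is the early-termination step, which is what guarantees that $S'$ is a valid schedule for the reduced processing times $p_j'$ rather than a schedule that runs each job for $\alpha p_j$ time units.
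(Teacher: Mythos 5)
Your proposal is correct and follows essentially the same route as the paper: both dilate the optimal schedule for $I$ by the factor $\alpha$ and account for the shortened processing times $p_j'$ (you by truncating each job once it has accumulated $p_j'$ units, the paper by right-aligning a contiguous block of length $p_j'$ ending at $\alpha\cdot C_j^{\opt}(I)$), then verify release-date and machine-capacity feasibility to conclude $C_j^{S'}\le \alpha\cdot C_j^{\opt}(I)$ for every job. The differences are purely cosmetic.
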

\begin{proof}
	Let $S_j^\Pi(I)$ denote the starting time of job~$j$ in schedule $\Pi$ for instance $I$.
	We define a schedule $\Pi$ with $S_j^\Pi(I' )=\alpha\cdot C_j^{\opt}(I)- p_j'$. By definition $C_j^\Pi(I')=\alpha\cdot C_j^{\opt}(I)$. Clearly, $\Pi(I')=\alpha\cdot \opt(I)$.
	We claim that $\Pi$ is feasible for $I'$.
	Indeed, we have
	\[
		S_j^\Pi(I')= \alpha\cdot (S_j^{\opt}(I)+p_j)- p_j'\geq \alpha\cdot(r_j +p_j)- \alpha p_j\geq r_j'.
	\]
	Suppose now that two jobs~$j, k$ overlap in $\Pi$, i.e., $S_j^\Pi(I') < C_k^\Pi(I') \le C_j^\Pi(I')$. By definition of $\Pi$, this means that $\alpha \cdot C_j^{\opt}(I) - p_j' < \alpha \cdot C_k^{\opt}(I) \le \alpha \cdot C_j^{\opt}(I)$. Therefore, $S_j^{\opt}(I) \le S_j^{\opt}(I) + p_j - \frac{p_j'}{\alpha} < C_k^{\opt}(I) \le C_j^{\opt}(I)$, meaning that $j$ and $k$ also overlap in the optimal schedule for $I$.
	Since the optimal schedule for $I$ always schedules at most $m$ jobs in parallel, this is also the case for $\Pi$.
	Hence, $\Pi$ is a feasible schedule for $I'$ and $\opt(I')\leq\Pi(I')=\alpha\cdot \opt(I)$.
\end{proof}

This lemma will turn out to be useful for both investigated settings.

\subsection{Release Dates}

Let us first consider the single machine case where jobs arrive online. For instances of \threefield{1}{r_j}{w_jC_j}, we extend $\alg$ in the following way: The strategy keeps track of the rank $\jobrank_j(\theta)$ of each job $j$ at time $\theta$, where $\jobrank_j(\theta)$ is the largest integer~$q$ such that job $j$ has already been probed for $w_j b^{q-1}$ before $\theta$. At any end of a probing occurring at time $\theta$ it selects the job $j$ with minimum rank and index among all released and not completed jobs and probes it for $w_j b^{\jobrank_j(\theta)}$.
We only consider the limit strategy obtained when the rank of a job is set to $\jobrank_0\to-\infty$ at its release, so
that phases of infinitesimal probing occur after each release date.
Note that the strategy never interrupts a probing, i.e., each probing $(t_i, j_i, \tau_i)$ is executed until time $t_i + \min \{p_j, \tau_i\}$. In order to show an upper bound on the competitive ratio of $\alg$, we actually compare \alg{} against an optimal \emph{preemptive} offline algorithm. This means that we obtain an upper bound on the ``power of preemption'' in the online setting with restarts, complementing the unbounded ratio in the model without restarts.

\TrivialBoundReleaseDates

To prove this result, we need a bound on the end time of a probing with respect to the point in time at which the probing began.

\begin{lemma} \label{lem:probing_end_bound}
	For $I = (\vec{p}, \vec{w}, \vec{r}, m)$ denote by $S$ the schedule produced by $\alg$. If some job~$j$ is probed at time~$t$ for~$\tau = w_j b^{\jobrank}$ for some $\jobrank \in \mathbb{Z}$, then $t + \tau \leq b t$.
\end{lemma}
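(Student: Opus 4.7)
My plan is to exploit the fact that, by the time the probing of length $\tau = w_j b^q$ begins, job $j$ has already accumulated many shorter probings of itself, and the total length of these prior probings is comparable to $\tau$ itself.

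First, I would unpack the definition of the rank. The hypothesis that $j$ is probed at time $t$ for $\tau = w_j b^q$ means precisely that the rank of $j$ equals $q$ at time $t$. Because $\alg$ was extended to release dates by sending the starting rank $q_0 \to -\infty$, job $j$ has been probed (and killed) at every integer rank $q_0, q_0+1, \ldots, q-1$ between its release $r_j$ and time $t$. Summing the corresponding lengths gives, in the limit $q_0 \to -\infty$, the convergent geometric series
\[
    \sum_{i=-\infty}^{q-1} w_j b^i = \frac{w_j\, b^q}{b-1} = \frac{\tau}{b-1}.
\]

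Next, since all of these probings necessarily occur inside the interval $[r_j, t]$ (a job cannot be probed before its release), the total probing time absorbed by $j$ before $t$ is at most $t - r_j$. Combining this with the identity above yields
\[
    \frac{\tau}{b-1} \le t - r_j \le t,
\]
where the last inequality uses $r_j \ge 0$. Rearranging gives $\tau \le (b-1)\,t$, i.e.\ $t + \tau \le b t$, which is the desired bound.

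The only subtlety, and the main point worth being careful about, is that the derivation of the geometric sum relies on the ``limit'' interpretation of $\alg$ with $q_0 \to -\infty$ under release dates, which is exactly how the strategy was defined in this section; no interleaved probings of other jobs, nor possible idle time between the probings of $j$, affect the argument, because we only use that $j$'s own probings collectively consume at least $\frac{\tau}{b-1}$ units of time within $[r_j, t]$.
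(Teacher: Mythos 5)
Your proof is correct and follows essentially the same argument as the paper: the prior probings of $j$ sum to the geometric series $\sum_{i=-\infty}^{q-1} w_j b^i = \tau/(b-1)$, and since they all fit before time $t$, one gets $t \ge \tau/(b-1)$ and hence $t+\tau \le bt$. The only point the paper makes explicit that you leave implicit is that, since the lemma is stated for $m$ machines, one must note that probings of the same job never run in parallel, so their total length is indeed bounded by the elapsed time.
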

\begin{proof}
	At the start of the probing, $\alg$ has spent $\sum_{i=-\infty}^{\jobrank-1} w_j b^{i} = w_j \frac{b^{\jobrank}}{b-1}$ time probing the job~$j$. Since probings of the same job cannot run in parallel on multiple machines, we have that $t \geq w_j \frac{b^{\jobrank}}{b-1}$. Thus,
	\[
		t + \tau = t + w_j b^{\jobrank} \leq t + (b-1) t = b t
		.
		\qedhere
	\]
\end{proof}

The proof of \cref{thm:trivial_bound_release_dates} consists of two main steps that are carried out in the subsequent two lemmas. In the first step, we compare an arbitrary instance~$I$ to an instance~$I'$ with processing times such that the Smith ratios are rounded to integer powers of $b$ and release dates that are shifted to end times of probings. In the second step, we compare the performance of $\alg$ on the instance~$I'$ to the performance of $\wsetf$ on another instance~$I''$. \Cref{fig:alg-vs-wsetf} illustrates these two auxiliary instances for an example.

\begin{figure}

\centering
\begin{tikzpicture}[scale=1.5]
\begin{scope}[xscale=0.1, yscale=0.75,]
\coordinate (scheduleNW) at (0,0);
\coordinate (scheduleW) at (0,-.5);
\coordinate (scheduleSW) at (0,-1);
\coordinate (scheduleNE) at (95.0,0);
\coordinate (scheduleE) at (95.0,-.5);
\coordinate (scheduleSE) at (95.0,-1);


 \fill[gray] (0, 0) rectangle ++(2.0, -1);
\fill[color1] (2.0, 0) rectangle ++(1.0, -1);
\fill[color2] (3.0, 0) rectangle ++(1.0, -1);
\fill[color1] (4.0, 0) rectangle ++(2.0, -1);
\fill[color2] (6.0, 0) rectangle ++(2.0, -1);
\fill[color1] (8.0, 0) rectangle ++(4.0, -1);
\fill[color2] (12.0, 0) rectangle ++(4.0, -1);
\fill[color1] (16.0, 0) rectangle ++(8.0, -1);
\fill[gray] (24.0, 0) rectangle ++(2.0, -1);
\fill[color3] (26.0, 0) rectangle ++(1.0, -1);
\fill[color4] (27.0, 0) rectangle ++(1.0, -1);
\fill[color3] (28.0, 0) rectangle ++(2.0, -1);
\fill[gray] (30.0, 0) rectangle ++(1.0, -1);
\fill[color5] (31.0, 0) rectangle ++(1.0, -1);
\fill[color4] (32.0, 0) rectangle ++(2.0, -1);
\fill[color5] (34.0, 0) rectangle ++(2.0, -1);
\fill[color3] (36.0, 0) rectangle ++(4.0, -1);
\fill[color4] (40.0, 0) rectangle ++(3, -1);
\fill[color2] (43.0, 0) rectangle ++(8.0, -1);
\fill[color3] (51.0, 0) rectangle ++(8.0, -1);
\fill[color2] (59.0, 0) rectangle ++(9, -1);
\fill[color3] (68.0, 0) rectangle ++(9, -1);
\fill[gray] (93, 0) rectangle ++(1.0, -1);
\fill[color6] (94.0, 0) rectangle ++(1.0, -1);
\fill[color6] (95.0, 0) rectangle ++(2.0, -1);


 \draw (0, 0) rectangle ++(2.0, -1) ;
\draw (2.0, 0) rectangle ++(1.0, -1) ;
\draw (3.0, 0) rectangle ++(1.0, -1) ;
\draw (4.0, 0) rectangle ++(2.0, -1) ;
\draw (6.0, 0) rectangle ++(2.0, -1) ;
\draw (8.0, 0) rectangle ++(4.0, -1)  node[midway] {\tiny 1};
\draw (12.0, 0) rectangle ++(4.0, -1)  node[midway] {\tiny 2};
\draw (16.0, 0) rectangle ++(8.0, -1)  node[midway] {\tiny 1};
\draw (24.0, 0) rectangle ++(2.0, -1) ;
\draw (26.0, 0) rectangle ++(1.0, -1) ;
\draw (27.0, 0) rectangle ++(1.0, -1) ;
\draw (28.0, 0) rectangle ++(2.0, -1) ;
\draw (30.0, 0) rectangle ++(1.0, -1) ;
\draw (31.0, 0) rectangle ++(1.0, -1) ;
\draw (32.0, 0) rectangle ++(2.0, -1) ;
\draw (34.0, 0) rectangle ++(2.0, -1) ;
\draw (36.0, 0) rectangle ++(4.0, -1)  node[midway] {\tiny 3};
\draw (40.0, 0) rectangle ++(3, -1)  node[midway] {\tiny 4};
\draw (43.0, 0) rectangle ++(8.0, -1)  node[midway] {\tiny 2};
\draw (51.0, 0) rectangle ++(8.0, -1)  node[midway] {\tiny 3};
\draw (59.0, 0) rectangle ++(9, -1)  node[midway] {\tiny 2};
\draw (68.0, 0) rectangle ++(9, -1)  node[midway] {\tiny 3};
\draw (93, 0) rectangle ++(1.0, -1) ;
\draw (94.0, 0) rectangle ++(1.0, -1) ;
\draw (95.0, 0) rectangle ++(2.0, -1) ;


 \draw[thick] (0, -1) -- ++(0, -1ex) node[below] {\tiny $\textcolor{strong1}{r_1} = \textcolor{strong2}{r_2}$};
 \draw[thick] (18, -1) -- ++(0, -1ex) node[below] {\tiny $\textcolor{strong3!80!black}{r_3} = \textcolor{strong4}{r_4}$};
 \draw[thick] (30, -1) -- ++(0, -1ex) node[below] {\tiny \textcolor{strong5}{$r_5$}};
 \draw[thick] (93, -1) -- ++(0, -1ex) node[below] {\tiny \textcolor{strong6}{$r_6$}};


 \draw[line width=0.75mm] (24.0, 0) ++(-2.25mm, 0) -- ++(0, -1);
\draw[line width=0.75mm] (36.0, 0) ++(-2.25mm, 0) -- ++(0, -1);
\draw[line width=0.75mm] (43.0, 0) ++(-2.25mm, 0) -- ++(0, -1);
\draw[line width=0.75mm] (68.0, 0) ++(-2.25mm, 0) -- ++(0, -1);
\draw[line width=0.75mm] (77.0, 0) ++(-2.25mm, 0) -- ++(0, -1);
\draw[line width=0.75mm] (97.0, 0) ++(-2.25mm, 0) -- ++(0, -1);
\end{scope}

\node[anchor=east] at (scheduleW) {\footnotesize $\alg (I)$};

\begin{scope}[xscale=0.1, yscale=0.75,yshift=-1.75cm]
\coordinate (scheduleNW) at (0,0);
\coordinate (scheduleW) at (0,-.5);
\coordinate (scheduleSW) at (0,-1);
\coordinate (scheduleNE) at (95.0,0);
\coordinate (scheduleE) at (95.0,-.5);
\coordinate (scheduleSE) at (95.0,-1);


\fill[gray] (0, 0) rectangle ++(2.0, -1);
\fill[color1] (2.0, 0) rectangle ++(1.0, -1);
\fill[color2] (3.0, 0) rectangle ++(1.0, -1);
\fill[color1] (4.0, 0) rectangle ++(2.0, -1);
\fill[color2] (6.0, 0) rectangle ++(2.0, -1);
\fill[color1] (8.0, 0) rectangle ++(4.0, -1);
\fill[color2] (12.0, 0) rectangle ++(4.0, -1);
\fill[color1] (16.0, 0) rectangle ++(8.0, -1);
\fill[gray] (24.0, 0) rectangle ++(2.0, -1);
\fill[color3] (26.0, 0) rectangle ++(1.0, -1);
\fill[color4] (27.0, 0) rectangle ++(1.0, -1);
\fill[color3] (28.0, 0) rectangle ++(2.0, -1);
\fill[gray] (30.0, 0) rectangle ++(1.0, -1);
\fill[color5] (31.0, 0) rectangle ++(1.0, -1);
\fill[color4] (32.0, 0) rectangle ++(2.0, -1);
\fill[color5] (34.0, 0) rectangle ++(2.0, -1);
\fill[color3] (36.0, 0) rectangle ++(4.0, -1);
\fill[color4] (40.0, 0) rectangle ++(4.0, -1);
\fill[color2] (44.0, 0) rectangle ++(8.0, -1);
\fill[color3] (52.0, 0) rectangle ++(8.0, -1);
\fill[color2] (60.0, 0) rectangle ++(16.0, -1);
\fill[color3] (76.0, 0) rectangle ++(16.0, -1);
\fill[gray] (93.0, 0) rectangle ++(1.0, -1);
\fill[color6] (94.0, 0) rectangle ++(1.0, -1);
\fill[color6] (95.0, 0) rectangle ++(2.0, -1);


 \draw (0, 0) rectangle ++(2.0, -1) ;
\draw (2.0, 0) rectangle ++(1.0, -1) ;
\draw (3.0, 0) rectangle ++(1.0, -1) ;
\draw (4.0, 0) rectangle ++(2.0, -1) ;
\draw (6.0, 0) rectangle ++(2.0, -1) ;
\draw (8.0, 0) rectangle ++(4.0, -1)  node[midway] {\tiny 1};
\draw (12.0, 0) rectangle ++(4.0, -1)  node[midway] {\tiny 2};
\draw (16.0, 0) rectangle ++(8.0, -1)  node[midway] {\tiny 1};
\draw (24.0, 0) rectangle ++(2.0, -1) ;
\draw (26.0, 0) rectangle ++(1.0, -1) ;
\draw (27.0, 0) rectangle ++(1.0, -1) ;
\draw (28.0, 0) rectangle ++(2.0, -1) ;
\draw (30.0, 0) rectangle ++(1.0, -1) ;
\draw (31.0, 0) rectangle ++(1.0, -1) ;
\draw (32.0, 0) rectangle ++(2.0, -1) ;
\draw (34.0, 0) rectangle ++(2.0, -1) ;
\draw (36.0, 0) rectangle ++(4.0, -1)  node[midway] {\tiny 3};
\draw (40.0, 0) rectangle ++(4.0, -1)  node[midway] {\tiny 4};
\draw (44.0, 0) rectangle ++(8.0, -1)  node[midway] {\tiny 2};
\draw (52.0, 0) rectangle ++(8.0, -1)  node[midway] {\tiny 3};
\draw (60.0, 0) rectangle ++(16.0, -1)  node[midway] {\tiny 2};
\draw (76.0, 0) rectangle ++(16.0, -1)  node[midway] {\tiny 3};
\draw (93.0, 0) rectangle ++(1.0, -1) ;
\draw (94.0, 0) rectangle ++(1.0, -1) ;
\draw (95.0, 0) rectangle ++(2.0, -1) ;


 \draw[thick] (0.0, -1) -- ++(0, -1ex) node[below] {\tiny $\textcolor{strong1}{r_1'} = \textcolor{strong2}{r_2'}$};
 \draw[thick] (24.0, -1) -- ++(0, -1ex) node[below] {\tiny $\textcolor{strong3!80!black}{r_3'} = \textcolor{strong4}{r_4'}$};
 \draw[thick] (30.0, -1) -- ++(0, -1ex) node[below] {\tiny \textcolor{strong5}{$r_5'$}};
 \draw[thick] (93.0, -1) -- ++(0, -1ex) node[below] {\tiny \textcolor{strong6}{$r_6'$}};


 \draw[line width=0.75mm] (24.0, 0) ++(-2.25mm, 0) -- ++(0, -1);
\draw[line width=0.75mm] (36.0, 0) ++(-2.25mm, 0) -- ++(0, -1);
\draw[line width=0.75mm] (44.0, 0) ++(-2.25mm, 0) -- ++(0, -1);
\draw[line width=0.75mm] (76.0, 0) ++(-2.25mm, 0) -- ++(0, -1);
\draw[line width=0.75mm] (92.0, 0) ++(-2.25mm, 0) -- ++(0, -1);
\draw[line width=0.75mm] (97.0, 0) ++(-2.25mm, 0) -- ++(0, -1);
\end{scope}

\node[anchor=east] at (scheduleW) {\footnotesize $\alg (I')$};

\begin{scope}[xscale=0.1, yscale=0.75,yshift=-3.5cm]
\coordinate (scheduleNW) at (0,0);
\coordinate (scheduleW) at (0,-.5);
\coordinate (scheduleSW) at (0,-1);
\coordinate (scheduleNE) at (93.0,0);
\coordinate (scheduleE) at (93.0,-.5);
\coordinate (scheduleSE) at (93.0,-1);


 \fill[color1] (0, 0) rectangle ++(24.0, -0.5);
\fill[color2] (0, -0.5) rectangle ++(24.0, -0.5);
\fill[color3] (24.0, 0) rectangle ++(6.0, -0.5);
\fill[color4] (24.0, -0.5) rectangle ++(6.0, -0.5);
\fill[color5] (30.0, 0) rectangle ++(3.0, -1.0);
\fill[color3] (33.0, 0) rectangle ++(3.0, -0.3333333333333333);
\fill[color4] (33.0, -0.3333333333333333) rectangle ++(3.0, -0.3333333333333333);
\fill[color5] (33.0, -0.6666666666666666) rectangle ++(3.0, -0.3333333333333333);
\fill[color3] (36.0, 0) rectangle ++(8.0, -0.5);
\fill[color4] (36.0, -0.5) rectangle ++(8.0, -0.5);
\fill[color3] (44.0, 0) rectangle ++(4.0, -1.0);
\fill[color1] (48.0, 0) rectangle ++(12.0, -0.3333333333333333);
\fill[color2] (48.0, -0.3333333333333333) rectangle ++(12.0, -0.3333333333333333);
\fill[color3] (48.0, -0.6666666666666666) rectangle ++(12.0, -0.3333333333333333);
\fill[color2] (60.0, 0) rectangle ++(32.0, -0.5);
\fill[color3] (60.0, -0.5) rectangle ++(32.0, -0.5);
\fill[color6] (93.0, 0) rectangle ++(4.0, -1.0);


 \draw (0, 0) rectangle ++(24.0, -0.5)  node[midway] {\tiny 1};
\draw (0, -0.5) rectangle ++(24.0, -0.5)  node[midway] {\tiny 2};
\draw (24.0, 0) rectangle ++(6.0, -0.5)  node[midway] {\tiny 3};
\draw (24.0, -0.5) rectangle ++(6.0, -0.5)  node[midway] {\tiny 4};
\draw (30.0, 0) rectangle ++(3.0, -1.0)  node[midway] {\tiny 5};
\draw (33.0, 0) rectangle ++(3.0, -0.3333333333333333)  node[midway] {\tiny 3};
\draw (33.0, -0.3333333333333333) rectangle ++(3.0, -0.3333333333333333)  node[midway] {\tiny 4};
\draw (33.0, -0.6666666666666666) rectangle ++(3.0, -0.3333333333333333)  node[midway] {\tiny 5};
\draw (36.0, 0) rectangle ++(8.0, -0.5)  node[midway] {\tiny 3};
\draw (36.0, -0.5) rectangle ++(8.0, -0.5)  node[midway] {\tiny 4};
\draw (44.0, 0) rectangle ++(4.0, -1.0)  node[midway] {\tiny 3};
\draw (48.0, 0) rectangle ++(12.0, -0.3333333333333333)  node[midway] {\tiny 1};
\draw (48.0, -0.3333333333333333) rectangle ++(12.0, -0.3333333333333333)  node[midway] {\tiny 2};
\draw (48.0, -0.6666666666666666) rectangle ++(12.0, -0.3333333333333333)  node[midway] {\tiny 3};
\draw (60.0, 0) rectangle ++(32.0, -0.5)  node[midway] {\tiny 2};
\draw (60.0, -0.5) rectangle ++(32.0, -0.5)  node[midway] {\tiny 3};
\draw (93.0, 0) rectangle ++(4.0, -1.0)  node[midway] {\tiny 6};


 \draw[thick] (0.0, -1) -- ++(0, -1ex) node[below] {\tiny $\textcolor{strong1}{r_1''} = \textcolor{strong2}{r_2''}$};
 \draw[thick] (24.0, -1) -- ++(0, -1ex) node[below] {\tiny $\textcolor{strong3!80!black}{r_3''} = \textcolor{strong4}{r_4''}$};
 \draw[thick] (30.0, -1) -- ++(0, -1ex) node[below] {\tiny $\textcolor{strong5}{r_5''}$};
 \draw[thick] (93.0, -1) -- ++(0, -1ex) node[below] {\tiny $\textcolor{strong6}{r_6''}$};


 \draw[line width=0.75mm] (36.0, -0.6666666666666666) ++(-2.25mm, 0) -- ++(0, -0.3333333333333333);
\draw[line width=0.75mm] (44.0, -0.5) ++(-2.25mm, 0) -- ++(0, -0.5);
\draw[line width=0.75mm] (60.0, 0) ++(-2.25mm, 0) -- ++(0, -0.3333333333333333);
\draw[line width=0.75mm] (92.0, 0) ++(-2.25mm, 0) -- ++(0, -0.5);
\draw[line width=0.75mm] (92.0, -0.5) ++(-2.25mm, 0) -- ++(0, -0.5);
\draw[line width=0.75mm] (97.0, 0) ++(-2.25mm, 0) -- ++(0, -1.0);
\end{scope}

\node[anchor=east] at (scheduleW) {\footnotesize $\wsetf{}(I'')$};

\draw[dashed] (6.0, -1) -- (6.0, -3.75);
\draw[dashed] (9.200000000000001, -1) -- (9.200000000000001, -3.75);
\draw[dashed] (9.200000000000001, -1) -- (9.200000000000001, -3.75);
\draw[dashed] (4.4, -1) -- (4.4, -3.75);
\draw[dashed] (3.6, -1) -- (3.6, -3.75);
\draw[dashed] (9.700000000000001, -1) -- (9.700000000000001, -3.75);
\end{tikzpicture}

\caption{An example for the three schedules considered in the proof of \cref{thm:trivial_bound_release_dates} for the instance $I$ with $\vec{p} = (8,9,9,3,2,2)^{\top}$\llap, $\vec{r} = (0,0,18,18,30,93)^{\top}$\llap, and unit weights $\vec{w} = \vec{1}$, for $b = 2$.
Gray areas indicate infinitesimal probing; thick lines indicate the completion of a job.
\textbf{Top:} The schedule of $\alg*_2 (I)$ for the original instance~$I$.
\textbf{Middle:} The schedule of $\alg*_2 (I')$ for the modified instance~$I'$ with processing times rounded to the next integer power of $b=2$. The release dates $r_3'$ and $r_4'$ are shifted such that they coincide with the end of a probing.
\textbf{Bottom:} The schedule of $\wsetf$ for the instance~$I''$. The processing times (corresponding to the colored areas) correspond to the total probing times in the schedule of~$\alg*_2 (I')$.
The completion times in this schedule are higher or equal to the completion times in the schedule of~$\alg*_2 (I')$, as indicated by the vertical dashed lines.}
\label{fig:alg-vs-wsetf}
\end{figure}

\newcommand{\integerexponent}{\jobrank}

\begin{lemma}\label{lem:release_dates_I_I'}
	For an arbitrary instance $I = (\vec{p}, \vec{w}, \vec{r})$ of \threefield{1}{r_j}{w_jC_j} there exists another instance $I' = (\vec{p}', \vec{w}, \vec{r}')$ with $\vec{p}'\leq \frac{b^3}{2b-1} \vec{p}$ and $\vec{r}'\leq \frac{b^3}{2b-1} \vec{r}$ such that $\alg(I)\leq\alg(I')$. Moreover, the instance $I'$ has the property that for every job~$j$ there exists an integer $\jobrank_j\in\Z$ with $p_j'=w_j b^{\jobrank_j}$ and every release date either coincides with the end of a probing in the schedule $\alg(I')$ or is at some point in time at which $\alg$ idles in $I'$.
\end{lemma}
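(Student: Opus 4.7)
My plan is to obtain $I'$ from $I$ by two successive transformations: first round the processing times upward to integer powers of $b$ times weights, and then shift the release dates forward to event points of the resulting schedule. Setting $p_j' := w_j b^{q_j}$ with $q_j := \lceil \log_b(p_j/w_j) \rceil$ yields $p_j \le p_j' \le b\,p_j \le \tfrac{b^3}{2b-1}\,p_j$ and the structural condition $p_j' = w_j b^{q_j}$. The schedule $\alg$ produces on $(\vec{p}',\vec{w},\vec{r})$ differs from the one on $(\vec{p},\vec{w},\vec{r})$ only in that the successful (final) probing of each job, which was cut short at time $p_j$ in the original schedule, now runs for its full allocated duration $w_j b^{q_j}$. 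Since $\alg$'s decisions depend only on ranks and on the set of released-but-unfinished jobs (and not on processing times prior to completion), extending these probings translates into a pure delay of every subsequent event, giving $\alg(\vec{p},\vec{w},\vec{r}) \le \alg(\vec{p}',\vec{w},\vec{r})$.

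For the release dates I construct $\vec{r}'$ iteratively. Processing jobs in nondecreasing order of $r_j$, I maintain $\vec{r}^{(0)} = \vec{r}, \vec{r}^{(1)}, \ldots, \vec{r}^{(n)}$, and in iteration $j$ I inspect the schedule of $\alg$ on $(\vec{p}',\vec{w},\vec{r}^{(j-1)})$; note that $r_j^{(j-1)} = r_j$ because iteration $j$ is the first to modify the $j$-th coordinate. If $r_j$ lies at an event point or inside an idle interval of that schedule, I set $r_j^{(j)} := r_j$; otherwise $r_j$ lies strictly within some probing $(t,t+\tau)$, and I set $r_j^{(j)} := t+\tau$. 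Finally $\vec{r}' := \vec{r}^{(n)}$. In the latter case, Lemma~\ref{lem:probing_end_bound} gives $t+\tau \le bt$; combined with $t \le r_j$ this yields $r_j^{(j)} \le b\,r_j \le \tfrac{b^3}{2b-1}\,r_j$.

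A shift performed in iteration $j$ only alters the schedule after time $r_j^{(j)}$, so every event point of the intermediate schedule lying before $r_j^{(j)}$ (and in particular every previously fixed $r_k^{(k)}$ with $k<j$) remains an event point of all later intermediate schedules and, ultimately, of $\alg(I')$. An analogous delay-only argument applies here: shifting release dates forward can only postpone subsequent events, yielding $\alg(\vec{p}',\vec{w},\vec{r}) \le \alg(\vec{p}',\vec{w},\vec{r}')$. Chaining this with the inequality from the first step gives $\alg(I) \le \alg(I')$, and the required bounds on $p_j'$ and $r_j'$ are in place.

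The main technical subtlety I anticipate is making both ``events are only delayed'' arguments precise. The cleanest route is an induction on the events (probe start/end, release, completion) of the modified schedule, showing at each event that the corresponding event in the reference schedule occurs no later. The crucial supporting facts are that $\alg$ never interrupts a probing and that its rank-based scheduling decisions are unchanged by either transformation, which together rule out any beneficial reordering of events that could shorten some completion time.
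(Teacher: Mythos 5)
Your second step (pinning each release date to the end of the probing that contains it, in the schedule already under consideration) is fine and would even give the stronger bound $r_j'\le b\,r_j$ via \cref{lem:probing_end_bound}. The genuine gap is in your first step. You claim that replacing $\vec p$ by $\vec p'$ while keeping $\vec r$ fixed results in ``a pure delay of every subsequent event.'' This presupposes that the \emph{order} of probings is unchanged, but it is not: extending each successful probing from duration $p_j$ to $w_jb^{q_j}$ shifts all later probings to later absolute times, so a fixed release date $r_k$ that originally fell just after the end of some probing may now fall strictly inside (or before) that probing. Since a newly released job has rank $-\infty$ and is selected at the next probing end, job~$k$ then enters the probing sequence one or more positions earlier than before, reordering the subsequent probings; the event-by-event ``only delayed'' induction you propose breaks at exactly this point, and it is no longer clear that no completion time (or the total) can decrease. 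This is why the paper does \emph{not} compare $\alg(\vec p,\vec w,\vec r)$ with $\alg(\vec p',\vec w,\vec r)$: it builds the modified schedule $\bar S$ with the \emph{same probing order} as $S$ (only stretched), and simultaneously \emph{defines} $r_j'$ as the end of the probing of $\bar S$ corresponding to the last probing started before $r_j$ in $S$, so that $\alg$ on $I'$ reproduces $\bar S$ by construction and $\alg(I)\le\alg(I')$ holds probing by probing.

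This simultaneous construction is also the reason the paper needs the constant $\frac{b^3}{2b-1}$ rather than your $b$: the probing of $\bar S$ whose end defines $r_j'$ is the delayed image of the probing that preceded $r_j$ in the \emph{original} schedule, and its start time $t'$ may exceed $r_j$. The paper therefore bounds the delay of probing start times by comparing, job by job, the total time spent on each already-completed job in $S$ versus $\bar S$ (ratio at most $\frac{b^2}{2b-1}$), and only then applies \cref{lem:probing_end_bound} to get $r_j'=t'+\tau'\le bt'\le\frac{b^3}{2b-1}r_j$. To repair your argument you would either have to prove the monotonicity claim of your first step directly (which is not obvious and is nowhere established in the paper), or restructure it along the lines of the paper's one-shot construction.
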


\begin{proof}
	Let $I$ be an arbitrary instance and consider the schedule $S$ produced by $\alg$ for~$I$. For a fixed job~$j$ with $r_j>0$, we denote by $\pi(j) = (t(j), k(j), \tau(j))$ the last probing that is started by $\alg$ before the release date of~$j$, hence, $\pi(j)$ satisfies
	$
	t (j) < r_j
	$.

	For every job~$j$, let
	\[
	p_j' \coloneqq  w_j b^{\jobrank_j},
	\]
	where $\jobrank_j \coloneqq \big\lceil \log_b \big( \frac{p_j}{w_j} \big)\big\rceil$.
	Observe that $p_j \leq p_j' \leq b  p_j \leq \frac{b^3}{2b-1}  p_j$.

	We define a new schedule $\bar{S}$ as follows.
		For every probing  $(t, k, \tau)$ in the schedule~$S$, there is a corresponding probing $(t', k', \tau')$ with $k'=k$, $\tau' = \tau$ and $t' \geq t$ in the schedule $\bar{S}$, where the times $t'$ are chosen such that no additional idle time exists in $\bar{S}$.
		In particular, in the schedule $\bar{S}$ the same jobs are probed for the same times in the same order as in the schedule~$S$.
		However, the actual duration of any probing $(t', k', \tau')$ in $\bar{S}$ depends on the processing time~$p_j'$ rather than $p_j$, i.e., the duration is $\min \{p_j', \tau\}$. In particular, all probings have the same duration except for those, where jobs complete. These probings last longer and might shift all subsequent probings.
	We define new release dates $\vec{r}'$ as follows.
		For every job~$j$ with $r_j > 0$, consider the probing $\pi'(j) = (t'(j), k'(j), \tau'(j))$ in the schedule~$\bar{S}$ that corresponds to the probing~$\pi(j)$, and set $r_j' \coloneqq \max\{ r_j, t' (j) + \tau'(j) \}$. For all jobs~$j$ with $r_j = 0$, we set $r_j' = 0$.
		Overall, we define a new instance

	\[
	I' \coloneqq (\vec{p}', \vec{w}, \vec{r}')
	.
	\]
	Then, the schedule produced by $\alg$ for $I'$ is exactly $\bar{S}$ by construction. In particular, we have $\alg (I) \leq \alg (I')$.

	We now argue that $\vec r' \le \frac{b^3}{2b-1} \vec r$. To this end, consider a fixed job~$j$. If $r'_j=r_j$ there is nothing to show. Hence, let $r'_j=t'+\tau'$. We compare the total time devoted to each job~$k$ before the start times of the probing performed when $j$ is released.
	If $k$ is not completed before $t$ in $\alg(I)$, the total probing time of $k$ before $t$ in $\alg(I)$ is the same as the total probing time of $k$ before $t'$ in $\alg(I')$.
	If $k$ is completed before $t$ in $\alg(I)$, the total time devoted to job~$k$ is $w_k\sum_{i=-\infty}^{\jobrank_k - 1} b^i + p_k = w_k \big( \frac{b^{\jobrank_k}}{b-1} + \frac{p_k}{w_k}\big)$. On the other hand, in $\alg(I')$ the total time devoted to~$k$ is $w_k\sum_{i=-\infty}^{\jobrank_k} b^i = w_k\frac{b^{\jobrank_k + 1}}{b-1}$.
	The ratio between these two elapsed times is
	\[
	\frac{b^{\jobrank_k + 1}}{b^{\jobrank_k} + (b-1) \frac{p_k}{w_k}} < \frac{b^{\jobrank_k + 1}}{b^{\jobrank_k} + (b-1) \cdot b^{\jobrank_k - 1}} = \frac{b^2}{2b-1}.
	\]
	Since this holds for every job~$k$, we have $t' \leq \frac{b^2}{2b-1} t$.

	Using \cref{lem:probing_end_bound} we obtain
	\[
	r_j' =
	t' + \tau' \leq b t'
	\leq
	\frac{b^3}{2b-1} t
	\leq \frac{b^3}{2b-1} r_j
	. \qedhere
	\]
\end{proof}

For every time $\theta > 0$ and job~$j$, we denote by
\[
 \jobrank_j (\theta) \coloneqq \max \bigl\{ q\in\mathbb{Z} \bigm\vert \exists \text{ probing } (t, j, w_j b^{q-1}) \text{ of } \alg(I') \text{ with } t + w_j b^{\jobrank-1} < \theta \bigr\}
\]
the rank of job~$j$ at time~$\theta$ in $\alg(I')$, where $(t, j, w_j b^{\jobrank-1})$ describes the probing of job~$j$ starting at time~$t$ for a probing time of $w_j b^{\jobrank-1}$ and $I'$ is defined as in~\cref{lem:release_dates_I_I'}.
In other words, $\jobrank_j(\theta)=q$ holds if and only if job~$j$ has already been probed for $w_j b^{q-1}$ before time $\theta$, and the next probing operation of job~$j$ will be a probing for $w_j b^q$ if not completed.
Note that the function $\jobrank_j (\theta)$ is piecewise constant and increases by $1$ whenever a probing has ended. Observe that $\jobrank_j (\infty)$ is exactly the value~$\jobrank_j$ defined in the proof of \cref{lem:release_dates_I_I'}.
Further,  we denote by
\[
	\jobrank (\theta) \coloneqq \min \big\{ \jobrank_j (\theta) \bigm| j \in [n]: r_j < \theta \text{ and } C_j^{\alg} (I') \ge \theta
	 \bigr\}
\]
the minimal rank among all jobs that are released and not completed at time~$\theta$. If there is no such job at time~$\theta$, we set $\jobrank(\theta) = \infty$.
Note that the function $\jobrank(\theta)$ decreases whenever new jobs are released and increases when the minimum rank of all released and not completed jobs increases. The definition is made so that $q$ is left-continuous.
We also define
\begin{align*}
\jobrank_{\min} &\coloneqq \min_{j \in [n]} \min \bigl\{\jobrank(r_j),\ q(C_j^{\alg}(I'))\bigr\} = \min_{j \in [n]} \min \bigl\{\jobrank(r_j),\ \jobrank_j(\infty)\bigr\}, \\
\jobrank_{\max} &\coloneqq \max_{j \in [n]} \jobrank_j(\infty).
\end{align*}

\begin{lemma}\label{lem:release_dates_I'_I''}
	For an instance $I' = (\vec p', \vec w, \vec r')$ with the properties described in \cref{lem:release_dates_I_I'} the instance $I'' = (\vec p'', \vec w, \vec r')$ with $\vec p'' \coloneqq \frac{b}{b-1} \vec p'$ satisfies $\alg(I') \le \wsetf(I'')$.
\end{lemma}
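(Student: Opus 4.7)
The plan is to prove the stronger pointwise claim $C_j^{\alg}(I')\le C_j^{\wsetf}(I'')$ for every job~$j$, which yields $\alg(I')\le\wsetf(I'')$ upon summing over $j$ with weights~$w_j$.

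I first compute the total probing time that $\alg$ devotes to each job in $\alg(I')$. Since $I'$ satisfies $p_j'=w_jb^{q_j}$, the failed probings of~$j$ at ranks $q<q_j$ contribute $\sum_{q=-\infty}^{q_j-1}w_jb^q = \frac{w_jb^{q_j}}{b-1} = \frac{p_j'}{b-1}$, while the successful probing at rank $q_j$ contributes $w_jb^{q_j}=p_j'$. The total is $\frac{b}{b-1}p_j'=p_j''$, matching exactly the processing time of job~$j$ in the instance $I''$. In particular, both $\alg(I')$ and $\wsetf(I'')$ expend the same total amount of work per job.

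The main step is a round-by-round comparison of the two schedules. Let $T_q$ denote the time at which round~$q$ of $\alg(I')$ ends. I will argue that at every $T_q$, both $\alg(I')$ and $\wsetf(I'')$ have completed exactly the same set of jobs, namely $\{k : r_k'\le T_q\text{ and }q_k\le q\}$. Indeed, at time $T_q$ every unfinished job~$k$ in $\alg(I')$ has received probing time $w_kb^{q+1}/(b-1)$, which equals $p_k''$ precisely when $q_k\le q$; simultaneously, $\wsetf(I'')$ equalizes $Y_k/w_k$ across its active set, and at $T_q$ this common level reaches $b^{q+1}/(b-1)=p_k''/w_k$ for the jobs that are just completing. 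Within an interval $(T_{q-1},T_q]$, the jobs $k$ with $q_k=q$ are processed one-at-a-time in $\alg(I')$ (each completing in its own probing slot at some time $\le T_q$), whereas in $\wsetf(I'')$ they share the active set and all reach completion simultaneously at $T_q$; hence $C_k^{\alg}(I')\le T_q=C_k^{\wsetf}(I'')$ for every such~$k$.

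The main obstacle is verifying this round-boundary alignment in the presence of release dates. Here the construction of $I'$ in \cref{lem:release_dates_I_I'} is essential: every release date $r_k'$ coincides with either the end of a probing in $\alg(I')$ or with an idle phase of the algorithm. Consequently, each newly released job enters the schedule at a synchronization point between $\alg(I')$ and $\wsetf(I'')$, and is processed exclusively in both schedules (via the infinitesimal rounds in $\alg$, and via the active-set rule in $\wsetf$) for the same amount of time until it catches up with the other unfinished jobs. A formal induction on the sequence of round-boundary events and release events then verifies the alignment of $T_q$ across both schedules and establishes the pointwise inequality, from which the lemma follows.
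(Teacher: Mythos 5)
Your overall strategy is the right one and coincides with the paper's: observe that the total probing time $\alg(I')$ spends on job~$j$ is exactly $\frac{b}{b-1}p_j'=p_j''$, and then prove the pointwise inequality $C_j^{\alg}(I')\le C_j^{\wsetf}(I'')$ by matching elapsed times of the two schedules at the ends of probing rounds. However, the proposal has a genuine gap precisely where the lemma is hard. Your round-boundary claim is phrased in terms of a single global time ``$T_q$ at which round~$q$ of $\alg(I')$ ends,'' but with release dates there is no such global time: each release cohort $J^{(i)}=\{j: r_j'=r^{(i)}\}$ has its own round structure (a job released at $r^{(i)}$ restarts from rank $-\infty$ and only then climbs), so round~$q$ ends at cohort-dependent times $e^{(i)}(q)$, and these rounds interleave — round $q$ of an early cohort can straddle one or several later release dates. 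As stated, the claim that at $T_q$ both schedules have completed exactly $\{k: r_k'\le T_q,\ q_k\le q\}$ does not typecheck (a job released just before $T_q$ with small $q_k$ is completed by neither schedule at the cohort round-ends of older jobs), and the assertion that jobs with $q_k=q$ ``all reach completion simultaneously at $T_q$'' in $\wsetf(I'')$ is only true per cohort, at $e^{(i)}(q)$.

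The paper's proof is essentially the careful execution of the induction you defer in your last sentence, and that execution is the entire content of the lemma: it fixes a global $\minrank$ so that each cohort's exclusive ``catch-up'' phase finishes before the next release and before any completion (Claim~1), and then inducts on $q\ge\minrank$ over all cohorts simultaneously (Claim~2). The delicate case is exactly the one your sketch does not address: when $e^{(i)}(q)$ lies beyond later release dates, one must group the cohorts whose round-$q$ ends coincide, restrict attention to the union of sub-intervals $\mathcal T$ on which $\alg$ probes these jobs at rank $q$, and argue by contradiction that $\wsetf$ cannot push any of these jobs past elapsed time $w_jb^{q+1}/(b-1)$ before time $e^{(i)}(q)$ (otherwise $\wsetf$ would already have spent at least as much time on this job set as $\alg$ needs to finish round~$q$, forcing $t\ge e^{(i)}(q)$). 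Without this argument, the claimed synchronization of the two schedules at round boundaries — and hence the pointwise inequality $C_j^{\alg}(I')\le C_j^{\wsetf}(I'')$ — remains unproven.
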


\begin{proof}

Observe that

\[
	p''_j =
 	 \frac{b}{b-1} \, p'_j = w_j \frac{b^{\jobrank_j + 1}}{b - 1} = \sum_{i=-\infty}^{\jobrank_j} w_j b^i = Y_j^{\alg}(I',C_j^{\alg}(I'))
 ,
\]
i.e., $p''_j$ is obtained by adding all failing times of $j$ to the processing time $p'_j$.
We show for all jobs $j$ that
\[
C_j^{\alg} (I') \leq C_j^{\wsetf} (I'').
\]
Let $0 = r^{(0)} < \cdots < r^{(n')}$ be all distinct release dates, and for every $i \in \{0,1,\dotsc,n'\}$ and $\jobrank \in \Z$ let $J^{(i)} \coloneqq \{j \in [n] \mid r_j = r^{(i)}\}$ and
\[
e^{(i)} (\jobrank) \coloneqq \inf \big\{ t \bigm| t > r^{(i)} \text{ and } \jobrank(t) \geq \jobrank +1\big\}
\]
be the end time of round $\jobrank$ of jobs released at $r^{(i)}$, i.e., at time $e^{(i)}(q)$ every job~$j \in J^{(i)}$ not completed at an earlier round has been probed for $\sum_{\widehat \jobrank = -\infty}^\jobrank w_jb^{\widehat \jobrank}=w_j\frac{b^{\jobrank+1}}{b-1}$ in total in $\alg (I')$.

In the following we will show that for all $\jobrank \in \{\minrank - 1, \minrank, \dotsc,\jobrank_{\max}\}$ and for all $i \in \{0,\dotsc,n'\}$ every job~$j\in J^{(i)}$
has the same elapsed time at $e^{(i)}(\jobrank)$ in both schedules, i.e., $Y^{\alg}_j(I', e^{(i)}(\jobrank)) = Y^{\wsetf}_j(I'', e^{(i)}(\jobrank))$. This will be done by induction on~$\jobrank$.
\Cref{fig:release_dates_I'_I''} illustrates the end times of rounds $e^{(i)}(q)$ as well as the value $\minrank$.
\begin{figure}
 \centering
    \begin{tikzpicture}
        \draw[thick, -{stealth}] (0,0) -- (15,0) node[right] {\footnotesize $t$};

        \foreach \x/\i/\col in {0/0/strong1,3/1/strong2,7/2/strong3!80!black,10/3/strong4} {
            \draw[\col, ultra thick] (\x,.2) -- (\x,-.2) node[below] {\footnotesize $r^{(\i)}$};
        }

        \foreach \x/\i in {.9/0,3.9/1,7.9/2,11.1/3} {
            \draw[{stealth}-] (\x, -.6) -- (\x, -1) node[below] {\footnotesize $e^{(\i)}(\minrank - 1)$};
        }

        \foreach \x/\q in {.9/4,1.8/5} {
            \draw[thick, strong1] (\x, .075) -- (\x, -.075) node[below] (e-0-\q) {\scriptsize $e^{(0)}(\q)$};
        }

        \fill[pattern=north east lines, pattern color=strong5]
            (0,.1) rectangle (.9, -.1);

        \foreach \x/\q in {3.9/4,5.2/5,6.1/6} {
            \draw[thick, strong2] (\x, .075) -- (\x, -.075) node[below] (e-1-\q) {\scriptsize $e^{(1)}(\q)$};
        }

        \fill[pattern=north east lines, pattern color=strong5]
            (3,.1) rectangle (5.2, -.1);

        \foreach \x/\q in {7.9/4,9.1/5} {
            \draw[thick, strong3!80!black] (\x, .075) -- (\x, -.075) node[below] (e-2-\q) {\scriptsize $e^{(2)}(\q)$};
        }

        \fill[pattern=north east lines, pattern color=strong5]
            (7,.1) rectangle (7.9, -.1);

        \foreach \x/\q in {11.1/4,12/5,13.2/6,14.3/7} {
            \draw[thick, strong4] (\x, .075) -- (\x, -.075) node[below] (e-3-\q) {\scriptsize $e^{(3)}(\q)$};
        }

        \fill[pattern=north east lines, pattern color=strong5]
            (10,.1) rectangle (12, -.1);

        \node[strong1, below=-1ex of e-1-6] (e-0-6) {\scriptsize $e^{(0)}(6)$};

        \node[strong3!80!black, below=-1ex of e-3-6] (e-2-6) {\scriptsize $e^{(2)}(6)$};

        \node[strong3!80!black, below=-1ex of e-3-7] (e-2-7) {\scriptsize $e^{(2)}(7)$};
        \node[strong2, below=-1ex of e-2-7] (e-1-7) {\scriptsize $e^{(1)}(7)$};
        \node[strong1, below=-1ex of e-1-7] (e-0-7) {\scriptsize $e^{(0)}(7)$};

    \end{tikzpicture}

    \caption{Illustration of the situation in the proof of \cref{lem:release_dates_I'_I''}.
    The release dates $r^{(i)}$ and the end points of the rounds in $\alg$ subdivide the time axis into intervals.
    The shaded intervals do \emph{not} contain any successful probings.
    The value $\minrank = 5$ is chosen such that the end of the round $e^{(i)}(\minrank - 1)$ is before the next release date and such that between $r^{(i)}$ and $e^{(i)}(\minrank - 1)$ no job completes.
    \cref{clm:qmin} refers to the endpoints $e^{(i)}(\minrank - 1)$, while \cref{clm:YWSETF-YALG} refers to all subsequent endpoints of rounds. Note, that some endpoints (e.g, $e^{(2)}(6)$ or $e^{(2)} (7)$) lie after one or more subsequent release dates.
    }

    \label{fig:release_dates_I'_I''}
\end{figure}
For the sake of simplicity, we write for the remainder of the proof $Y^{\alg}_j (t) = Y^{\alg}_j (I',t)$ and $ Y^{\wsetf}_j (t) = Y^{\wsetf}_j (I'',t)$. We start with the base case $\jobrank = \minrank-1$.

\begin{claim} \label{clm:qmin}
For all $i\in\{0,\ldots,n'\}$ and for all $j\in J^{(i)}$
we have
\[
	Y^{\alg}_j (e^{(i)}(\minrank-1)) = Y^{\wsetf}_j (e^{(i)}(\minrank-1))=w_j\frac{b^{\minrank}}{b-1}
	.
\]
\end{claim}
\newcommand{\nextrankH}[2]{\psi^{(#2)}_{#1}}
\begin{proof}[Proof of \cref{clm:qmin}]
	For every $i$, the definition of $e^{(i)}$ and the left-continuity of $q$ imply that $\jobrank(t) < \minrank$ for all $t \in [r^{(i)}, e^{(i)}(\minrank-1)]$. Therefore, no job is released or completed by $\alg$ during $[r^{(i)},e^{(i)}(\minrank-1)]$. Hence,
 \begin{equation}
  e^{(i)}(\minrank-1) < r^{(i+1)}. \label{e_before_r}
 \end{equation}

 Assume for a contradiction that the claim is wrong, and let $i$ be the minimum index for which the statement is violated.
 Clearly, at time $r^{(i)}$ all jobs from $J^{(i)}$ have not been processed in either $\wsetf$ and $\alg$, while all earlier released jobs with $r_j < r^{(i)}$ have either been completed in both $\wsetf$ and $\alg$ or have in both schedules an elapsed time of at least $w_j \frac{b^{\minrank}}{b-1}$, using \eqref{e_before_r} and the minimality assumption. Thus, both schedules start by processing only jobs from $J^{(i)}$. For $\alg$ we know that these are contiguously processed at least until time $e^{(i)}(\minrank-1)$.
 Since no job from $J^{(i)}$ is completed before $e^{(i)}(\minrank - 1)$, we know that $p_j' \ge w_j b^{\minrank}$, and thus $p_j'' \ge w_j \frac{b^{\minrank+1}}{b-1}$ for all $j \in J^{(i)}$.
 Moreover, for any job~$j\in J^{(i)}$ we have
	\begin{equation} \label{eq:Y_ALG_b}
		 Y_j^{\alg} (e^{(i)}(\minrank-1)) = \sum_{q=-\infty}^{\minrank-1} w_j b^q = w_j \frac{b^{\minrank}}{b-1}.
	\end{equation}
	Therefore,
		\begin{equation} \label{eq:e^0_qmin}
			e^{(i)}(\minrank - 1) - r^{(i)} = \sum_{j \in J^{(i)}} Y_j^{\alg} (e^{(i)}(\minrank-1))  \stackrel{\eqref{eq:Y_ALG_b}}= \frac{b^{\minrank}}{b-1}  \sum_{j \in J^{(i)}} w_j
			.
	\end{equation}
	Until the first completion time \wsetf{} processes each job~$j \in J^{(i)}$ with a rate of $w_j/\sum_{k \in J^{(i)}} w_k$. Suppose the first completing job~$j$ is completed at time $ t < e^{(i)}(\minrank-1)$. Then it has elapsed time
	\[
	Y_j^{\wsetf}(t) = \frac{w_j}{\sum_{k \in J^{(i)}} w_k} \cdot  (t - r^{(i)}) < \frac{w_j}{\sum_{k \in J^{(i)}} w_k} \cdot \bigl(e^{(i)}(\minrank - 1) - r^{(i)}\bigr) \stackrel{\eqref{eq:e^0_qmin}}= w_j \cdot \frac{b^{\minrank}}{b-1} \le p_j'',
	\]
	a contradiction since $j$ would have elapsed time strictly less than its processing time. Therefore, we have $C_j^{\wsetf}(I'') \ge e^{(i)}(\minrank-1)$ for all $j \in J^{(i)}$, and the elapsed times of these jobs~$j$ at this time can be computed as
	\begin{align*}
		Y_j^{\wsetf}(e^{(i)}(\minrank-1))
		= \frac{w_j}{\sum_{k \in J^{(i)}} w_k} \cdot \bigl(e^{(i)}(\minrank - 1) - r^{(i)}\bigr)
		\stackrel{\eqref{eq:e^0_qmin}}= w_j \frac{b^{\minrank}}{b-1}
		\stackrel{\eqref{eq:Y_ALG_b}}= Y_j^{\alg} (e^{(i)}(\minrank-1)).
	\end{align*}
 This contradicts the assumption that the claim is wrong for $i$, concluding the proof of \cref{clm:qmin}.
\end{proof}

 Now we generalize the claim to arbitrary $q$. We use the notation from \cref{lem:release_dates_I_I'}, so that $p_j' = b^{\jobrank_j}$ for every $j \in [n]$, and thus $p_j'' = \frac{b^{\jobrank_j+1}}{b-1}$.

	\begin{claim} \label{clm:YWSETF-YALG}
  For all $q \in \{\minrank - 1,\minrank,\dotsc,\maxrank\}$, for all $i \in \{0,\dotsc,n'\}$, and for all $j \in J^{(i)}$ we have
  \[
  Y^{\wsetf}_j (e^{(i)} (q)) = Y^{\alg}_j (e^{(i)} (q) )
  .
  \]
  Moreover, if $\jobrank_j \ge q$, then this quantity is equal to $w_j\frac{b^{q+1}}{b-1}$, and if $\jobrank_j = q$, then $C_j^{\wsetf}(I'')=e^{(i)}(q)$.
	\end{claim}
\begin{proof}[Proof of \cref{clm:YWSETF-YALG}]
 As announced, this is shown by induction on $q$. The case $q = \minrank-1$ follows from \cref{clm:qmin} because every job has $\jobrank_j > \minrank - 1$. Now we assume that $q \ge \minrank$ and the claim is true for all $q'$ with $\minrank - 1 \le q' < q$. For all $i \in \{0,\dotsc,n'\}$, all jobs $j \in J^{(i)}$ with $\jobrank_j < q$ are completed in $\alg$ before time $e^{(i)}(\jobrank_j)$. The induction hypothesis implies that $Y_j^{\alg}(e^{(i)}(\jobrank_j)) = Y_j^{\wsetf}(e^{(i)}(\jobrank_j)) = w_j \frac{b^{\jobrank_j + 1}}{b-1} = p_j''$. Hence, also in the \wsetf{} these jobs are completed until time $e^{(i)}(\jobrank_j) \le e^{(i)}(q-1)$. In the following we thus restrict to jobs with $\jobrank_j \ge q$.

  Assume for a contradiction that the claim is wrong for $q$, and let $i$ be the smallest index for which the statement fails. Let $e \coloneqq e^{(i)}(q)$, and let $\mathcal I(e) \coloneqq \bigl\{i' \in \{0,\dotsc,n'\} \mid e^{(i')}(q) = e\bigr\} = \{\underline i,\underline i+1,\dotsc,\overline i\}$.
  Let \[J' \coloneqq \biggl\{j \in \bigcup_{i \in \mathcal I(e)} J^{(i)} \biggm\vert \jobrank_j \ge q\biggr\}.\]
 All not completed jobs~$j \in J^{(i')}$ with $i' < \underline i \le i$ have $e^{(i')}(q) < r^{(\underline i)}$, and hence, as $i$ was chosen to be minimal, elapsed time of at least $w_j\frac{b^{q+1}}{b-1}$ in the schedules constructed by both $\alg$ and $\wsetf$. Therefore, none of these jobs are probed in the interval $[r^{(\underline i)}, e]$ by $\alg$.
 We know by the property of $I'$ that release dates are only at ends of probings, which implies that during the union of intervals
	\begin{equation}\label{eq:busy_interval}
	\mathcal{T}\coloneqq\bigcup_{i' = \underline i}^{\overline i-1} (e^{(i')}(q-1),\,r^{(i'+1)}] \,\cup\, (e^{(\overline i)}(q-1),\,e]
	\end{equation}
	each job $j \in J'$ is probed for $w_j b^\jobrank$ by \alg. Observe that in this formula it is possible that the left bound of an interval is not smaller than the right bound in which case the interval is empty. Therefore, using the property of $I'$ that the duration of each probing is exactly the probing time itself we have
	\begin{equation}\label{eq:busy_interval_length}
	\sum_{i' = \underline i}^{\overline i-1} \big(r^{(i'+1)} - e^{(i')}(q-1)\bigr)^+ + e -e^{(\overline i)}(q-1) = \sum_{j\in J'} w_j b^{\jobrank},
	\end{equation}
	where $(\cdot)^+$ denotes the positive part. We next argue that in the \wsetf{} schedule for all $t \in \mathcal T \setminus \{e\}$ every job~$j \in J'$ has elapsed time~$Y_j^{\wsetf}(t) < w_j \frac{b^{q+1}}{b-1}$. Since all $\jobrank_j \ge q$ for $j \in J'$, this, along with the lower bound for the elapsed processing times of previously released jobs shown above, implies that also \wsetf{} is not running any jobs released before $r^{(\underline i)}$ during this time, and, in view of the construction of the instance~$I''$, that no job from $J'$ is completed within this time. To show the claimed statement, consider the first moment in time~$t \in \mathcal T$ at which some job~$j \in J'$ has elapsed processing time~$Y_j^{\wsetf}(t) \ge w_j \frac{b^{q+1}}{b-1}$. Let $i^* \in \mathcal I(e)$ be the index such that $t \in (e^{(i^*)}(q-1), r^{(i^*+1)})$. Since \wsetf{} processes $j$ directly before $t$, the job~$j$ must have minimum weighted elapsed time among all released and unfinished jobs, so in particular among all jobs~$k \in \bigcup_{i'=\underline i}^{i^*} J^{(i')}$ with $\jobrank_{k} \ge q$, i.e.,
	\[
		\frac{Y^{\wsetf}_{k} (t)}{w_k}
		\geq \frac{Y^{\wsetf}_j (t)}{w_j}
		\geq \frac{b^{\jobrank+1}}{b-1}.
	\]
 Furthermore, no job~$k \in J' \cap J^{(i')}$ for $i' \in \mathcal I(e)$ is processed by \wsetf{} during an interval $(r^{(i'')}, e^{(i''+1)}(q-1)]$ for some $i'' > i'$, so that the time spent on $k$ during $\mathcal T \cap (0,t)$ is simply $Y_k^{\wsetf}(t) - Y_k^{\wsetf}(e^{(i')}(q-1))$.
	Therefore, $\wsetf$ has in total spent
	\[
		\sum_{i'=\underline i}^{i^*} \sum_{k \in J' \cap J^{(i')}} \bigl(Y_k^{\wsetf}(t) - Y_k^{\wsetf}(e^{(i')}(q-1))\bigr)
		\geq{} \sum_{i'=\underline i}^{i^*} \sum_{k \in J' \cap J^{(i')}} w_k \frac{b^{\jobrank+1}}{b-1}
		- w_k \frac{b^{\jobrank}}{b-1}
		=	\sum_{i'=\underline i}^{i^*} \sum_{k \in J' \cap J^{(i')}} w_k b^\jobrank
		,
	\]
	on jobs~$k \in J' \cap \bigcup_{i'=\underline i}^{i^*} J^{(i')}$ during $\mathcal T \cap (0,t)$, where we used the induction hypothesis for $q$. Thus, within $\mathcal T \cap (0,t)$, the strategy~$\alg$ has had enough time to probe all jobs from this set for $w_k b^{\jobrank}$. In particular it has probed the jobs from $J^{(\underline i)}$ for this time, so by definition of $e^{(\underline i)}(q)$, we have $t \ge e^{(\underline i)} (\jobrank) = e$, hence $t = e$.

 By continuity, all jobs~$j \in J'$ have $Y_j^{\wsetf}(e) \le w_j \frac{b^{q+1}}{b-1}$. As $\mathcal D_b$ manages to process all these jobs long enough during $\mathcal T$ so that at time~$e$ they have accumulated exactly this amount of elapsed time and \wsetf{} always processes jobs from $J'$ during $\mathcal T$, it must have assigned in total the same amount of time within $\mathcal T$ to jobs from $J'$, i.e., $\sum_{j \in J'} Y_j^{\wsetf}(e) = \frac{b^{q+1}}{b-1} \sum_{j \in J'} w_j$. Together with the above upper bound for each individual job~$j \in J'$, this implies that $Y_j^{\wsetf}(e) = w_j \frac{b^{q+1}}{b-1}$ for all $j \in J'$. Consequently, $C_j^{\wsetf}(I'') = e$ if $\jobrank_j = q$. Thus, the statement is true for all $i' \in \mathcal I(e)$, so in particular for $i$, contradicting our assumption.
\end{proof}
\cref{clm:YWSETF-YALG} implies that $C^{\alg}_j (I') \le e^{(i)}(\jobrank_j) = C^{\wsetf}_j (I'')$ for all jobs~$j \in J^{(i)}$ for all $i$, and thus, $\alg (I') \leq \wsetf (I'')$.
\end{proof}

\begin{proof}[Proof of \cref{thm:trivial_bound_release_dates}.]
By \cref{thm:WSETF} we know that \wsetf{} is $2$-competitive for \threefield{1}{r_j,\,\mathrm{pmtn}}{w_jC_j}. Moreover, using \cref{lem:release_dates_I_I',lem:release_dates_I'_I''} together with \cref{lem:b_factor_opt}, we obtain
\[
\alg(I) \leq \alg(I') \leq \wsetf(I'') \leq 2 \, \opt(I'') \leq \frac{2b}{b-1} \opt(I') \leq \frac{2b^4}{2b^2 - 3b + 1} \opt(I). \qedhere
\]
\end{proof}

\subsection{Parallel Machines} \label{subsec:parallel}
In this subsection we consider instances $(\vec{p}, \vec{1}, \vec{0}, m)$ of \threefield{\mathrm P}{}{C_j} with processing times~$\vec{p}$, unit weights, trivial release dates, and $m$ identical parallel machines. For this setting we have to extend the definition of kill-and-restart strategies and in particular of the $b$-scaling strategy. The set of all currently active probings are added to each state. The intervals chosen by an action need not be disjoint anymore, but instead it is required that any point in time~$t$ be covered by at most $m$ intervals. Finally, the transition function~$T_I$, mapping a state and an action to a new state, requires several modifications: It has to deal with the situation that multiple jobs are simultaneously completed, the elapsed probing times of the active probings given in the state have to be taken into account in the determination of the next completion time, and the active probings at the next decision time have to be determined. We do not go into more detail for the general definition.

For the $b$-scaling strategy~$\alg$, we perform the same sequence of probings, but assign them to the parallel machines in a list scheduling manner, i.e., every probing is scheduled on the first available machine. Moreover, at the moment when the number of remaining jobs becomes less than or equal to the number of machines, the jobs are not aborted anymore. The formalization of the actions chosen upon the completion of any job is straightforward. However, it may be not obvious how to formalize the action chosen at time~$0$ if $n$ is not divisible by $m$, because then the last planned probings of each round~$q$ need not be synchronized. The infinitesimal probing makes it impossible to define the probing intervals in an inductive way. This can be resolved by observing that the planned probings are scheduled in an SPT manner, and it is well-known that in an SPT schedule, every $m$th job goes to the same machine. Hence, it is possible to split the family of all probing operations into $m$ subsequences, each containing every $m$th element. More precisely, if the probings of the single-machine strategy are denoted by $\pi_{(q,j)} = (t_{(q,j)}, j, \tau_{(q,j)})$ for $(q,j) \in \Z\times [n]$, then consider the bijection~$\iota \colon \Z \times [n] \to \Z$ with $\iota(q,j) \coloneqq nq + j$, specifying the probing order, and the subfamilies $a_i = (\pi_{\iota^{-1}(k m + i)})_{k \in \Z}$ of probings to be assigned to each machine~$i \in [m]$. We define the probings of the initial action for $m$ machines as
 \[\pi_{(i,k)} = \bigg(\sum_{\ell=-\infty}^{k-1} \tau_{\iota^{-1}(\ell m+i)}, (\iota^{-1}(km+i))_2, \tau_{\iota^{-1}(km+i)}\biggr),\]
 where $\sum_{k=-\infty}^{k-1} \tau_{\iota^{-1}(km+i)} < \infty$ because of the absolute convergence of the sequence of all probing times smaller than any given one. The entire initial action of the $m$-machine $b$-scaling strategy is then the family~$(\pi'_{(i,k)})_{i \in [m], k \in \Z}$.

\TrivialBoundParallel

\begin{proof}
Let $I = (\vec{p}, \vec{1}, \vec{0}, m)$ be an arbitrary instance for \threefield{\mathrm P}{}{C_j}. We assume that $n > m$ as otherwise $\alg$ is optimal.
We define a new instance $I' = (\vec{p}', \vec{1}, \vec{0}, m)$ with $p_j':=b^{\jobrank_j}$ and $\jobrank_j = \lceil \log_b(p_j) \rceil$.
By definition of $\alg$,
the last job executed last on each machine is
run non-preemptively, i.e., it is probed for an infinite amount of time.
We denote by $\infprobingjobs$ this set of $m$ jobs
run non-preemptively in instance $I'$.

We first show that $\alg(I)\leq\alg(I')$.
	Denote by $\pi_0=(t_0,j_0,\tau_0)$ the first probing in $\alg(I)$ in which a job is completed and denote by
	$\pi_0,\pi_1,\ldots,\pi_N$ the sequence of all probings started at or after time $t_0$, ordered by starting time of the probing operation. Each probing $\pi_k=(t_k,j_k,\tau_k)$ is in one-to-one correspondence with
	a probing $\pi_k'=(t_k',j_k,\tau_k)$ in $\alg(I')$.
	Denote by $\lambda_{k1}\leq \lambda_{k2} \leq \ldots \leq \lambda_{km}$ the ordered loads of the $m$ machines in $\alg(I)$ at time $t_k$, where the load of a machine at time $\theta$ is the last end of a probing started before $\theta$ on that machine. In particular, $\lambda_{k1}=t_k$ because probing~$\pi_k$ starts at time $t_k$ on the least loaded machine. Similarly, denote by
	$\lambda_{k1}' \leq \ldots \leq \lambda_{km}'$ the $m$ ordered machine loads at time $t_k'$ in $\alg(I')$.

We show by induction on $k$ that $\lambda_{ki}\leq \lambda_{ki}'$ holds for all $i\in[m]$.
	The base case $k=0$ is trivial, since the probings are identical in $\alg(I)$ and $\alg(I')$ until time $t_0$, hence $\lambda_{0i}=\lambda_{0i}', \forall i\in[m]$.
	Then, let $k\in\{0,\ldots,N-1\}$, and denote by $\delta_k=\min(p_{j_k},\tau_k)$ the actual duration of the probing operation $\pi_k$, so the loads $\vec{\lambda}_{k+1}$ at time $t_{k+1}$ in $\alg(I)$ are a permutation of $(\lambda_{k1}+\delta_k,\lambda_{k2},\ldots,\lambda_{km})$. Similarly, the loads
	$\vec{\lambda}_{k+1}'$ at time $t_{k+1}'$ in $\alg(I')$ are a permutation of $(\lambda_{k1}'+\delta_k',\lambda_{k2}',\ldots,\lambda_{km}')$, with $\delta_k'=\min(p_{j_k}',\tau_k)\geq\delta_k$. By induction hypothesis we have
	$\lambda_{k1}+\delta\leq\lambda_{k1}'+\delta\leq\lambda_{k1}'+\delta'$ and
	$\lambda_{ki}\leq\lambda_{ki}'$ for all $i=2,\ldots,m$. This shows the existence of two permutations $\sigma$ and $\sigma'$ such that
	$\lambda_{k+1,\sigma(i)}\leq \lambda_{k+1,\sigma'(i)}'$ holds for all $i\in[m]$, which in turn implies that the ordered loads satisfy $\lambda_{k+1,i}\leq \lambda_{k+1,i}'$, for all $i\in[m]$.
	This concludes the induction.
	The end of probing~$\pi_k$ in $\alg(I)$ is
	$t_k+\delta_k=\lambda_{1k}+\delta_k\leq \lambda_{1k}'+\delta_k'=t_k'+\delta_k'$, where the latter corresponds to the end of probing~$\pi_k'$ in $\alg(I')$. Clearly, this implies $\alg(I)\leq\alg(I')$.

Let $\minrank \coloneqq \min_j \jobrank_j$ and $\maxrank \coloneqq \max_{j \notin \infprobingjobs} \jobrank_j$.
For $q<\maxrank$
denote by $\tALG_i (\jobrank)$ the
last end of a $b^\jobrank$-probing operation
on machine~$i$ in $\alg(I')$. For $q=\maxrank$ we need to define $T_i'(\maxrank)$ differently to take into account the $m$ jobs probed for an infinite time. Note that the jobs $j\in\infprobingjobs$
have $p_j'\geq b^{\maxrank}$ and for each $i\in[m]$ there is exactly one job $j(i)\in\infprobingjobs$ that is completed on machine~$i$.
If job $j(i)$ has already been probed for $b^{\maxrank}$ before being run non-preemptively, we define $T_i'(\maxrank)$ as the last end of a
$b^{\maxrank}$-probing operation on $i$; otherwise
we define $T_i'(\maxrank)$ as the first point in time where $j(i)$ has been processed for at least $b^{\maxrank}$. This ensures that every job with $p_j'\geq b^{\maxrank}$ is processed on some machine $i\in[m]$ during one interval of length $b^{\maxrank}$ contained in $[T_i'(\maxrank-1),T_i'(\maxrank)]$.

We define another instance $I'' = (\vec{p}'', \vec{1}, \vec{0}, m)$ with processing times
\[
	p_j'' \coloneqq
	\displaystyle \sum_{\jobrank = -\infty}^{\jobrank_j} b^{\jobrank}
	\text{ if } \jobrank_j\leq \maxrank,
	\quad \text{and} \quad
	p_j'' \coloneqq
	\displaystyle \sum_{\jobrank = -\infty}^{\maxrank} b^{\jobrank} + p_j'
	\text{ if } \jobrank_j\geq \maxrank+1.
\]
For all $j$ such that $\jobrank_j\leq \maxrank$ we have $p_j'' = \frac{b^{\jobrank_j+1}}{b-1} = \frac{b}{b-1} p_j'$ and otherwise we have $p_j'' = \frac{b^{\maxrank+1}}{b-1} + p_j' \leq  \bigl( \frac{1}{b-1} + 1 \bigr) p_j' = \frac{b}{b-1} p_j'$, i.e., $p''_j \leq \frac{b}{b-1} p'_j$ holds for all~$j$.

Consider the schedule~$S''$ of $\rr$ for the instance $I''$.
Denote by $\tSETF (\jobrank)$ the first point in time~$t$ with $Y_j^{\rr} (t) = \frac{b^{\jobrank + 1}}{b-1}$ for all jobs~$j$ with $p_j'' \geq \frac{b^{\jobrank + 1}}{b-1}$.
In particular, every job with $\jobrank_j \leq \maxrank$ is completed at $\tSETF (\jobrank_j)$ in the schedule $S''$.
We first prove by induction on $\jobrank$ that $\sum_{i=1}^m \tALG_i (\jobrank) = m \, \tSETF (\jobrank)$, for all $\jobrank=\minrank,\ldots,\maxrank$.
Since for all jobs~$j$ we have $\jobrank_j \geq \minrank$ no job is completed in $\alg$ and \rr{} until $\tALG_i (\minrank-1)$ for any machine~$i$ and $\tSETF (\minrank-1)$, respectively. Thus, we have
\[
	\sum_{i=1}^m \tALG_i (\minrank) = \sum_{j=1}^n \sum_{\jobrank=-\infty}^{\minrank} b^{\jobrank} = n \frac{b^{\minrank + 1}}{b-1} = m \frac{n}{m} \frac{b^{\minrank + 1}}{b-1} = m \tSETF (\minrank)
\]
Let $\jobrank > \minrank$.
At $\tSETF (\jobrank-1)$, there are, by definition, more than $m$ remaining jobs and each job~$j$ with $\jobrank_j \geq \jobrank$ has already been processed for $\frac{b^{\jobrank + 1}}{b-1}$ in the \rr{}-schedule for instance $I''$.
Therefore, each job receives a rate of $\frac{m}{n_{\geq \jobrank}}<1$ in this round, where $n_{\geq \jobrank}$ is the number of jobs with $p'' \geq \frac{b^{\jobrank + 1}}{b-1}$ and,
thus, $\tSETF (\jobrank)=\tSETF (\jobrank-1)+\frac{n_{\geq \jobrank}}{m}\cdot b^{\jobrank}$. On the other hand,
if $q<\maxrank$
$\alg$
probes $n_{\geq \jobrank}$ jobs for exactly $b^{\jobrank}$, so $\sum_{i=1}^m \tALG_i (\jobrank) = \sum_{i=1}^m \tALG_i(\jobrank-1) + n_{\geq \jobrank} b^{\jobrank}$ because there is no idling-time in the schedule.
For $q=\maxrank$, our definition of $T_i'(\maxrank)$ ensures that
$\sum_{i=1}^m \tALG_i (\maxrank) = \sum_{i=1}^m \tALG_i(\maxrank-1) + n_{\geq \maxrank} b^{\maxrank}$
holds as well.
Then, the claim follows  from the induction hypothesis.

Consider a job $j \notin \infprobingjobs$. We have
\[
	C^{\alg}_j (I')
	\leq \max_i \tALG_i (\jobrank_j)
	\leq \min_i \tALG_i (\jobrank_j) + b^{\jobrank_j}
	\leq \frac{1}{m}\sum_{i=1}^m \tALG_i (\jobrank_j) + b^{\jobrank_j}= \tSETF (\jobrank_j)+b^{\jobrank_j}=C_j^{\rr}(I'')+p_j'
	,
\] where the second inequality comes from the fact that the probing operations are done in a list-scheduling manner.
For a job $j \in \infprobingjobs$, we have $C^{\alg}_j (I') \leq \tALG_{i(j)} (\maxrank) +p_j'$,
where~$i(j)$ denotes the machine on which~$j$ is probed for an infinite amount of time.
This implies
\[
\sum_{j\in \infprobingjobs}C^{\alg}_j (I')
\leq\sum_{j\in \infprobingjobs} T'_{i(j)}({\maxrank}) + p_j' = m T''(\maxrank) + \sum_{j\in \infprobingjobs} p_j' =
\sum_{j\in\infprobingjobs} T''(\maxrank) + p_j'
=\sum_{j\in \infprobingjobs} C_j^{\rr}(I'').
\]

Since \rr{} is $2$-competitive for \threefield{\mathrm P}{\mathrm{pmtn}}{C_j} (see~\cite{MPT94}), we overall obtain
\begin{align*}
 \alg(I)\leq \alg(I') &\leq
 \rr(I'') + \sum_{j \notin \infprobingjobs} p_j' \leq 2 \opt(I'') + \opt(I')\\
 &\leq \Big( \frac{2b}{b-1}+1 \Big) \opt(I') \leq \Big(\frac{2b}{b-1}+1\Big)\cdot b \cdot \opt(I) = \frac{3b^2-b}{b-1}\cdot \opt(I)
\end{align*}
where we used \cref{lem:b_factor_opt} for the last two inequalities.
\end{proof}

\section{Conclusion}
We studied kill-and-restart as well as preemptive strategies for the problem of minimizing the sum of weighted completion times and gave a tight analysis of the deterministic and randomized version of the natural $b$-scaling strategy for \threefield{1}{}{w_jC_j} as well as of \wsetf{} for \threefield{1}{r_j,\,\mathrm{pmtn}}{w_jC_j}. 

We hope that this work might lay a basis for obtaining tight bounds on the performance of the $b$-scaling strategy for more general settings such as non-trivial release dates and parallel machines. 
Moreover, we think that the class of kill-and-restart strategies combines the best of two worlds. On the one hand, they allow for interruptions leading to small competitive ratios in contrast to non-preemptive algorithms, on the other hand, they reflect the non-preemptive property of only completing a job if it has been processed as a whole.

\paragraph{Acknowledgements.}
We thank Sungjin Im for helpful comments on an earlier version of this manuscript.

 \bibliography{forgetful}

 \appendix
\section{Technical Lemmas}

\begin{lemma} \label{lem:Cholesky}
 Let $L \in \N_{<0}$, $b \ge 1$, and $\vec B = (\frac 1 2 b^{\min(\ell, m)})_{0 \le \ell, m \le L}$ Then the Cholesky decomposition of $\vec B$ is $\vec B = \vec Y^\top \vec Y$ with $\vec Y = \bigl(\sqrt{\frac{b^\ell - b^{\ell-1} \cdot \mathds 1_{\ell \ge 1}}{2}} \cdot \mathds 1_{m \ge \ell}\bigr)_{0 \le \ell, m \le L}$.
\end{lemma}
\begin{proof}
 Obviously,
 \[
  \vec{Y}^{\vphantom{\top}}=\frac{1}{\sqrt{2}}
  \begin{pmatrix}
   1 & 1 &  \cdots      & 1\\
   & \sqrt{b-1} &  \cdots      & \sqrt{b-1} \\
   &  & \ddots & \vdots\\
   &  &  & \sqrt{b^L-b^{L-1}}
  \end{pmatrix}.
 \]
 is an upper triangular matrix with positive diagonal elements. An easy computation shows that $\vec Y^\top \vec Y = \vec B$.
\end{proof}
\begin{lemma} \label{supremum ratio}
 Let $L \in \N_{>0}$, $b \ge 1$, and $0 = a_0 < a_1 \le \cdots \le a_L$. Let $\vec A = (\frac 1 2 a_{|m-\ell|} \cdot b^{\min(\ell,m)})_{0 \le \ell,m \le L}$, and $\vec B = (\frac 1 2 b^{\min(\ell,m)})_{0 \le \ell,m \le L}$
 . Then
 \[\sup_{\vec x \in \R^{\{0,\dotsc,L\}}} \frac{\vec x^\top \vec A \vec x}{\vec x^\top \vec B \vec x} = \lambda_{\max}(\vec Z),\]
 where $\vec Z 
 = (Z_{\ell,m})_{0\le \ell,m \le L}$ with
 \[Z_{\ell m} = \begin{cases}
  0 & \text{ if } \ell=m=0, \\
  \frac{a_k- a_{k-1}}{\sqrt{b^k - b^{k-1}}} & \text{ if } \ell=0,m=k, \text{ or } \ell=k,m=0 \text{ for } k \in [L],\\
  -\frac{2a_1}{b-1} & \text{ if } 1 \le \ell = m \le L, \\
  \frac{(b+1)a_k-b a_{k-1} - a_{k+1}}{b^{k/2}(b-1)}  & \text{ if } \ell,m \geq 1, |m-\ell|=k, \text{ for } k \in [L-1].
 \end{cases}\]
\end{lemma}
\begin{proof}
 Let $\vec B = \vec Y^\top \vec Y$ be the Cholesky decomposition of $\vec B$. We can rewrite
 \begin{equation}\label{eq:forgetful:rho_L_lambda_max}
		\sup_{\vec{x}\in \mathbb{R}^{\{0,\dotsc,L\}}}\  \frac{\vec{x}^\top \! \vec{A}^{\phantom{\mathclap{-\!\top}}}\vec{x}}{\vec{x}^\top \! \vec{B}^{\phantom{\mathclap{-\!\top}}}\vec{x}}
		=
		\sup_{\vec{x}: \vec{x}^\top \! \vec{B}\vec{x} = 1}\  \vec{x}^\top \! \vec{A}^{\phantom{\mathclap{-\!\top}}}\vec{x}
		=
		\sup_{\vec{x}: \|\vec{Y} \vec{x} \| = 1}\  \vec{x}^\top \! \vec{A}^{\phantom{\mathclap{-\!\top}}}\vec{x}
		=
		\sup_{\vec{z}: \|\vec{z} \| = 1}\  \vec{z}^\top \vec{Y}^{-\!\top} \! \vec{A}^{\phantom{\mathclap{-\!\top}}}\vec{Y}^{-1} \vec{z}
		=
		\lambda_{\max}(\underbrace{ \vec{Y}^{-\!\top} \!\vec{A}^{\phantom{\mathclap{-\!\top}}} \vec{Y}^{-1}}_{\coloneqq\vec{Z}}).
	\end{equation}
 
 So it remains to compute the matrix~$\vec Z$. The matrix $\vec Y$ is given in \cref{lem:Cholesky}, and its inverse is given by
	\begin{equation}
	\vec{Y}^{-1} = \sqrt{2}\begin{pmatrix}
		1  & -(b-1)^{-\frac{1}{2}}  &    &   & \\
		& \phantom{-}(b-1)^{-\frac{1}{2}}  & -(b^2-b)^{-\frac{1}{2}}  &   & \\
		&  & \phantom{-}(b^2-b)^{-\frac{1}{2}}  & \ddots  & \\
		&    & &\ddots & -(b^L-b^{L-1})^{-\frac{1}{2}}\\
		&    &    &  & \phantom{-}(b^L-b^{L-1})^{-\frac{1}{2}}
	\end{pmatrix}. \label{eq:inverse Y}
\end{equation}
 By computing the product $\vec Z = \vec Y^{-\top} \vec A \vec Y^{-1}$, we see that $\vec Z$ has the form claimed in the \lcnamecref{supremum ratio}, i.e.,
 \[
  \vec Z = \begin{pmatrix} 0 & \frac{a_1}{\sqrt{b-1}} & \frac{a_2-a_1}{\sqrt{b^2-b}} & \cdots & \frac{a_L - a_{L-1}}{\sqrt{b^L - b^{L-1}}} \\
 \frac{a_1}{\sqrt{b-1}} & -\frac{2a_1}{b-1} & \frac{a_1(b+1) - a_2}{b^{1/2}(b-1)} & &\frac{a_{L-1}(b+1) - a_{L-2} b - a_L}{b^{(L-1)/2}(b-1)} \\
 \frac{a_2-a_1}{\sqrt{b^2-b}} & \frac{a_1(b+1) - a_2}{b^{1/2} (b-1)} & -\frac{2a_1}{b-1} & & \frac{a_{L-2}(b+1) - a_{L-3} b - a_{L-1}}{b^{L/2-1}(b-1)} \\
 \vdots & & & \ddots \\
 \frac{a_L-a_{L-1}}{\sqrt{b^L-b^{L-1}}} & \frac{a_{L-1}(b+1) - a_{L-2} b - a_L}{b^{(L-1)/2}(b-1)} & \frac{a_{L-2} (b + 1) - a_{L-3}b - a_{L-1}}{b^{L/2-1}(b-1)} & & -\frac{2a_1}{b-1}
 \end{pmatrix}.\qedhere
 \]
\end{proof}

\begin{lemma} \label{lem:ellstar}
 Let $b \ge 1$. For $L \in \N_{>0}$ define the matrix $\vec Y_L \coloneqq \bigl(\sqrt{\frac{b^\ell - b^{\ell-1} \cdot \mathds 1_{\ell \ge 1}}{2}} \cdot \mathds 1_{m \ge \ell}\bigr)_{0 \le \ell, m \le L}$ and the vectors $\vec z_L \coloneqq \bigl(\sqrt{\frac{2}{L+1}} \cdot \sin\bigl(\frac{\ell \pi}{L+1}\bigr)\bigr)_{0 \le \ell \le L}$ and $\vec x_L = (x_\ell^{(L)}) \coloneqq \vec Y^{-1} \vec z_L$. Then $|x_\ell^{(L)}| \le \frac{2(\sqrt b + 1)}{\sqrt{(L+1)b^\ell(b-1)}}$ for all $0 \le \ell \le L$, and there is an $\ell^* \in \N_{>0}$ such that for all $L \ge \ell \ge \ell^*$ we have that $x^{(L)}_\ell \ge 0$. Moreover,
 $\lim_{L \to \infty} \sum_{\ell=\ell^*}^L x_\ell^{(L)} = 0$.
\end{lemma}
\begin{proof}
 Set $\ell' \coloneqq \bigl\lceil\frac{2}{\sqrt b - 1}\bigr\rceil$.
 We bound
 \[\lim_{L \to \infty} \frac{\sin \bigl( \frac{(\ell'+1) \pi}{L+1} \bigr)}{\sin \bigl( \frac{\ell' \pi}{L+1} \bigr)} = \frac{\ell' + 1}{\ell'} < \frac{\frac{2}{\sqrt{b}-1}+2}{\frac{2}{\sqrt{b}-1}} = \sqrt b,\]
 so there is an $L^*$ such that for all $L \ge L^*$ the left hand side is bounded by $\sqrt b$. Set $\ell^* \coloneqq \max\{\ell', L^*\}$, and let $L \ge \ell^*$ be fixed. By computing the product of the matrix in \cref{eq:inverse Y} with $\vec z_L$ we obtain
 \[
    x_\ell^{(L)}
    = \frac{2}{\sqrt{(L+1) b^{\ell} (b-1)}} \, \bigg( \sqrt b \cdot \sin\bigg( \frac{\ell \pi}{L+1} \bigg) - \sin\bigg( \frac{(\ell + 1) \pi}{L+1} \bigg) \bigg)
        \qquad\text{for }\ell=0, \dotsc, L
    .
 \]
 This implies the bound on the absolute values.
 Since the function
 \[\ell \mapsto \frac{\sin \bigl( \frac{(\ell+1) \pi}{L+1} \bigr)}{\sin \bigl( \frac{\ell \pi}{L+1} \bigr)}\]
 is decreasing on $[1,L]$, for all $\ell \in \{\ell^*,\dotsc,L\}$ it holds that
 \[\frac{\sin \bigl( \frac{(\ell+1) \pi}{L+1} \bigr)}{\sin \bigl( \frac{\ell \pi}{L+1} \bigr)} \le \frac{\sin \bigl( \frac{(\ell^*+1) \pi}{L+1} \bigr)}{\sin \bigl( \frac{\ell^* \pi}{L+1} \bigr)} \le \sqrt b \quad\implies\quad x_\ell^{(L)} \ge 0.\]
 To prove the last claim, observe that $\sum_{\ell=\ell^*}^L x^{(L)}_\ell$ is a telescoping sum for every $L \ge \ell^*$, and hence
 \[
 \sum_{\ell=\ell^*}^L x^{(L)}_\ell = \frac{2}{\sqrt{(L+1)b^{\ell^*-1} (b-1)}} \cdot \sin\biggl(\frac{\ell^*\pi}{L+1}\biggr) \xrightarrow{L \to \infty} 0
 . \qedhere
 \]
\end{proof}

\begin{lemma} \label{lem:generalized_Toeplitz_pos_semidefinite}
 For $k, L \in \N$ denote by $\vec T_{k.L}$ the $L \times L$ Toeplitz matrix with $2$ on the main diagonal and $-1$ on the $k$th and the $(-k)$th superdiagonal. Let $k, L \in \N$ with $k \mid L$, let $\vec v \in \R^k$, and let $\alpha \ge \Vert \vec v \Vert^2$. Then the matrix
 \[\vec H_L(\alpha, \vec v) \coloneqq
 \begin{pmatrix} 
  \alpha & v_1 & \cdots &    v_k \\
     v_1 &   2 &        &        &     -1 \\
  \vdots &     & \ddots &        &        & \ddots \\
     v_k &     &        & \ddots &        &        & \ddots \\
         &  -1 &        &        & \ddots &        &        & \ddots \\
         &     & \ddots &        &        & \ddots &        &        & -1 \\
         &     &        & \ddots &        &        & \ddots \\
         &     &        &        & \ddots &        &        & \ddots \\
         &     &        &        &        &     -1 &        &        &  2
 \end{pmatrix} = \left(\begin{array}{c|c} \alpha & \vec v^\top\ \ \vec 0^\top \\\hline \begin{matrix} \vec v \\ \vec 0 \end{matrix} & \vec T_{k,L} \end{array}\right)\]
 is positive semidefinite.
\end{lemma}
\begin{proof}
 We use the Schur complement lemma to show that $\vec{H}_L(\alpha, \vec v)$ is positive semidefinite. 
Let $a \coloneqq L/k \in \N$, and observe that the matrix $\vec T_{k,L}$ is of the form $\vec T_{1,a} \otimes \vec I_k$,
where $\otimes$ denotes the Kronecker product. The matrix $\vec{T}_{1,a}$ is a symmetric tridiagonal Toeplitz matrix, which has minimum eigenvalue $\lambda_{\min}(\vec T_{1,a}) = 2 \bigl(1-\cos\bigl(\frac{\pi}{a+1}\bigr)\bigr) > 0$ (see \cite[Theorem 2.4]{BG05}), and is thus positive definite. The reader may verify that the inverse is given by $(\vec{T}_{1,a}^{-1})_{ij}=\frac{1}{a+1} \cdot \min(i,j) \cdot (a+1-\max(i,j))$. Since the eigenvalues of the Kronecker product are the products of the eigenvalues, we have $\lambda_{\min}(\vec T_{k,L}) = \lambda_{\min}(\vec T_{1,a}) > 0$, i.e., $\vec T_{k,L} \succ 0$, and, moreover, $\vec{T}_{k,L}^{-1}= \vec{T}_{1,a}^{-1} \otimes \vec{I}_k$. In particular, the upper left $k\times k$ block of $\vec{T}_{k,L}^{-1}$ is equal to
$(\vec{T}_{1,a})^{-1}_{11} \cdot \vec{I}_k=\frac{a}{a+1}\vec{I}_k$. So we can form the Schur complement
\[
 2\alpha -
 [\vec{v}^\top\ \  \vec{0}^\top]\cdot \vec{T}_{k,L}^{-1} \cdot \begin{bmatrix}\vec{v}\\\vec{0}\end{bmatrix}
 = 
 \alpha
 -\vec{v}^\top \Bigl(\frac{a}{a+1}\vec{I}_k\Bigr) \vec{v}
 \geq \alpha - \|\vec{v}\|^2 \geq 0.
\]
This concludes the proof that $\vec H_L(\alpha,\vec v)$ is positive semidefinite.
\end{proof}

\begin{lemma} \label{lem:double limit}
 Let $a_{mn} \in \R$ for all $m, n \in \N$. Assume that for every $n \in \N$ the sequence $(a_{mn})_{m \in \N}$ converges to some $a_n \in \R$ and that the sequence $(a_n)_{n \in \N}$ converges to some $a \in \R$. Then there is $s \colon \N \to \N$ so that $a_{s(n)n} \xrightarrow{n \to \infty} a$.
\end{lemma}
\begin{proof}
 For every $n \in \N$ there is an $s(n) \in \N$ such that $\vert a_{s(n)n} - a_n \vert < \frac 1 n$. Then the resulting sequence $(a_{s(n)n})_{n \in \N}$ converges to $a$ because for every $\varepsilon > 0$ there is an $N \in \N$ such that $\vert a_n - a \vert < \frac \varepsilon 2$ and $\frac 1 n < \frac \varepsilon 2$ for all $n \ge N$, and hence, $\vert a_{s(n)n} - a \vert \le \vert a_{s(n)n} - a_n \vert + \vert a_n - a \vert < \frac 1 n + \frac \varepsilon 2 < \varepsilon$.  
\end{proof}
\end{document}